\theoremstyle{plain}
\newtheorem{theorem}{Theorem}[section]
\newtheorem{lemma}[theorem]{Lemma}
\newtheorem{proposition}[theorem]{Proposition}
\theoremstyle{definition}
\newtheorem{remark}[theorem]{Remark}
\newtheorem*{example}{Example}
\let\hat\widehat
\let\tilde\widetilde
\newcommand{\E}{\mbox{$\mathbb{E}$}}
\newcommand{\R}{\mbox{$\mathbb{R}$}}
\newcommand{\bX}{\mathbf{X}}
\newcommand{\bR}{\mathbf{R}}
\newcommand{\bT}{\mathbf{T}}
\newcommand{\bY}{\mathbf{Y}}
\newcommand{\EIF}{\mathbb{EIF}}
\begin{document}

\begin{frontmatter}
\title{Masking criteria for selecting an imputation model}
\runtitle{Masking criteria}

\begin{aug}
\author[A]{\fnms{Yanjiao}~\snm{Yang} \ead[label=e1]{yjyang00@uw.edu}},
\author[A]{\fnms{Daniel}~\snm{Suen}\ead[label=e2]{dsuen@uw.edu}}
\and
\author[A]{\fnms{Yen-Chi}~\snm{Chen}\ead[label=e3]{yenchic@uw.edu}}

\address[A]{
Department of Statistics,
       University of Washington\printead[presep={ ,\ }]{e1,e2,e3}}

\end{aug}

\begin{abstract}
The masking-one-out (MOO) procedure, masking an observed entry and 
comparing it versus its imputed values, is a very common procedure for comparing imputation models. 
We study the optimum of this procedure and generalize it to 
a missing data assumption and establish the corresponding semi-parametric efficiency theory. 
However, MOO is a measure of prediction accuracy, which is not ideal for evaluating an imputation model. To address this issue, 
we introduce three modified MOO criteria, based on rank transformation, energy distance, and likelihood principle, that allow us
to select an imputation model that properly account for the stochastic nature of data. 
The likelihood approach further enables  an elegant framework of learning an imputation model from the data
and we derive its statistical and computational learning theories as well as consistency of BIC model selection.
We also show how MOO is related to the missing-at-random assumption.
Finally, we introduce the prediction-imputation diagram, a two-dimensional diagram visually comparing
both the prediction and imputation utilities for various imputation models.
\end{abstract}

\begin{keyword}[class=MSC]
\kwd[Primary ]{62D10}
\kwd[; secondary ]{62F12}
\kwd{62G09}
\end{keyword}

\begin{keyword}
 \kwd{imputation} 
 \kwd{missing data}
 \kwd{masking}
 \kwd{missing-not-at-random}
 \kwd{model selection}
\end{keyword}

\end{frontmatter}

\section{Introduction}

Missing data is a common problem across various scientific disciplines, including medical research \citep{bell2014handling}, social sciences \citep{molenberghs2014handbook}, and astronomy \citep{ivezic2020statistics}. To handle missing entries in the dataset, imputation \citep{little2019statistical, kim2021statistical, grzesiak2025needdozensmethodsreal} is a popular approach that is widely accepted in practice. An imputation model generates plausible values for each missing entry, transforming an incomplete dataset into a complete one. The critical importance of this task has led to the development of a wide array of imputation models, grounded in various modeling assumptions. These range from traditional approaches like hot-deck imputation \citep{little2019statistical} to more sophisticated methods such as Multiple Imputation via Chained Equations (MICE; \citealt{van2011mice}), random forest imputation \citep{stekhoven2012missforest}, techniques based on Markov assumptions on graphs \citep{yang2025markovmissinggraphgraphical}, and even generative adversarial networks \citep{yoon2018gain}.

Despite the proliferation of imputation models, the selection of an optimal imputation model for a given dataset remains a significant challenge, largely due to the unsupervised nature of the problem. Among the many proposed strategies for evaluating and selecting imputation models, masking has emerged as a particularly popular procedure \citep{gelman1998not,troyanskaya2001missing,honaker2011amelia, leek2012sva,qian2024unveiling,wang2024deep}. Masking involves intentionally creating missing values in observed entries 
to create a setting where imputation accuracy can be measured against a known ground truth.
This approach has demonstrated remarkable success and power in other domains, notably in language modeling \citep{yang2019xlnet,devlin2019bert} and image recognition \citep{vincent2010stacked,xie2022simmim, hondru2025masked} and prediction-powered inference \citep{angelopoulos2023ppi,wang2020methods}.

However, despite its practical appeal, 
there is a lack of theoretical understanding of how and why the masking approach works. 
In particular, it is unclear what is the optimal imputation model
under the masking criterion. 
Moreover, a notorious problem with masking is that
it tends to select an imputation model that ignores the stochastic nature of the data,
often resulting in selecting a model that just imputes the conditional mean.

In this paper, we provide a theoretical analysis of the masking procedure
and derive the corresponding optimal imputation model and show how it is associated with a missing-not-at-random assumption. 
To address the issue of ignoring the stochasticity of the data, we propose three modifications:
masking with rank transformation, energy distance, and a likelihood approach. 
The rank transformation and energy distance are easy to implement and
the likelihood approach offers a tractable framework 
for learning an imputation model.

\emph{Main results.} Our main results are as follows.
\begin{itemize}
\item {\bf Characterization of the masking optimum.} We provide a precise characterization of the minimizer under the conventional masking procedures (Theorems~\ref{thm::moo} and \ref{thm::mko}) and associate the masking optimum to a missing-not-at-random assumption (Proposition \ref{prop::prob}).
\item {\bf Semi-parametric efficiency.} We derive the underlying efficient influence function 
 (Theorem~\ref{thm::EIF}), which yields a multiply-robust estimator (Theorem~\ref{thm::MR}). 
\item {\bf Distributional imputation criteria.} 
We propose two criteria based on rank transformations and energy distance (Section \ref{sec::QT})
that lead to optimal distributional imputations (Theorems~\ref{thm::MOORT} and \ref{thm::MOOEN}). 
\item {\bf Likelihood-based imputation learning.} 
We introduce a masking likelihood framework (Section \ref{sec::learning}) for training parametric imputation models. We establish its theoretical guarantees, including asymptotic normality of the resulting estimators (Theorem~\ref{thm::MLE}), convergence of gradient ascent (Theorem~\ref{thm::GD}), parameter recovery under MCAR (Theorem \ref{thm::mcar}), and model selection consistency (Theorem \ref{thm::BIC}).
\item {\bf Connection to MAR.} We establish a novel link between masking and the missing-at-random (MAR) assumption. We show that under monotone missingness, the masking procedure is related to the available-case missing value assumption (Proposition \ref{prop::acmv}).
\item {\bf Prediction-Imputation diagram.}
We introduce the prediction-imputation diagram (PI diagram) in Section \ref{sec::empirical}
as a 2D visualization for comparing multiple imputation models' performance in terms of prediction and imputation. 

\end{itemize}

\subsection{A probability framework for imputation}

We first introduce  probability notations for the missing data problem.
Let $X\in\R^d$ be the 
vector of study variables of interest, such that any of its components may be missing.
Let $R\in\{0,1\}^d$ be the response vector
where $R_j=1$ if we observe $X_j$ and $R_j=0$ otherwise.
We use the notation $X_R = (X_j: R_j=1)$
to denote the observed variables under response pattern $R$. 
Let $\overline{R} = 1_d-R$ be the binary vector representing
the missing variables under $R$ and $1_d = (1,1,\cdots,1)\in \{0,1\}^d$  is the vector of $1$'s that corresponds
to the complete cases.  $R_{-j} = (R_\ell: \ell \neq j)$ is the vector without $j$-th element.
Table \ref{tab::non} provides an example of missing data with the corresponding response vector $R$.

\begin{table}
\center
\begin{tabular}{lllll}
\toprule
ID  & $X_1$ & $X_2$ & $X_3$ & $R$ \\
\midrule
001 & 13    & 0     & 2.2   & 111 \\
002 & 7     & \texttt{NA}    & 2.7   & 101 \\
003 & \texttt{NA}    & \texttt{NA}    & 2.5   & 001 \\
004 & 2     & 1     & 1.3   & 111 \\
005 & 8     & 0     & \texttt{NA}    & 110 \\
006 & \texttt{NA}    & 0     & \texttt{NA}    & 010 \\
007 & 15    & 1     & 2.2   & 111 \\
008 & \texttt{NA}    & 1     & 1.7   & 011\\
\bottomrule
\end{tabular}
\caption{An example of non-monotone missing data with three study variables $X_1,X_2,X_3$
and the corresponding response vector $R$.
For ID=001, $X_R = (X_1,X_2,X_3) =  (13,0,2.2)$ whereas for ID=002, $X_R = (X_1, X_3) = (7,2.7)$.
The extrapolation density of ID=002 is $p(x_2|X_1=7,X_3=2.7, R=101)$.
}
\label{tab::non}
\end{table}

Under this setup, the PDF/PMF $p(x_r,r) = p(x_r|R=r) P(R=r)$
is the observed-data distribution that describes the distribution of the observed entries.
By the decomposition
$$
p(x,r) = p(x_r, x_{\bar r}, r) = p(x_{\bar r}|x_r, r) p(x_r,r),
$$
the joint distribution of $(X,R)$ can be expressed as 
$p(x_{\bar r}|x_r, r) $ multiplied by the observed-data distribution $p(x_r,r)$.
The distribution 
$
p(x_{\bar r}|x_r,r)
$
is the distribution of the \emph{unobserved variables} under pattern $R=r$ and observed entries $x_r$.
$p(x_{\bar r}|x_r,r)$ is also known as the \emph{extrapolation distribution/density} \citep{little1993pattern}. 


\emph{Out-of-sample (OOS) imputation.}
An imputation model is capable of performing out-of-sample (OOS) imputation if, after being trained on one dataset, 
it can impute missing entries for a new observation (e.g., from another dataset) without retraining.
This is a desirable property particularly in the modern era of big data
because we may train an imputation model on a massive dataset with powerful computers and then
use it to impute on another dataset.
An imputation model with the OOS property can be formalized as a model to the true extrapolation density $p(x_{\bar r}|x_r,r)$. 
We therefore define an imputation model in this paper as a model
$q(x_{\bar r}|x_r,r)$. 
Throughout the entire paper, we assume that the imputation models are given and non-random except for 
the likelihood method in Section \ref{sec::learning}.
This mathematical form of imputation model is particularly useful
because it enables us to analyze statistical properties of an imputation procedure.
Many imputation methods have the OOS imputation property such as hot-deck imputation, MAR with parametric models \citep{little2019statistical}, 
pattern graphs \citep{chen2022pattern},  Markov missing graph,
and GAIN \citep{yoon2018gain}.
However, 
some popular methods such as MICE \citep{van2011mice} 
cannot perform OOS imputation without retraining the model.
In this paper, we only consider imputation models with OOS property.

\subsection{Outline}

In Section \ref{sec::MOO}, we formally introduce the mask-one-out (MOO) procedure and investigate its theoretical properties. These include the characterization of its optima, the probability model implied by MOO, and the associated semi-parametric efficiency theory.
In Section \ref{sec::QT}, we demonstrate the limitations of the MOO procedure and propose two remedies: the rank transformation and the energy distance. We show that these modified criteria select imputation models that properly account for the stochastic nature of the data.
In Section \ref{sec::learning}, we present a statistical learning framework based on a masking log-likelihood function. This framework enables us to learn an imputation model directly from the data, and we study the underlying theoretical properties.
In Section \ref{sec::mono}, we analyze the monotone missing data setting and draw meaningful connections between the masking procedure and the MAR assumption.
Finally, in Section \ref{sec::empirical}, we introduce the prediction-imputation (PI) diagram as a two-dimensional visualization tool. We use this diagram to summarize MOO risks in a simulation study and a real-data analysis.
Proof of theoretical results are deferred to Appendix \ref{sec::proofs}.

\section{Mask-one-out and its theoretical properties}	\label{sec::MOO}


The mask-one-out (MOO) is a procedure of intentionally masking one observed variable
at a time,  imputing the masked value, and  comparing the imputed value to the observed value. 
It shows some similarity to the conventional leave-one-out cross-validation method, so we call it mask-one-out. 
This approach has appeared in various works \citep{gelman1998not,troyanskaya2001missing,honaker2011amelia, leek2012sva,qian2024unveiling,wang2024deep}
but there is very limited theoretical understanding about it.

Before formally describing the MOO procedure,
we first introduce some notations.
For $j\in\{1,2,\cdots, d\}$ and $r \in \{0,1\}^d$,
we denote the binary vector
$r\ominus e_j  \in\{0,1\}^d$
to be the same as $r$ except that the $j$-th element is set to be $0$,
where $e_j $ is the 
the $j$-th standard basis vector (a vector of zeros with a one at the $j$-th position).
Similarly, $r\oplus e_j$ is the same as $r$ except that the $j$-th element is set to be $1$.
Also, we use the notation
$j \in r$ to represent $j \in \{k: r_k = 1\}$.

Here is a formal description of the MOO procedure.
Let  $(X_r = x_r, R=r)$ be an observation. 
For each variable $j \in r$, i.e., this variable is observed in $(x_r,r)$, 
we generate $\hat x_j \sim q(x_j|x_{r\ominus e_j}, r\ominus e_j)$, where $q$ is an imputation model that we want to evaluate its performance.
Then we compute the loss of this imputation
$
L(x_j, \hat x_j);
$
a very common example of such loss is the square loss $L(x_j, \hat x_j) = (x_j - \hat x_j)^2$.
By doing so for every $j\in r$, we obtain a loss for the imputation model $q$ 
for this observation 
\begin{equation}
L(q| x_r,r) =  \sum_{j\in r} L(x_j, \hat x_j).
\label{eq::L1}
\end{equation}
When we have many observations, we compute the total loss of all observations.

To avoid conflicts of notations, we use the boldface variables
$$
(\bX_{1, \bR_1}, \bR_1),\cdots, (\bX_{n,\bR_n}, \bR_n)
$$
to denote our observed data. Namely, each $(\bX_{i,\bR_i}, \bR_i )$
is an independent and identically distributed (IID) copy of $(X_R, R)$.
We write $\bX_{ij}$ to refer to the $j$-th variable in the $i$-th observation.
When applied to the entire dataset, 
the MOO procedure  sums over the loss evaluated at every observation, leading to an overall risk
$$
\hat{\mathcal{E}}_n(q ) = \frac{1}{n}\sum_{i=1}^n L(q| \bX_{i,\bR_i},\bR_i)
$$
for the imputation model $q$. 
To reduce the Monte Carlo errors due to imputing each $\hat \bX_{i,\bR_i}$, we may repeat the computation of $\hat{\mathcal{E}}_n(q )$ multiple times
and take the average of them.
The MOO procedure is summarized in Algorithm \ref{alg::moo}.
To avoid confusion with other MOO procedures introduced later, 
we call the procedure in Algorithm \ref{alg::moo} the \emph{naive MOO}.
Note that in practice, we often standardize the observed entries first so that 
the loss values are of the same order.

When we have multiple imputation models $q_1,\cdots, q_K$,
we  apply this procedure to each of them, which leads to 
$$
\hat{\mathcal{E}}_n(q_1),\cdots, \hat{\mathcal{E}}_n(q_K).
$$
These values are used as a criterion for selecting the optimal imputation model;
generally, we choose the model that has the smallest loss.

\begin{algorithm}
\caption{(Naive) Mask-one-out (MOO) procedure}
\label{alg::moo}
\begin{algorithmic}
\State  \textbf{Input:} Imputation model $q$.
\begin{enumerate}
\item For $i=1,\cdots, n$, we do the following:
	\begin{enumerate}
	\item For each $j\in \bR_i$: 
	\begin{enumerate}
	\item We mask the observed entry $\bX_{ij}$ and update the response pattern to be $\bR_i \ominus e_j$ (pretending $\bX_{ij}$ is missing).  
	\item We generate $ \hat \bX_{ij}$ by sampling from the conditional distribution 
	$$
	q(x_{j}|\bX_{i, \bR_i \ominus e_j}, \bR_i \ominus e_j)\equiv q(x_{j}|X_{\bR_i \ominus e_j} = \bX_{i, \bR_i \ominus e_j}, R= \bR_i \ominus e_j).
	$$
	Namely, we treat the data as if $\bX_{ij}$
	is a missing value and attempt to impute it.
	\item Compute the  loss $L( \bX_{ij}, \hat\bX_{ij}).$
	\end{enumerate} 
	\item Compute the total loss for this individual: $L(q| \bX_{i, \bR_i},\bR_i) = \sum_{j\in \bR_i}L( \bX_{ij}, \hat\bX_{ij})$.
	\end{enumerate}
\item Compute the risk of the imputation model $q$ as 
$$
\hat{\mathcal{E}}_n(q ) = \frac{1}{n}\sum_{i=1}^n L(q| \bX_{i,\bR_i},\bR_i).
$$
\item (Optional) Repeat the above procedure multiple times and take the average of $\hat{\mathcal{E}}_n(q )$
to reduce the Monte Carlo errors.
\end{enumerate}
\end{algorithmic}
\end{algorithm}

The MOO idea is based on the feature that
the observation should remain somewhat similar even if we mask one entry. 
So imputing the masked value and comparing with the actual value
may be a reasonable metric for evaluating the effectiveness of an imputation model.
Sometimes we may be interested in the imputation performance on
a specific variable. 
The MOO procedure can be modified to compute the loss for a particular variable; see Appendix \ref{sec::moo::v}
for more details.
Moreover, we may mask multiple variables at the same time; we provide a detailed discussion in Appendix \ref{sec::MKO}. 

\begin{example}
Consider the data in Table \ref{tab::non}
and let $q$ be an imputation model.
When we apply the MOO to ID=001, 
we will perform imputation on $\bX_{1,\bR_1} =(\bX_{11}, \bX_{12}, \bX_{13}) = (13,0,2.2)$ via
\begin{align*}
\hat \bX_{11} &\sim q(x_1|X_2= 0,X_3 = 2.2, R= 011),\\
\hat \bX_{12} &\sim q(x_2|X_1= 13,X_3 = 2.2, R= 101),\\
\hat \bX_{13} &\sim q(x_3|X_1=13, X_2= 0,R=110)
\end{align*}
and then compute the losses
$$
L(13, \hat \bX_{11}) + L(0, \hat \bX_{12}) + L(2.2, \hat \bX_{13}).
$$
For the individual ID=002, $\bX_{2, \bR_2} = (\bX_{21}, \bX_{23}) = (7, 2.7)$, 
we mask $\bX_{21}, \bX_{23}$ separately and impute them  via 
\begin{align*}
\hat \bX_{21} &\sim q(x_1|X_3 = 2.7, R= 001),\\
\hat \bX_{23} &\sim q(x_3|X_1=7,R=100)
\end{align*}
and compute the risk 
$$
L(7, \hat \bX_{21}) + L (2.7, \hat \bX_{23}).
$$
Namely, when imputing $\bX_{21}$, we mask $\bX_{21}$ and pretend the observation is $(\texttt{NA}, \texttt{NA}, 2.7)$. 
\end{example}

\subsection{Optimal imputation value}

The (naive) MOO approach in Algorithm \ref{alg::moo} can be viewed as a risk minimization procedure. 
In particular, the output of Algorithm \ref{alg::moo} is the quantity
$$
\hat{\mathcal{E}}_n(q ) = \frac{1}{n}\sum_{i=1}^n L(q| \bX_{i,\bR_i},\bR_i) = \frac{1}{n}\sum_{i=1}^n \sum_{j\in R_i} L(\bX_{ij}, \hat \bX_{ij}),
$$
which can be interpreted as an empirical risk. 
The corresponding test risk (also called population risk or true risk) is the following population quantity 
\begin{equation}
\begin{aligned}
\mathcal{E}(q) = \E\{\bar L(q| \bX_{i,\bR_i},\bR_i)\} &= \sum_{r:r\neq 1_d}\int \bar L(q|x_{r} , r) p(x_r,r)dx_r,\\
\bar L(q|x_{r} , r) & = \sum_{j \in  r} \int L(x_j, x_j') q(x'_j|x_{r\ominus e_j},R=r\ominus e_j)dx'_j,
\end{aligned}
\label{eq::MOO::TR}
\end{equation}
where $p(x_r,r)$ is the observed-data distribution. 
The quantity $\bar L(q|x_{r} , r)$ is the expectation of $L(q|x_{r},r)$ defined in equation \eqref{eq::L1}
that has no Monte Carlo errors.
Clearly, $\hat{\mathcal{E}}_n(q )$ is the empirical (and one-sample Monte Carlo approximation) version of $\mathcal{E}(q)$.

Since $\mathcal{E}(q)$ is  the population risk corresponding to the naive MOO procedure, its minimizer provides key insights into the properties of MOO. The following theorem characterizes the minimizer of the population risk $\mathcal{E}(q)$.


\begin{theorem}[Optimal imputation value of MOO]
\label{thm::moo}
For an observation $(x_r,r)$, let $j \in \bar r$ be the index of an unobserved variable. 
For the missing variable $x_j$, 
\begin{equation}
\hat x^*_j = {\sf argmin}_{\theta} \int L( x_j,\theta ) p(x_j|x_{r}, r\oplus e_j)dx_j
\label{eq::opt::value}
\end{equation}
is the optimal imputation value under the population risk $\mathcal{E}(q)$.
Namely, for the observation $(x_r,r)$, the optimal imputation model will impute
the missing variable $x_j$ with $\hat x^*_j$ for every $j\in \bar r$.
\end{theorem}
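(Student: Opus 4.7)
The plan is to rewrite the population risk $\mathcal{E}(q)$ as a sum of pointwise minimization problems, one for each pair $(j, r')$ where $j$ is the unobserved index of interest and $r'$ is the response pattern containing $x_j$ as missing. The key idea is a change of summation index: in the definition of $\mathcal{E}(q)$, the outer sum is over observed patterns $r$ and the inner sum is over $j \in r$, but the imputation model is actually evaluated at the pattern $r \ominus e_j$ (which has $j$ unobserved). Reindexing via $r' := r \ominus e_j$ so that $r = r' \oplus e_j$ groups together all loss contributions that depend on a given conditional $q(\cdot \mid x_{r'}, r')$.

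First, I would expand $\mathcal{E}(q)$ using the definition of $\bar L(q \mid x_r, r)$, giving a triple sum/integral
\[
\mathcal{E}(q) = \sum_{r \neq 1_d}\sum_{j \in r} \int \!\!\int L(x_j, x_j') \, q(x_j' \mid x_{r\ominus e_j}, r\ominus e_j)\, dx_j' \; p(x_r,r)\, dx_r.
\]
Next, I would swap the roles of $r$ and $r' := r \ominus e_j$, so the sum runs over patterns $r'$ and indices $j \in \bar{r'}$, and factor the observed-data density at $r = r' \oplus e_j$ via
\[
p(x_{r' \oplus e_j}, r' \oplus e_j) = p(x_j \mid x_{r'}, r'\oplus e_j)\, p(x_{r'}, r'\oplus e_j).
\]
After applying Fubini, the risk becomes
\[
\mathcal{E}(q) = \sum_{r'}\sum_{j \in \bar{r'}} \int p(x_{r'}, r'\oplus e_j)\!\int f_{j,r'}(x_j') \, q(x_j' \mid x_{r'}, r')\, dx_j' \; dx_{r'},
\]
where $f_{j,r'}(\theta) := \int L(x_j, \theta)\, p(x_j \mid x_{r'}, r'\oplus e_j)\, dx_j$ is precisely the functional whose minimizer defines $\hat x_j^*$ in \eqref{eq::opt::value}.

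At this point the argument is pointwise: for each fixed $(x_{r'}, r')$ and each $j \in \bar{r'}$, the inner integral $\int f_{j,r'}(x_j')\, q(x_j' \mid x_{r'}, r')\, dx_j'$ is an average of $f_{j,r'}$ under the probability distribution $q(\cdot \mid x_{r'}, r')$, and hence is bounded below by $\min_\theta f_{j,r'}(\theta) = f_{j,r'}(\hat x_j^*)$. This lower bound is attained by taking $q(\cdot \mid x_{r'}, r')$ to be a point mass at $\hat x_j^*$ (or, more carefully, by any probability measure supported on the argmin set). Since the different $(j, r')$ contributions involve disjoint conditionals $q(\cdot \mid x_{r'}, r')$, they can be minimized independently, giving the stated optimum.

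The main obstacle is mostly bookkeeping: making the reindexing $r \leftrightarrow r' \oplus e_j$ rigorous (in particular, verifying that as $(r, j)$ ranges over $\{r \neq 1_d, j \in r\}$, the pair $(r', j)$ ranges bijectively over $\{r', j \in \bar{r'}\}$, excluding the complete case), and confirming measurability so that the pointwise minimizer $\hat x_j^*$ indeed defines a valid (degenerate) conditional distribution. Both are straightforward but deserve an explicit line. A subtlety worth noting is that the optimum is achieved by a degenerate imputation, foreshadowing the paper's later critique that MOO ignores stochasticity and motivating the modifications in Section~\ref{sec::QT}.
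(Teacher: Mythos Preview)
Your proposal is correct and matches the paper's proof essentially step for step: both expand $\mathcal{E}(q)$, reindex via $r' = r\ominus e_j$ (the paper writes $s=r\ominus e_j$) to group contributions by the conditional $q(\cdot\mid x_{r'},r')$, factor $p(x_{r'\oplus e_j},r'\oplus e_j)=p(x_j\mid x_{r'},r'\oplus e_j)\,p(x_{r'},r'\oplus e_j)$, and then observe that the inner integral is minimized by a point mass at the argmin of the loss against $p(x_j\mid x_{r'},r'\oplus e_j)$. Your explicit remark on the bijection in the reindexing is handled later in the paper as Lemma~\ref{lem::repar}.
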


Theorem \ref{thm::moo} implies that
if we use the square loss $L(a,b) = (a-b)^2$,
$$
\hat x^*_j = \E(X_j|X_r = x_r, R = r\oplus e_j)
$$
will be the mean value of the conditional distribution $p(x_j|x_r, r\oplus e_j)$.
So the optimal $q$ will be a point mass at $\hat x^*_j$.
If we use the absolute loss $L(a,b) = |a-b|$,
$\hat x^*_j $ will be the median of $p(x_j|x_r, r\oplus e_j)$.
Also, Theorem \ref{thm::moo} shows that the optimal imputation model under MOO
is a deterministic imputation that ignores the data's stochastic nature.
Thus, the  MOO criterion in Algorithm \ref{alg::moo} is like
a measure of prediction performance, not  a measure of imputation performance, so it is not ideal for comparing
imputation models \citep{van2018flexible,naf2023imputation, grzesiak2025needdozensmethodsreal}.

\begin{example}
Suppose we have three variables $X = (X_1,X_2,X_3)^T$
and we have an observation $X = (\texttt{NA},  \texttt{NA}, z_3)$ with $R = 001$. 
Assume that we use the square loss $L(a,b) = (a-b)^2$.
Then the optimal imputation model will impute $x_1$ and $x_2$ with
$\E(X_1|X_3 = z_3, R = 101)$ and $\E(X_2|X_3 = z_3, R=011)$, respectively. 
Here is a high-level idea on why this is the optimal imputation value. 
The imputation model on $X = (\texttt{NA},  \texttt{NA}, z_3)$ can be written as $q(x_1,x_2|X_3=z_3, R=001)$. 
Under the MOO procedure, this imputation model will be used in two scenarios. 
The first scenario is the case where $R=101$. In this case, when we mask $X_1$,
the response pattern becomes $R=001$ and we will use the marginal $q(x_1|X_3=z_3, R=001)$
to impute $X_1$ and attempt to minimize the square loss. 
The unmasked value follows from the distribution of $p(x_1|x_3,R=101)$,
so under the square loss, the minimizer is the conditional mean of $p(x_1|x_3,R=101)$. 
The second scenario is $R=011$ and when we mask $X_2$, we obtain the response pattern $R=001$.
So by the same argument, the minimization procedure leads to the conditional mean of $p(x_2|x_3,R=011)$. 
\end{example}

\subsection{Optimal imputation model}

Theorem \ref{thm::moo} shows that 
the optimal imputation value depends on the loss function we use. 
However, the
distribution $p(x_j|x_r, r\oplus e_j)$
appears in equation \eqref{eq::opt::value}
is independent of the loss function. It can therefore be defined as the optimal loss-agnostic target distribution.
Therefore, we call the marginal imputation model (for variable $x_j$ such that $r_j = 0$)
\begin{equation}
q(x_j|x_r, r) = p(x_j|x_r, r\oplus e_j), 
\label{eq::MOO::marginal}
\end{equation}
the \emph{optimal MOO (marginal) imputation model}.
Note that equation \eqref{eq::MOO::marginal} only describes
an imputation model marginally
for  each individual variable. 
It does not specify any dependency among those variables to be imputed.

The imputation model  in equation \eqref{eq::MOO::marginal}
has another nice interpretation. For pattern $R=r$ and $x_j$ is a missing variable under $R=r$, the
pattern $R=r\oplus e_j$ is the response pattern most similar to $r$ with variable $x_j$ being observed. 
Therefore, it is reasonable to expect that the conditional distribution $p(x_j|x_r, r\oplus e_j)$ would be
similar to the imputation distribution $p(x_j|x_r, r)$.
The imputation model in equation \eqref{eq::MOO::marginal} just equates these two distributions.

With equation \eqref{eq::MOO::marginal},
we define the collection of \emph{optimal imputation models for the MOO procedure} as 
\begin{equation}
\mathcal{Q}^*_{MOO} =\{q: q(x_j|x_r, r) = p(x_j|x_r, r\oplus e_j), \quad \forall j \in \bar r,\quad r \in\{0,1\}^d\}.
\label{eq::MOO::opt::set}
\end{equation}
By construction, any imputation model
in $\mathcal{Q}^*_{MOO}$ 
satisfies equation \eqref{eq::MOO::marginal}
and can be used to construct an optimal imputation value via equation \eqref{eq::opt::value}
when the loss function is specified.
Later we will discuss three methods for finding an imputation model in $\mathcal{Q}^*_{MOO}$ (Sections \ref{sec::QT} and \ref{sec::learning}).

The collection $\mathcal{Q}^*_{MOO}$ is not 
an empty set. 
Here is a useful example inside $\mathcal{Q}^*_{MOO}$:
\begin{equation}
q_{PM}(x_{\bar r}|x_r,r) = \prod_{j\in \bar r} p(x_j|x_r, r\oplus e_j).
\label{eq::MOOPM}
\end{equation}
Namely, the imputation model $q_{PM}$
imputes every missing entry independently from each other
by the marginal $p(x_j|x_r, r\oplus e_j)$. 
We call this imputation model mask-one-out product model (MOOPM).
The product model in equation \eqref{eq::MOOPM} will be 
particularly useful when learning the imputation model
from the data; see Section \ref{sec::sepprod} for more details.


\subsection{Probability statement for the optimum}


The collection $\mathcal{Q}^*_{MOO}$ in equation \eqref{eq::MOO::opt::set} defines the imputation models that are optimal and loss-agnostic under the MOO. This set identifies $p(x_j|x_r, r\oplus e_j)$ as the target (marginal) imputation density. This target, however, is only equal to the true, unobserved extrapolation density $p(x_j|x_r, r)$ if the data-generating process $p(X,R)$ satisfies a specific assumption. The following proposition precisely characterizes this implicit assumption as a formal conditional independence statement.

\begin{proposition}
\label{prop::prob}
Any optimal imputation model in $\mathcal{Q}^*_{MOO}$ must satisfy the following conditional independence:
for  every $R$ and $j \in {\bar R }$,
\begin{equation}
X_j \perp R_j | X_{R}, R_{-j},
\label{eq::CI}
\end{equation}
where $R_{-j} = (R_\ell: \ell \neq j)$.
Also, the above conditional independence statement is equivalent to the following:
for  every $R$ and $j \in { R }$,
\begin{equation}
X_j \perp R_j | X_{R\ominus e_j}, R_{-j}.
\label{eq::CI2}
\end{equation}

\end{proposition}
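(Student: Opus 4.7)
The plan is to translate the defining identity of $\mathcal{Q}^*_{MOO}$ in \eqref{eq::MOO::opt::set} directly into the conditional independence \eqref{eq::CI}, and to note that \eqref{eq::CI2} is just \eqref{eq::CI} re-indexed. Throughout, I would read $q(x_j \mid x_r, r)$ for $q \in \mathcal{Q}^*_{MOO}$ as (or, more precisely, as being constrained to agree with) the true extrapolation density $p(x_j \mid x_r, r)$, so that the identity \eqref{eq::MOO::marginal} becomes an identity between two conditional densities of the joint law of $(X,R)$.

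First, I would fix a pattern $r \in \{0,1\}^d$ and a coordinate $j \in \bar r$. Under the above reading, \eqref{eq::MOO::marginal} becomes
\[
p(x_j \mid x_r, r) \;=\; p(x_j \mid x_r,\, r \oplus e_j).
\]
On the left we condition on $X_\ell = x_\ell$ for every $\ell \neq j$ with $r_\ell = 1$, together with $R_{-j} = r_{-j}$ and $R_j = 0$; on the right we condition on the same values of $\{X_\ell : \ell \neq j,\, r_\ell = 1\}$ and the same $R_{-j} = r_{-j}$, but now with $R_j = 1$. Since the two conditioning events differ only in the value of $R_j$, the equality is by definition the conditional independence $X_j \perp R_j \mid X_R, R_{-j}$ of \eqref{eq::CI}, where $X_R$ is understood as the tuple of observed coordinates common to both patterns (equivalently, $X_R$ itself on the $R=r$ side, since $r_j = 0$ excludes $X_j$ from $X_R$).

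To obtain equivalence with \eqref{eq::CI2}, I would relabel: given $(r',j)$ with $j \in r'$, set $r = r' \ominus e_j$, so that $j \in \bar r$, $r \oplus e_j = r'$, $r_{-j} = r'_{-j}$, and $x_r = x_{r' \ominus e_j}$. The displayed equality then rewrites as $p(x_j \mid x_{r' \ominus e_j}, r' \ominus e_j) = p(x_j \mid x_{r' \ominus e_j}, r')$, which is \eqref{eq::CI2} at pattern $r'$; since $(r,j) \mapsto (r \oplus e_j, j)$ is a bijection between $\{(r,j) : j \in \bar r\}$ and $\{(r',j) : j \in r'\}$, the two universal statements are equivalent. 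The only real obstacle is notational: the symbol $X_R$ refers to a different coordinate tuple under different realizations of $R$, so the two sides of $X_j \perp R_j \mid X_R, R_{-j}$ do not literally condition on the same random vector. Holding $R_{-j}$ fixed and toggling only $R_j$ resolves this, because the observed covariates other than $X_j$ form the same set under $R_j = 0$ and $R_j = 1$, and $X_j$ itself is excluded from both conditioning sets; with this reading, \eqref{eq::CI} and \eqref{eq::CI2} collapse to the single equality displayed above, and the proposition reduces to a direct bookkeeping check.
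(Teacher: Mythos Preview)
Your proposal is correct and follows essentially the same approach as the paper's proof: both read the defining identity $q(x_j\mid x_r,r)=p(x_j\mid x_r,r\oplus e_j)$ as $p(x_j\mid x_r,R_{-j}=r_{-j},R_j=0)=p(x_j\mid x_r,R_{-j}=r_{-j},R_j=1)$ to get \eqref{eq::CI}, and then obtain the equivalence with \eqref{eq::CI2} by the relabeling $r\leftrightarrow r\oplus e_j$ (the paper writes it as $R'\equiv R\ominus e_j$, you as the bijection $(r,j)\mapsto(r\oplus e_j,j)$). Your explicit discussion of why $X_R$ names the same coordinate tuple on both sides once $R_{-j}$ is fixed is a welcome clarification that the paper leaves implicit.
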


Equation \eqref{eq::CI} shows similarity to the itemwise conditionally independent nonresponse (ICIN; \citealt{sadinle2017itemwise}; also known
as the no-self-censoring/NSC; \citealt{malinsky2022semiparametric}) condition
but there is a key difference. In ICIN/NSC, the probability statement is:
$$
X_j \perp R_j | X_{-j}, R_{-j}.
$$
Namely, ICIN/NSC require conditioning on \emph{all other variables}.
On the other hand, equation \eqref{eq::CI} only requires
conditioning on the observed variable $X_R$ under pattern $R$.
This key difference indicates that the conditional operation in equation \eqref{eq::CI}
cannot be expressed in a directed acyclic graph. 
So the optimal imputation model is not a missing data directed acyclic graph \citep{mohan2013graphical, nabi2020full}.


Since we know the optimal imputation model in $\mathcal{Q}^*_{MOO}$
is generally non-unique (only unique for each marginal), 
equation \eqref{eq::CI} is not a nonparametric identification assumption \citep{robins2000sensitivity}. 
To obtain a unique imputation model, 
we need to add additional assumptions that do not conflict with the observed data
as well as equation \eqref{eq::CI}.
One such additional assumption is as follows. 
For any $j,k \in \bar R$, we assume that
\begin{equation}
X_j\perp X_k|X_{R}, R.
\label{eq::PM::pair}
\end{equation}
One can easily see that equation \eqref{eq::PM::pair} does not conflict with equation \eqref{eq::CI}
nor the observed data. 
Also, with equations
\eqref{eq::CI} and \eqref{eq::PM::pair},
there is a unique imputation distribution--the MOOPM model in equation \eqref{eq::MOOPM}.
Thus, equations \eqref{eq::CI} and \eqref{eq::PM::pair} 
together form a nonparametric identifying assumption.

\subsection{Efficiency theory}

Because equations \eqref{eq::CI} and \eqref{eq::PM::pair}
form a nonparametric identification assumption \citep{robins2000sensitivity},
we are able to construct its inverse probability weighting (IPW) estimator
and regression adjustment estimator.
Moreover, we will be able to study
the underlying semi-parametric efficiency theory.
To simplify the problem, we consider estimating the mean of the first variable
and study its efficiency theory.
Namely, our parameter of interest is $\mu \equiv \E[X_1]$.

For any $r$ such that $r_1 = 0$ (i.e., $X_1$ is missing),
define the odds
$$
O_1(x_{r}, r) =  \frac{P(R_1=0|x_{r}, r_{-1})}{P(R_1=1|x_{r}, r_{-1})} = \frac{p(R_1=0,x_{r}, R_{-1}=r_{-1})}{p(R_1=1,x_{r}, R_{-1}=r_{-1})}
=  \frac{p(x_{r}, R = r)}{p( x_{r}, R=r\oplus e_1)}.
$$
Clearly, this odds is identifiable. 
We just need to perform a two-sample comparison over the variables $x_{r}$ where
the first sample is $R_1=0, R_{-1} = r_{-1}$ and the second sample is $R_1 = 1, R_{-1} = r_{-1}$.
This can be done easily by either estimating the density ratio \citep{sugiyama2012density}
or training a generative classifier for the binary outcome $R_1$ given $(X_r, r_{-1})$.

This odds quantity has an interesting property: for any $r$ with $r_1 =0$,
\begin{align*}
\E[X_1 O_1(X_{r}, r) I(R=r\oplus e_1)] & = \int x_1O_1(x_{r}, r) p(x_{r\oplus e_1}, r\oplus e_1)dx_1dx_r\\
  &=\int x_1  O_1(x_{r}, r)p(x_1|x_{r}, r\oplus e_1)  p(x_{r}, r\oplus e_1)dx_1 dx_r\\
    &=\int x_1  p(x_1|x_{r}, r\oplus e_1)  p(x_{r}, R=r)dx_1 dx_r\\
        &\overset{\eqref{eq::CI}}{=}\int x_1  p(x_1|x_{r}, r)  p(x_{r}, R=r)dx_1 dx_r\\
        & =\E[X_1  I(R=r)],
\end{align*}
which is the expected value of the missing value $X_1$ under $R=r$ when $R_1 = r_1 = 0$.

With the above result,
we  decompose 
\begin{align*}
\mu \equiv \E(X_1) & = \E(X_1 I(R_1=1)) + \E(X_1 I(R_1 = 0))\\
& = \E(X_1 I(R_1=1)) + \sum_{r_{-1}}\E(X_1 I(R_1 = 0, R_{-1} = r_{-1}))\\
& = \E(X_1 I(R_1=1)) + \sum_{r: r_1 = 0}\E(X_1 O_1(X_{r}, r) I(R = r\oplus e_1) ),
\end{align*}
which implies the IPW estimator of $\mu$ via the plug-in approach.

In addition to the IPW estimator,
we are able to construct a regression adjustment estimator.
Let
$$
\mu_1(X_r,r) \equiv \E[X_1 | X_r, R=r] 
$$
be the outcome regression model for the mean of $X_1$ given $X_r$ and $R=r$ with $r_1=0$.
One can easily see that 
$$
\E[\mu_1(X_r,r)I(R=r)] = \int \E[X_1 | X_r, R=r] p(x_r, r)dx_r  = \E[X_1I(R=r)]
$$
so the function $\mu_1(X_r,r)$ can be interpreted as an outcome regression model of $\mu_1$ under pattern $R=r$.
While $\mu_1(X_r,r)$ is in general unidentifiable from the data,
equation \eqref{eq::CI} 
implies
$$
\mu_1(X_r,r) \stackrel{\eqref{eq::CI}}{=} \mu_1(X_r,r\oplus e_1) \equiv  \E[X_1 | X_r, R=r\oplus e_1] ,
$$
which can be identified from the data.
So we can simply use the plug-in approach to construct a regression adjustment estimator. 

Now we have two sets of nuisance functions $O_1(x_{r}, r)$
and $\mu_1(X_r,r\oplus e_1)$ for each $r$ with $r_1 = 0$. 
One may expect from the semi-parametric efficiency theory that
some combination of these two nuisances leads to the efficient influence function.
The theorem below provides a positive answer to this.

\begin{theorem}[Efficient influence function for marginal mean]\label{thm::EIF}
		The efficient influence function for $\mu\equiv\E[X_1]$ under equation \eqref{eq::CI} is given by
		\begin{equation}
		\begin{aligned}
			\EIF(\mu) = I(R_1=1)X_1  
			 &+ \sum_{r:r_1=0} \biggr[I(R=r\oplus e_1) O_1(X_r,r) (X_1 - \mu_1(X_r,r\oplus e_1)) \\
			&+ I(R=r)  \mu_1(X_r,r\oplus e_1)\biggr] - \mu. 
		\end{aligned}
		\label{eq:marginalEIF}
		\end{equation}
	\end{theorem}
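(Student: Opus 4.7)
The plan is to compute the pathwise derivative of $\mu$ along a regular one-dimensional parametric submodel and read off the efficient influence function. Because the conditional independence \eqref{eq::CI} constrains only the (unobserved) extrapolation densities and serves to identify $\mu$ through an observed-data functional, it imposes no restriction on the observed-data distribution $p(X_R,R)$. The observed-data model is therefore nonparametric, the tangent space equals $L_2^0(P)$, and the EIF is the unique mean-zero $\phi(X_R,R)$ for which $\partial_\epsilon \mu(P_\epsilon)|_{\epsilon=0}=\E[\phi\cdot s]$ holds for every score $s(x_r,r)=\partial_\epsilon\log p_\epsilon(x_r,r)|_{\epsilon=0}$.

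Starting from the regression-adjustment identification implied by \eqref{eq::CI},
$$\mu = \E[X_1 I(R_1=1)] + \sum_{r:\,r_1=0}\E[\mu_1(X_r, r\oplus e_1)\,I(R=r)],$$
I would differentiate each piece along a smooth submodel $\{P_\epsilon\}$. The first term gives $\E[X_1 I(R_1=1)\,s]$ directly. For each summand in the second term the product rule yields two contributions: (i) differentiating the observed-data density $p_\epsilon(x_r,r)$ produces $\E[\mu_1(X_r, r\oplus e_1)\,I(R=r)\,s]$; and (ii) differentiating the regression $\mu_1^\epsilon(X_r,r\oplus e_1)$ produces, via the standard identity for the pathwise derivative of a conditional expectation,
$$\partial_\epsilon\mu_1^\epsilon(X_r,r\oplus e_1)\big|_{\epsilon=0} = \E\!\left[(X_1-\mu_1(X_r,r\oplus e_1))\,s(X_R,R)\,\big|\,X_r,R=r\oplus e_1\right].$$

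The key manipulation is converting the outer expectation of contribution (ii) from an expectation over $R=r$ to one over $R=r\oplus e_1$ through the odds. Since $p(x_r,R=r)=O_1(x_r,r)\,p(x_r,R=r\oplus e_1)$ by the definition of $O_1$, applying the tower property yields
$$\E\!\left[O_1(X_r,r)\,(X_1-\mu_1(X_r,r\oplus e_1))\,I(R=r\oplus e_1)\,s\right].$$
Summing over all $r$ with $r_1=0$ and collecting with the pieces from (i) and the first term of the identification formula, the pathwise derivative takes the form $\E[\phi\cdot s]$ with $\phi$ equal to the right-hand side of \eqref{eq:marginalEIF} up to an additive constant $\mu$, which is subtracted to enforce $\E[\phi]=0$ without changing the inner product with any score. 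Since $\phi$ is a function of the observables (both $O_1$ and $\mu_1(X_r,r\oplus e_1)$ are identified from $p(X_R,R)$) and the tangent space is the full $L_2^0(P)$, this $\phi$ is the efficient influence function.

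The main technical hurdle is the chain-rule step for $\mu_1^\epsilon$ combined with the measure change through $O_1$: one must keep careful track of which response pattern ($r$ versus $r\oplus e_1$) is being conditioned on and of the corresponding argument of the score, since the score moves between the two patterns under the reweighting. The remaining bookkeeping---summing over patterns, centering by $\mu$, and verifying that the resulting $\phi$ is a legitimate element of $L_2^0(P)$---is routine once the conditional-expectation derivative and the odds-reweighting identity are in place.
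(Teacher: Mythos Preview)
Your proposal is correct and follows essentially the same route as the paper: both start from the regression-adjustment identification $\mu=\E[X_1 I(R_1=1)]+\sum_{r:r_1=0}\E[\mu_1(X_r,r\oplus e_1)I(R=r)]$, apply a product rule to each summand, use the standard formula for the pathwise derivative of a conditional mean, and convert the $R=r$ expectation to an $R=r\oplus e_1$ expectation via the odds $O_1$. The only cosmetic difference is that the paper packages these steps using the influence-function calculus of Kennedy (2023) rather than writing out the parametric-submodel derivative explicitly, and your explicit justification that \eqref{eq::CI} constrains only the extrapolation density (hence the observed-data tangent space is all of $L_2^0(P)$) is a point the paper leaves implicit.
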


Here is an interesting fact: we only need equation \eqref{eq::CI} for constructing the EIF of $\mu_1$.
This is because the marginal mean $\mu_1$ only require 
depends on a single variable. As long as we can identify its marginal distribution,
we can identify this parameter of interest.
Therefore, equation \eqref{eq::CI} is enough and we do not need equation \eqref{eq::PM::pair}.
However, if the parameter of interest involves two or more study
variables, then equation \eqref{eq::CI} is not enough and we need additional assumptions  such as equation \eqref{eq::PM::pair}.

Based on equation \eqref{eq:marginalEIF}, 
we can construct a plug-in estimator
\begin{align*}
\hat \mu_{MR} = \frac{1}{n}\sum_{i=1}^n I(\bR_{i,1}=1)\bX_{i,1}  
		 &+ \sum_{r:r_1=0} \biggr[I(\bR_i=r\oplus e_1) \hat O_1(\bX_{i,r},r) (\bX_{i,1} - \hat \mu_1(\bX_{i,r},r\oplus e_1)) \\
			&+ I(\bR_i=r)  \hat \mu_1(\bX_{i,r},r\oplus e_1)\biggr],
\end{align*}
where $\hat O_1$ and $\hat \mu_1$ are estimators corresponding to $O_1$ and $\mu_1$. 
The estimator $\hat \mu_{MR}$
is a multiply-robust estimator, as illustrated in the following theorem. 

	\begin{theorem}[Multiple robustness]\label{thm::MR}
		The plug-in estimator $\hat \mu_{MR}$ is $(2^{d-1}-1)$-multiply robust.
	\end{theorem}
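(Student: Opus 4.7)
The plan is to verify that each summand in the population version of $\hat\mu_{MR}$ inherits a doubly robust structure in the nuisance pair indexed by the response pattern $r$, and then to assemble these pattern-wise identities across all $r$ with $r_1=0$. Writing the population limit as
\begin{equation*}
\mu^\dagger \;=\; \E\bigl[I(R_1{=}1)X_1\bigr] + \sum_{r:\,r_1=0} T_r,
\end{equation*}
\begin{equation*}
T_r \;=\; \E\bigl[I(R{=}r{\oplus}e_1)\,O_1^\dagger(X_r,r)(X_1-\mu_1^\dagger(X_r,r{\oplus}e_1)) + I(R{=}r)\,\mu_1^\dagger(X_r,r{\oplus}e_1)\bigr],
\end{equation*}
and using the decomposition $\mu = \E[I(R_1{=}1)X_1] + \sum_{r:\,r_1=0}\E[X_1 I(R{=}r)]$ derived just above Theorem~\ref{thm::EIF}, consistency of $\hat\mu_{MR}$ reduces to showing $T_r = \E[X_1 I(R{=}r)]$ for every $r$ under a minimal assumption on $(O_1^\dagger,\mu_1^\dagger)$.

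For each fixed $r$ I would carry out the standard doubly robust algebra in two scenarios. \emph{Case (i):} if $O_1^\dagger = O_1$, the odds identity
\begin{equation*}
\E\bigl[g(X_r)\,O_1(X_r,r)\,I(R{=}r{\oplus}e_1)\bigr] \;=\; \E\bigl[g(X_r)\,I(R{=}r)\bigr]
\end{equation*}
holds for any integrable $g$ directly from the definition of $O_1$, and an analogous identity for the $X_1$-weighted integrand follows by inserting $p(x_1\mid x_r,r{\oplus}e_1)$ and invoking \eqref{eq::CI} to replace it by $p(x_1\mid x_r,r)$, exactly as in the derivation preceding Theorem~\ref{thm::EIF}. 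Applying the identity with $g=1$ (or $X_1$) and with $g=\mu_1^\dagger(\cdot,r{\oplus}e_1)$ and subtracting, the $\mu_1^\dagger$ terms cancel and $T_r=\E[X_1 I(R{=}r)]$ regardless of how $\mu_1^\dagger$ is chosen. \emph{Case (ii):} if $\mu_1^\dagger = \mu_1$, the first summand in $T_r$ vanishes by iterated expectation conditional on $(X_r,R{=}r{\oplus}e_1)$, and the second summand equals $\E[\mu_1(X_r,r{\oplus}e_1)I(R{=}r)] = \E[X_1I(R{=}r)]$ by a second use of \eqref{eq::CI}. Hence the $r$-th pair $(O_1(\cdot,r),\mu_1(\cdot,r{\oplus}e_1))$ contributes a genuine doubly robust slot.

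Because the conditions on different $r$ involve disjoint pieces of the nuisance functions, the pattern-wise requirements decouple across the $2^{d-1}$ response patterns with $r_1=0$. The degenerate pattern $r=0_d$ has an empty conditioning vector, so $O_1(\cdot,0_d)$ and $\mu_1(\cdot,e_1)$ collapse to scalars that are estimable nonparametrically from sample proportions and sample means without any working-model choice; it does not consume a robustness slot. The remaining $2^{d-1}-1$ nontrivial pairs each yield a DR condition by the preceding paragraph, so $\hat\mu_{MR}$ is consistent whenever, in each of these $2^{d-1}-1$ pairs, at least one of the two working models is correctly specified — i.e., the estimator is $(2^{d-1}-1)$-multiply robust. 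A routine Slutsky and uniform-law argument (together with Donsker-type regularity on $\hat O_1$ and $\hat \mu_1$) upgrades the population identity to finite-sample consistency.

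I expect the main obstacle to be the first case of the pair-wise calculation: correctly combining the odds identity with \eqref{eq::CI} inside the $X_1$-weighted integrand to justify the replacement $p(x_1\mid x_r,r{\oplus}e_1)\leadsto p(x_1\mid x_r,r)$, which is what makes the $\mu_1^\dagger$-dependent pieces cancel exactly despite $\mu_1^\dagger$ being arbitrary. Once that step is nailed down, the rest is bookkeeping — summing DR identities across patterns and correctly separating the trivial pattern $r=0_d$, which is precisely what calibrates the robustness count to $2^{d-1}-1$ rather than $2^{d-1}$.
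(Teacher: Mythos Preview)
Your proposal is correct and follows essentially the same approach as the paper: decompose $\mu$ into the observed part plus pattern-wise contributions $T_r$, then for each $r$ with $r_1=0$ verify double robustness in the pair $(O_1(\cdot,r),\mu_1(\cdot,r\oplus e_1))$ via the same two cases (correct odds lets the $\mu_1^\dagger$ terms cancel through the odds identity and \eqref{eq::CI}; correct outcome regression makes the residual term vanish by iterated expectation). Your separation of $r=0_d$ as the trivial pattern that consumes no robustness slot also matches the paper's reasoning and yields the $2^{d-1}-1$ count.
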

	
	The meaning of multiply-robustness in Theorem~\ref{thm::MR}
	is as follows. 
	For each $r$ with $r_{1} = 0$, 
	we have a pair of nuisance parameters: $O_1(x_r,r)$ and $\mu_1(x_r,r\oplus e_1)$.
	We need at least one of the two nuisances to be correct
	to obtain the consistency of the estimator. 
	Using a bit more algebra, let $\mathcal{F}_{O, r}$ be the collection of distributions such that $O_1(x_r,r)$ is the correct model
	and $\mathcal{F}_{\mu, r}$ be the collection of distributions such that $\mu_1(x_r,r\oplus e_1)$ is the correct model. 
	The multiply-robustness means that as long as the true distribution that generates our data falls within the intersection
	$
	\bigcap_{r: r_1 = 0 } \left(\mathcal{F}_{O, r}\cup \mathcal{F}_{\mu, r}\right),
	$
	our estimator is consistent.
	Since there will be a total of $2^{d-1}$ patterns  for $\{r: r_1 = 0\}$ and we do not need to model the case $r=0_d$ (all variables are missing, which can be
	estimated by the empirical ratio), 
	the model is $(2^{d-1}-1)$-multiply robust.


\section{Evaluating imputation via rank transformation and energy distance}	\label{sec::QT}

\subsection{Limitation of loss minimization}

While the naive MOO in Algorithm \ref{alg::moo} is easy to implement, it has a severe limitation:
the optimal imputation model is a deterministic imputation that ignores the stochastic nature of data.
While the deterministic imputation may have a lower variance for a particular parameter of interest,
it could lead to a biased estimate
when the parameter of interest does not align with the loss function \citep{von2025imputing}.
The following is a concrete example illustrating this problem.

\begin{example}[Failure of deterministic imputation]
Consider a simple missing data problem where we have two study variables $(X,Y)\in\R^2$
and $Y$ is subject to missing and $X$ is always observed.
Let $R\in\{0,1\}$  be the response indicator for $Y$, i.e., $R=1$ if $Y$ is observed. 
In this case, equation \eqref{eq::CI} will imply $Y\perp R|X$, 
which agrees with the conventional missing-at-random assumption. 
One can easily show that under the square loss, the optimal imputation for $Y$ given $X$ and $R=0$ under the naive MOO
is $ \mu_1(X) = \E(Y|X, R=1)$. 
Suppose the parameter of interest is $\theta = \E(Y^2)$, the second moment of $Y$. 
Based on the imputation, our estimate of $\theta $ will be 
$$
\frac{1}{n}\sum_{i=1}^n \left(R_i Y_i^2 + (1-R_i) \mu^2_1(X_i)\right).
$$
However, it is easy to see that the imputed part
$$
\E(\mu^2_1(X_i)) = \E(\E^2(Y|X= X_i))\leq \E(\E(Y^2|X= X_i))  = \E(Y^2).
$$
The difference of the inequality is 
$$
\E(\E(Y^2|X= X_i)) - \E(\E^2(Y|X= X_i)) = \E({\sf Var}(Y|X=X_i))\geq 0.
$$
The equality holds only if ${\sf Var}(Y|X=X_i) = 0$.
Thus, the mean imputation gives a biased estimate for $\theta$.
\end{example}

\subsection{Masking with rank transformation}
To resolve the above issue, we need a procedure where the minimizer is
a stochastic imputation rather than a deterministic imputation.
And ideally, such minimizer shall recover an imputation model in $\mathcal{Q}^*_{MOO}$.
To obtain such an imputation model, we propose a procedure called \emph{masking-one-out with rank transformation (MOORT)}. 
The procedure is summarized in Algorithm \ref{alg::MOORT}.

\begin{algorithm}
\caption{Masking-one-out with rank transformation (MOORT)}
\label{alg::MOORT}
\begin{algorithmic}
\State \textbf{Input:} Imputation model $q$ and a distributional metric $d$ (e.g. Kolmogorov distance, maximal mean discrepancy).
\begin{enumerate}
\item For each individual $i=1,\cdots, n$, 
we randomly pick one observed entry $j \in \bR_i$. 
\item We mask $\bX_{ij}$, pretending it to be a missing value.
\item 
We 
sample $M$ times from the conditional distribution 
$$
q(x_j | \bX_{i, \bR_i \ominus e_j},  \bR_i \ominus e_j)\equiv q(x_j | X_{ \bR_i \ominus e_j} = \bX_{i, \bR_i \ominus e_j}, R= \bR_i \ominus e_j)
$$ to generate $M$ imputed values:
$
\hat \bX_{ij}^{(1)},\cdots, \hat \bX_{ij}^{(M)}.
$
\item We compute the empirical cumulative distribution function (EDF) of these $M$ values: $\hat G_{\bX_{ij}}(x) = \frac{1}{M}\sum_{m=1}^M I\left(\hat \bX_{ij}^{(m)} \leq x\right)$. 
\item We  compute the (normalized) rank $\hat S_i = \hat G_{\bX_{ij}}(\bX_{ij})$.
\item By doing so for every individual, we obtain $\hat S_1,\cdots, \hat S_n$ and the corresponding empirical distribution $\hat H(t; q)= \frac{1}{n}\sum_{i=1}^n I(\hat S_i\leq t)$.
\item We use metric $d$ to obtain
$
\hat{\mathcal{R}}(q) = d\left(\hat H(\cdot; q), {\sf Uni}[0,1]\right).
$
\end{enumerate}

\end{algorithmic}
\end{algorithm}

The high level idea of MOORT is that
when the imputation model is correct, $\bX_{ij}$ should be a random draw from the imputation distribution. 
Therefore, the (normalized) rank $\hat S_i$ should be 
(asymptotically) distributed as
a uniform distribution over $[0,1]$. 
The independence among different individuals allows us to compare the distribution of $\hat S_i$ to the uniform distribution.

A feature of MOORT in Algorithm \ref{alg::MOORT} is that we only pick one variable per individual
because different observed variables in a single individual may  be dependent.
Note that we may use all observed variables in the computation of MOORT. Namely, in Step 1 of Algorithm \ref{alg::MOORT},
we consider every $j\in \bR_i$. 
While this reduces the Monte Carlo errors, the resulting normalized ranks will be have a block-dependent structure
and individuals with more observed variables will have a higher weight in the final output. 
The random selection of one observed variable per individual resolve this issue with  the cost of a slightly increased Monte Carlo errors. 
Alternatively, we may perform MOORT for each variable separately and combine them together
to reduce the Monte Carlo errors.
See Appendix \ref{sec::MOORT::v} for more details.

The MOORT is related to the following multiple testing problem:
$$
H_{0,i}: Z_i \sim Q_i,
$$
where $Q_i$ is a distribution we can sample from.
Our goal is to test the global null
that 
$H_{0,i}: Z_i \sim Q_i$ is true for all $i$.
In our case, $Z_i$ is the masked variable and $Q_i$ is the corresponding imputation distribution.
Under this framework, we can rewrite the global null as 
$$
H_0: Q_i(Z_i)\sim {\sf Uni}[0,1]
$$
for all $i=1,\cdots, n$.
Thus, MOORT is a Monte Carlo method for testing the above global null.

%

\begin{theorem}[Consistency of  MOORT procedure]
\label{thm::MOORT}
Consider the MOORT procedure.
Suppose we use Kolmogorov distance in Algorithm \ref{alg::MOORT}.
When $n,M\rightarrow\infty$, 
$$
\hat{\mathcal{R}}(q)
\overset{P}{\rightarrow} 0
$$
for any $q \in \mathcal{Q}^*_{MOO}$.
\end{theorem}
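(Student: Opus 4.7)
The plan is to couple the rank statistic $\hat S_i$ with an idealized probability--integral--transform quantity and then invoke Glivenko--Cantelli. The starting observation is that for $q\in\mathcal{Q}^*_{MOO}$, the defining identity $q(x_j|x_r,r)=p(x_j|x_r,r\oplus e_j)$ from equation \eqref{eq::MOO::marginal} makes the sampling distribution used in Step 3 of Algorithm~\ref{alg::MOORT}, namely $q(x_j|\bX_{i,\bR_i\ominus e_j},\bR_i\ominus e_j)$, coincide with $p(x_j|\bX_{i,\bR_i\ominus e_j},\bR_i)$, which is also the conditional law of the true value $\bX_{ij}$ given the same context. Hence, conditional on $(\bR_i, J_i=j, \bX_{i,\bR_i\ominus e_j})$, where $J_i$ denotes the observed index picked in Step 1, the true value $\bX_{ij}$ and the imputed values $\hat\bX_{ij}^{(1)},\ldots,\hat\bX_{ij}^{(M)}$ are i.i.d.\ draws from a common (assumed continuous) distribution $G_i$.

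Define the oracle PIT variable $U_i := G_i(\bX_{iJ_i})$. By the probability integral transform applied inside each conditioning event, $U_i\sim{\sf Uni}[0,1]$; independence of $U_1,\ldots,U_n$ follows from the i.i.d.\ structure of $(\bX_i,\bR_i)$ together with the independent uniform choices of $J_i$. Let $H_*(t)=n^{-1}\sum_{i=1}^n I(U_i\le t)$ be the empirical CDF of these oracle ranks. Classical Glivenko--Cantelli yields $\Delta_n := \sup_t|H_*(t)-t|\to 0$ almost surely as $n\to\infty$.

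To compare $\hat S_i$ with $U_i$, observe that $M\hat S_i = \sum_m I(\hat\bX_{ij}^{(m)}\le \bX_{ij})$ is ${\sf Binomial}(M,U_i)$ given $(U_i, \text{context})$, so Hoeffding yields $\P(|\hat S_i - U_i|>\epsilon)\le 2e^{-2M\epsilon^2}$ uniformly in the (random) value of $U_i$. For any fixed $\epsilon>0$, the sandwich $I(U_i\le t-\epsilon) - I(|\hat S_i - U_i|>\epsilon) \le I(\hat S_i\le t) \le I(U_i\le t+\epsilon) + I(|\hat S_i-U_i|>\epsilon)$ gives
$$\sup_t\bigl|\hat H(t)-H_*(t)\bigr| \le \sup_s\bigl|H_*(s+\epsilon) - H_*(s)\bigr| + \bar N_\epsilon,$$
where $\bar N_\epsilon := n^{-1}\sum_i I(|\hat S_i - U_i|>\epsilon)$ has $\E\bar N_\epsilon \to 0$ as $M\to\infty$ and hence $\bar N_\epsilon = o_P(1)$. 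Using $|H_*(s+\epsilon)-H_*(s)|\le \epsilon + 2\Delta_n$, we conclude
$$\hat{\mathcal R}(q)\le \sup_t\bigl|\hat H(t)-H_*(t)\bigr| + \Delta_n \le \epsilon + 3\Delta_n + \bar N_\epsilon = \epsilon + o_P(1).$$
As $\epsilon>0$ is arbitrary, $\hat{\mathcal R}(q)\overset{P}{\to}0$.

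The main obstacle is the interplay between the two sources of stochasticity, since the $n$ observations and the $M$ Monte Carlo imputations per observation jointly determine $\hat S_i$. The coupling through $U_i$ decouples them: the $n$-dependent error reduces to ordinary Glivenko--Cantelli, while the $M$-dependent error reduces to a conditional binomial tail bound that is uniform in the random context. A secondary subtlety is the implicit continuity assumption behind the probability integral transform; if $G_i$ has atoms, $\bX_{ij}$ could tie one of the $\hat\bX_{ij}^{(m)}$ and a randomized-rank variant would be needed to preserve uniformity of $U_i$.
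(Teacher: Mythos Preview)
Your proposal is correct and follows the same overall strategy as the paper's proof: for $q\in\mathcal{Q}^*_{MOO}$ the masked value $\bX_{iJ_i}$ and the imputes $\hat\bX^{(m)}_{iJ_i}$ are conditionally i.i.d.\ from a common law $G_i$, so the oracle PIT variable $U_i=G_i(\bX_{iJ_i})$ is exactly ${\sf Uni}[0,1]$, and Glivenko--Cantelli in $n$ finishes the argument. Where you differ is in the coupling of $\hat S_i$ to $U_i$: the paper simply asserts $\max_i|\hat S_i-U_i|\overset{P}{\to}0$ from Glivenko--Cantelli in $M$ and then says the empirical distribution of $(\hat S_i)$ ``asymptotically behaves like'' that of $(U_i)$, which tacitly needs $M$ to outpace $\log n$ for the union bound over $i$ to work; your Hoeffding-plus-sandwich argument controls the \emph{average} indicator $\bar N_\epsilon$ rather than the maximum, so $M\to\infty$ alone suffices regardless of how $n$ grows, and the $\epsilon$-perturbation of $H_*$ is handled explicitly. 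This is a cleaner and slightly more robust execution of the same idea; your closing remark about atoms and randomized ranks also makes explicit a continuity assumption that the paper leaves implicit.
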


The Kolmogorov distance in Theorem~\ref{thm::MOORT} can be replaced by other distributional metrics such as Wasserstein, 
maximal mean discrepancy, and energy distance.
Moreover, we may use distance derived from Anderson-Darling or Cram{\'e}r-von Mises tests.
Theorem \ref{thm::MOORT} confirms that MOORT is consistent for any $q \in \mathcal{Q}^*_{MOO}$. 
However, like the original MOO, it is insensitive to the joint dependency structure of the imputed variables.

Since the MOORT value $\hat{\mathcal{R}}(q)$ measures the departure of an imputation model
from $\mathcal{Q}^*_{MOO}$, 
the population version quantity of $-\hat{\mathcal{R}}(q)$ is a proper imputation score in the definition of \cite{naf2023imputation}
when the true full-data distribution is $ p^*(x,r) = q(x_{\bar r}|x_r, r) p(x_r,r)$ for any $q\in \mathcal{Q}^*_{MOO}$.

\subsection{Masking with energy distance}

In addition to the rank transformation, we may use the energy distance \citep{rizzo2016energy,szekely2013energy} for assessing the performance
of an imputation model \citep{grzesiak2025needdozensmethodsreal}. This is similar to the engression approach \citep{shen2025engression}.
We call this approach \emph{Masking-one-out with energy distance (MOOEN).}
Algorithm \ref{alg::MOOEN} provides a summary of this procedure.

\begin{algorithm}
\caption{Masking-one-out with energy distance (MOOEN)}
\label{alg::MOOEN}
\begin{algorithmic}
\State \textbf{Input:} Imputation model $q$.
\begin{enumerate}
\item For each individual $i=1,\cdots, n$, 
and each observed variable $j \in \bR_i$. 
\item We mask $\bX_{ij}$, pretending it to be a missing value.
\item We sample $M$ times from the conditional distribution
$$
q(x_j | \bX_{i, \bR_i \ominus e_j},  \bR_i \ominus e_j)\equiv q(x_j | X_{ \bR_i \ominus e_j} = \bX_{i, \bR_i \ominus e_j}, R= \bR_i \ominus e_j)
$$
to generate the first set:
$\hat \bX_{ij}^{(1)},\cdots, \hat \bX_{ij}^{(M)}$.
We repeat this process to generate a second, independent set:
$\hat \bX_{ij}^{\dagger(1)},\cdots, \hat \bX_{ij}^{\dagger(M)}.$
\item We compute the energy distance loss for $\bX_{ij}$ as
$$
L_{\sf EN}(q|\bX_{ij}) = \frac{1}{M}\sum_{m=1}^M \left|\bX_{ij} - \hat \bX_{ij}^{(m)}\right| - \frac{1}{2M (M-1)} \sum_{m<m'}\left|\hat \bX_{ij}^{(m)} - \hat \bX_{ij}^{\dagger(m')}\right| .
$$
\item The final MOOEN of the whole data is 
$$
\hat{\mathcal{R}}_{\sf EN}(q) = \frac{1}{n}\sum_{i=1}^n \sum_{j\in \bR_i} L_{\sf EN}(q|\bX_{ij}).
$$

\end{enumerate}

\end{algorithmic}
\end{algorithm}

The MOOEN is based on the energy distance. 
The energy distance between $P_X,P_Y$ 
is 
$$
d_{\sf EN}(P_X, P_Y) = 2\E\|X-Y\| - \E\|X-X^\dagger\|  - \E\|Y-Y^\dagger\|,
$$
where  $X,X^\dagger\sim P_X$ and $Y,Y^\dagger\sim P_Y$. 

The energy distance can be written as an expected negative score
\begin{align*}
d_{\sf EN}(P_X, P_Y) &= \E_X[-{\sf ES}(X, P_Y)]\\
-{\sf ES} (x, P_Y) & = \E\|x-Y\|- \frac{1}{2} \E\|Y-Y^\dagger\| .
\end{align*}
The quantity ${\sf ES} (x, P_Y)$ is called the \emph{energy score} \citep{gneiting2007strictly, rizzo2016energy} and $\E_X$ is the expectation with respect to $X$.

When we have a random sample $\bX_1,\cdots, \bX_n$, their empirical energy distance to $P_Y$ is 
$$
\hat {\E_X} [-{\sf ES} (X, P_Y)] = \frac{1}{n}\sum_{i=1}^n \E_Y\|\bX_i-Y\|- \frac{1}{2} \E_{Y,Y^\dagger}\|Y-Y^\dagger\|
$$
A Monte Carlo approximation to $-{\sf ES} (x, P_Y)$ is via sampling 
$$
\bY^{(1)},\cdots, \bY^{(M)}, \bY^{\dagger(1)},\cdots, \bY^{\dagger(M)} \sim P_Y
$$
and computing
$$
\tilde {-{\sf ES} (x, P_Y)} = \frac{1}{M}\sum_{m=1}^M \|x-\bY^{(m)}\| - \frac{1}{2M (M-1)} \sum_{m<m'} \|\bY^{(m)} - \bY^{\dagger(m')}\|,
$$
which is essentially the step 4 in Algorithm \ref{alg::MOOEN}.
Note that since MOO only masks one variable at a time, 
the $L_2$-norm reduces to the absolute value, 
making the computation a lot easier. 


The MOOEN and the original MOO criteria share an interesting similarity. 
The loss of $\bX_{ij}$ under original MOO (using $L_1$ loss) after averaging $M$ times and energy distance are 
\begin{align*}
\text{(MOO)}\qquad L_{\sf MOO}(q|\bX_{ij}) & =  \frac{1}{M}\sum_{m=1}^M \left|\bX_{ij} - \hat \bX_{ij}^{(m)}\right|\\
\text{(MOOEN)}\qquad L_{\sf EN}(q|\bX_{ij}) &= \frac{1}{M}\sum_{m=1}^M \left|\bX_{ij} - \hat \bX_{ij}^{(m)}\right| - \frac{1}{2M (M-1)} \sum_{m<m'}\left|\hat \bX_{ij}^{(m)} - \hat \bX_{ij}^{\dagger(m')}\right|.
\end{align*}
This second term, $-\frac{1}{2M (M-1)} \sum_{m}$, acts as a reward for stochasticity. 
A deterministic imputation model (a point mass) has zero internal variance, so this term is 0, and it receives no reward. A stochastic model receives a `bonus' (a lower, i.e., better, score) proportional to its internal variance. Because energy score is a proper scoring rule \citep{gneiting2007strictly}, this bonus is maximized when the imputation distribution's variance matches the true data-generating variance.



Since the energy distance is a distance of distributions, 
we have the following consistency result for the MOOEN procedure. 
\begin{theorem}[Consistency of MOOEN procedure]
\label{thm::MOOEN}
Consider the MOOEN procedure. 
When $n,M\rightarrow\infty$, 
$$
\hat{\mathcal{R}}_{\sf EN}(q)
\overset{P}{\rightarrow} 0
$$
for any $q \in \mathcal{Q}^*_{MOO}$.
\end{theorem}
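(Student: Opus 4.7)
The plan is to mirror the MOORT argument: reduce $\hat{\mathcal{R}}_{\sf EN}(q)$ to its Monte Carlo conditional mean, apply a law of large numbers in $n$, and identify the population limit as a weighted sum of conditional energy distances that vanishes for any $q \in \mathcal{Q}^*_{MOO}$.

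First, fix $(i, j)$ with $j \in \bR_i$ and condition on the data; write $q_{ij} \equiv q(\cdot \mid \bX_{i, \bR_i \ominus e_j}, \bR_i \ominus e_j)$. The first part of $L_{\sf EN}(q \mid \bX_{ij})$ is a sample average of $M$ iid terms with conditional mean $\E_{\hat X \sim q_{ij}}|\bX_{ij} - \hat X|$, and the second part is a two-sample U-statistic that consistently estimates $\E_{\hat X, \hat X^\dagger \sim q_{ij}}|\hat X - \hat X^\dagger|$ under the procedure's normalization (chosen so that $L_{\sf EN}$ reproduces an energy-distance-type functional in the limit). Both conditional variances are of order $O(1/M)$ under finite second-moment assumptions, so the Monte Carlo noise accumulated in the outer $\frac{1}{n}\sum_{i}\sum_{j \in \bR_i}$ average is $o_P(1)$ as $M \to \infty$.

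Next, applying the LLN in $n$ and using the identity $d_{\sf EN}(P, Q) = 2\E|X-Y| - \E|X-X'| - \E|Y-Y'|$, the population limit becomes
\begin{equation*}
\mathcal{R}_{\sf EN}(q) = \sum_{r} P(R=r) \sum_{j \in r} \E\bigl[d_{\sf EN}\bigl(p(\cdot \mid X_{r \ominus e_j}, R=r),\, q(\cdot \mid X_{r \ominus e_j}, r \ominus e_j)\bigr) \bigm| R = r\bigr].
\end{equation*}
For $q \in \mathcal{Q}^*_{MOO}$, the defining relation $q(\cdot \mid x_{r \ominus e_j}, r \ominus e_j) = p(\cdot \mid x_{r \ominus e_j}, r)$ makes each inner energy distance identically zero by the identity-of-indiscernibles property of $d_{\sf EN}$ (as a metric on distributions), yielding $\mathcal{R}_{\sf EN}(q) = 0$ and hence $\hat{\mathcal{R}}_{\sf EN}(q) \overset{P}{\to} 0$.

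The main technical obstacle is controlling the joint $n, M \to \infty$ limit. I would split $|\hat{\mathcal{R}}_{\sf EN}(q) - \mathcal{R}_{\sf EN}(q)|$ into a Monte Carlo piece (bounded via Chebyshev's inequality; it shrinks in $M$ and is favorably damped by the outer $1/n$ factor because the per-observation conditional variances are $O(1/M)$) and a sampling piece (shrinking in $n$ by the standard LLN applied to the bounded-expectation energy-score terms). Both vanish without delicate rate trade-offs between $n$ and $M$, provided $|X_j|$ and the marginals of the imputation draws have uniformly bounded second moments across response patterns.
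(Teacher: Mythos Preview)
Your proposal is essentially the same approach as the paper's proof: both first let $M\to\infty$ to reduce $L_{\sf EN}(q|\bX_{ij})$ to its conditional (negative energy-score) limit, then use the defining relation of $\mathcal{Q}^*_{MOO}$ to identify the imputation distribution with the true conditional distribution of the masked entry, and finally apply the law of large numbers in $n$. The only minor differences are cosmetic: you frame the population limit as a weighted sum of conditional energy distances $d_{\sf EN}$ (using the paper's own identity $d_{\sf EN}(P_X,P_Y)=\E_X[-{\sf ES}(X,P_Y)]$), whereas the paper keeps it in terms of the expected negative energy score and invokes ``strictly proper scoring rule'' to conclude zero; and you are somewhat more explicit than the paper about splitting the error into a Monte Carlo piece and a sampling piece to justify the joint $n,M\to\infty$ limit.
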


Note that energy distance is not the only possible option, other scoring criteria \citep{gneiting2007strictly}
and distributional distances are applicable. 
We choose the energy distance for its similarity to the MOO under $L_1$ loss.

\section{Learning imputation model with MOO likelihood}	\label{sec::learning}

While the rank transformation and energy distance in Section \ref{sec::QT} can be used to compare multiple imputation models,
they are not ideal for training an imputation model because
we need many Monte Carlo evaluations.
To resolve this issue, we introduce a likelihood method based on the masking procedure.

We assume that the imputation model $q  = q_\theta$ is parameterized by $\theta$. 
This means that given $\theta$ and
any $(x_r,r)$, we are able to impute the missing variables $x_{\bar r}$ by generating from $q_\theta(x_{\bar r}|x_r,r)$.
In this setup, training the imputation model is the same as learning the underlying parameter $\theta$.

Let $(\bX_{i,R_i}, \bR_i)$ be an observation.
For the imputation model $q_\theta$,
we define its \emph{MOO log-likelihood function} to be 
\begin{equation}
\begin{aligned}
\ell(\theta|\bX_{i,\bR_i},\bR_i) 
&= \sum_{j \in \bR_i} \log q_\theta(x_j = \bX_{ij}| x_r = \bX_{i, r},  r= \bR_i\ominus e_j)\\
&\equiv \sum_{j \in \bR_i} \log q_\theta(x_j = \bX_{ij}|X_{\bR_i\ominus e_j} = \bX_{i, \bR_i\ominus e_j}, R = \bR_i\ominus e_j).
\end{aligned}
\label{eq::MOO::likelihood}
\end{equation}
Note that the expression 
$
q_\theta(x_j = \bX_{ij}| x_r = \bX_{i, r},  r= \bR_i\ominus e_j)
$
will be used frequently in the rest of the paper 
since it avoids confusion when taking expectation.


$\ell(\theta|\bX_{i,\bR_i},\bR_i)$ is the logarithm of the predictive probability (density) on the masked variable
given the other observed variables based on the idea of MOO. 
The {MOO log-likelihood} of the entire data is 
\begin{equation}
\begin{aligned}
\ell_n(\theta)&=  \sum_{i=1}^n \ell(\theta|\bX_{i,\bR_i},\bR_i)  \\
&= \sum_{i=1}^n\sum_{j \in \bR_i} \log q_\theta(x_j = \bX_{ij}|x_r =  \bX_{i, r}, r=\bR_i\ominus e_j).
\end{aligned}
\label{eq::MOO::likelihood2}
\end{equation}
With equation \eqref{eq::MOO::likelihood2}, we  estimate $\theta$ by the maximum likelihood estimator (MLE)
\begin{equation}
\hat \theta_n = {\sf argmax}_\theta \,\,\ell_n(\theta).
\label{eq::MLE}
\end{equation}
We provide a Gaussian example of this framework in Appendix \ref{sec::sep}.

The MLE is a minimizer of an empirical risk, so we can define its population analog:
\begin{equation}
\theta^*  = {\sf argmax}_\theta  \,\, \bar \ell(\theta), \qquad \bar \ell(\theta) = \E\{\ell(\theta|\bX_{1,\bR_1},\bR_1)\}.
\label{eq::pMLE}
\end{equation}
$\theta^*$ can be estimated by the MLE $\hat \theta_n$ under proper assumptions (Theorem~\ref{thm::MLE}).

The population parameter $\theta^*$ has a useful interpretation. The log-likelihood is a strictly proper scoring rule. Therefore, maximizing the expected MOO log-likelihood $\bar \ell(\theta)$ is equivalent to finding the parameter $\theta$ that minimizes the Kullback-Leibler divergence between the model's marginals $q_\theta(x_j|x_r, r)$ and the true target marginals $p(x_j|x_r, r\oplus e_j)$.This means that if the model is well-specified (i.e., there exists a unique $\theta_0$ such that $q_{\theta_0} \in \mathcal{Q}^*_{MOO}$), then $\theta^* = \theta_0$ (Theorem~\ref{thm::MOO::ID}). If the model is misspecified, $\theta^*$ is the parameter that makes $q_{\theta^*}$ the closest possible approximation to the optimal set $\mathcal{Q}^*_{MOO}$ within the given parametric family. 


\begin{theorem}[Asymptotic normality of MOO-MLE]
Assume the following conditions:
\begin{itemize}
\item[(A1)] The MLE $\theta^*$ in equation \eqref{eq::pMLE} is unique and lies in the interior of a compact parameter space $\Theta$
and satisfies the score equation $ \nabla \bar \ell(\theta^*) = 0$. 
\item[(A2)] The Hessian matrix $\bar H(\theta) = \nabla\nabla \bar \ell(\theta)  = \E[\nabla_\theta\nabla_\theta\ell(\theta|\bX_{1,\bR_1},\bR_1)]$ is invertible at $\theta = \theta^*$.
\item[(A3)] There exists a function $\Lambda(X_R, R)$ such that  $\sup_{\theta\in\Theta}\max_{j_1,j_2,j_3}\left|\frac{\partial}{\partial \theta_{j_1}} \frac{\partial}{\partial \theta_{j_2}} \frac{\partial}{\partial \theta_{j_3}}\ell(\theta|X_R,R)\right|\leq \Lambda(X_R, R)$ and $\E[|\Lambda(X_R, R)|]<\infty$. 
\end{itemize}
Then we have 
$$
\sqrt{n}(\hat \theta_n - \theta^*) \overset{d}{\rightarrow} N(0, \Sigma(\theta^*)),
$$
where $\Sigma(\theta)  = \bar H^{-1}(\theta)\E\left[(\nabla_\theta \ell(\theta|\bX_{1,\bR_1},\bR_1) )(\nabla_\theta \ell(\theta|\bX_{1,\bR_1},\bR_1) )^T\right]\bar H^{-1}(\theta)$.
\label{thm::MLE}
\end{theorem}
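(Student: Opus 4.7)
The plan is to follow the classical M-estimator / quasi-MLE framework in the style of Huber and White, since the sandwich form of $\Sigma(\theta^*)$ signals that the imputation model is not assumed to be correctly specified. I will proceed in three stages: (i) establish consistency $\hat\theta_n \overset{P}{\rightarrow} \theta^*$, (ii) Taylor-expand the score equation around $\theta^*$, and (iii) combine a CLT for the score with uniform convergence of the Hessian via Slutsky's theorem.

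For the first stage, I would show that $\bar\ell_n(\theta) := n^{-1}\ell_n(\theta)$ converges uniformly in probability to $\bar\ell(\theta)$ on the compact set $\Theta$. The uniform LLN follows from (A3) combined with compactness: integrating the third-derivative envelope $\Lambda$ along line segments in $\Theta$ yields an integrable Lipschitz envelope for $\ell(\theta|X_R,R)$ (anchored at any reference $\theta_0 \in \Theta$), which is the standard ingredient of Newey--McFadden Lemma 2.4. Uniform convergence plus uniqueness of the argmax in (A1) then yields $\hat\theta_n \overset{P}{\rightarrow} \theta^*$ via the argmax theorem.

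Since $\theta^*$ is interior to $\Theta$ by (A1) and $\hat\theta_n$ is eventually interior with probability approaching one, the first-order condition $\nabla\ell_n(\hat\theta_n) = 0$ holds w.p.a.\ one. A Taylor expansion of $\nabla\ell_n$ around $\theta^*$ gives
\begin{equation*}
0 \;=\; \frac{1}{\sqrt n}\,\nabla\ell_n(\theta^*) \;+\; \Bigl(\frac{1}{n}\,\nabla^2\ell_n(\tilde\theta_n)\Bigr)\,\sqrt n\,(\hat\theta_n-\theta^*),
\end{equation*}
for some $\tilde\theta_n$ on the segment between $\hat\theta_n$ and $\theta^*$. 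The leading term is an i.i.d.\ average of mean-zero scores by $\nabla\bar\ell(\theta^*)=0$ in (A1), so the multivariate CLT delivers $n^{-1/2}\nabla\ell_n(\theta^*) \overset{d}{\rightarrow} N(0,\Omega)$ with $\Omega = \E[(\nabla\ell(\theta^*|\bX_{1,\bR_1},\bR_1))(\nabla\ell(\theta^*|\bX_{1,\bR_1},\bR_1))^T]$; the required second moment of the score follows from (A3), continuity, and compactness of $\Theta$. For the Hessian term, (A3) again gives a uniform Lipschitz envelope for $\nabla^2\ell(\theta|\cdot)$, hence a uniform LLN: $n^{-1}\nabla^2\ell_n(\theta) \to \bar H(\theta)$ uniformly in probability. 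Combining this with $\tilde\theta_n \overset{P}{\rightarrow} \theta^*$ and continuity of $\bar H$ yields $n^{-1}\nabla^2\ell_n(\tilde\theta_n) \overset{P}{\rightarrow} \bar H(\theta^*)$, which is invertible by (A2). Solving for $\sqrt n(\hat\theta_n-\theta^*)$ and applying Slutsky's theorem delivers the stated sandwich limit.

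The main obstacle is not any single step in isolation but the verification of the two uniform convergence ingredients (a ULLN for $\ell$ and a ULLN for $\nabla^2\ell$) from the stated hypotheses. Both reduce to constructing integrable Lipschitz envelopes from (A3): by integrating $\Lambda$ along segments inside the compact $\Theta$, one bounds $\nabla^2\ell(\theta|X_R,R)$ and $\nabla\ell(\theta|X_R,R)$ in terms of $\Lambda(X_R,R)$ plus their values at a reference point, and compactness of $\Theta$ then passes integrability up to the second-derivative and score processes. Once these envelopes are in hand, the three-stage argument above is routine and yields the theorem.
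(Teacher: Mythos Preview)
Your proposal is correct and follows essentially the same classical M-estimator argument as the paper: both zero out the score at the MLE, Taylor-expand, apply the CLT to the score at $\theta^*$, and conclude via convergence of the empirical Hessian and Slutsky's theorem. Your version is somewhat more careful in that you explicitly establish consistency first and use the mean-value form of the expansion with $\tilde\theta_n$ (requiring a ULLN for the Hessian), whereas the paper expands with the Hessian evaluated at $\theta^*$ and applies the LLN only pointwise there; these are minor implementation variants of the same route.
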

Theorem \ref{thm::MLE} shows the asymptotic normality of the MLE when the MLE is a unique maximizer. 
(A1) requires that the MLE is the unique maximizer, 
which is a standard identifiability assumption. This could be violated if the parametric model $q_\theta$ is such that multiple $\theta$ values produce the same optimal marginals.
(A2) is a mild condition that requires the curvature of the MOO log-likelihood around the population MLE
to behave well. It is a standard assumption in MLE theory. 
(A3) requires a third-order derivative to be bounded, which is also a mild condition.
We assume this form to ensure algorithmic convergence and model selection consistency as well (see Theorems \ref{thm::GD} and~\ref{thm::BIC}).
(A3) ensures that the remainder terms in the Tayler expansion
around the MLE are negligible; also, under compact parameter space from (A1), this condition implies that the Hessian matrix
of the MOO log-likelihood is also
uniformly bounded in expectation. 
In Section \ref{sec::sep}, we provide an example where all the conditions are satisfied.

\subsection{Gradient ascent and its algorithmic convergence}

Numerically, we may use gradient ascent to find the MLE $\hat \theta_n$ when no closed-form solution is available. 
This can be done easily by utilizing the score function (gradient of the log-likelihood function)
\begin{equation}
S_n(\theta) = \nabla\ell_n(\theta)   = \sum_{i=1}^n\sum_{j \in \bR_i} \nabla_\theta \log q_\theta(x_j = \bX_{ij}|x_r = \bX_{i, r},  r=\bR_i\ominus e_j),
\label{eq::MOO::likelihood3}
\end{equation}
which is generally easy to compute.
Specifically, we start with an initial guess $\theta^{(0)}$ and iterate the following procedure until convergence:
\begin{equation}
\theta^{(t+1)} = \theta^{(t)}  + \xi \cdot \frac{1}{n}S_n (\theta^{(t)})
\label{eq::GD}
\end{equation}
where $\xi>0$ is an appropriate step size. Note that we divide the gradient by $n$ because the score function in equation \eqref{eq::MOO::likelihood3}
is additive over all observations, which grows at rate $O_P(n)$.

\begin{theorem}[Algorithmic convergence of gradient ascent]
Under assumptions (A1-3) in Theorem~\ref{thm::MLE}, there exists
a radius $\zeta_0>0$ and a stepsize threshold $\xi_0>0$ such that 
if the initial point  $\theta^{(0)}\in B(\hat \theta_n, \zeta_0)$
and the step size $\xi< \xi_0$,
then with a probability tending to $1$, the gradient ascent algorithm in equation \eqref{eq::GD} satisfies
$$
\|\theta^{(t)} - \hat \theta_n\| \leq \rho_\xi^t \|\theta^{(0)} - \hat \theta_n\|
$$
for some $\rho_\xi \in(0,1)$.
\label{thm::GD}
\end{theorem}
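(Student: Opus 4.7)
The plan is to run the standard local-convergence argument for gradient ascent applied to a smooth strictly concave function near its interior maximizer $\hat\theta_n$, and then lift it from the population curvature $\bar H(\theta^*)$ to the sample curvature $\frac{1}{n}H_n(\hat\theta_n)$, where $H_n(\theta) = \nabla\nabla\ell_n(\theta)$, on a high-probability event. Since $\hat\theta_n$ is a critical point of $\ell_n$ we have $S_n(\hat\theta_n)=0$, so a second-order Taylor expansion of $\tfrac{1}{n}S_n$ around $\hat\theta_n$ lets us write the update as
\begin{equation*}
\theta^{(t+1)} - \hat\theta_n = \left[I + \xi\cdot\tfrac{1}{n}H_n(\hat\theta_n)\right](\theta^{(t)} - \hat\theta_n) + \xi\, R_n(\theta^{(t)}),
\end{equation*}
where by (A3) the remainder satisfies $\|R_n(\theta)\|\le \bar\Lambda_n\,\|\theta-\hat\theta_n\|^2$ for a data-dependent constant $\bar\Lambda_n = \frac{C}{n}\sum_{i=1}^n\Lambda(\bX_{i,\bR_i},\bR_i)=O_P(1)$ obtained from Markov's inequality applied to $\E[\Lambda]<\infty$.

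The key step is to control the spectrum of the linear operator $I+\xi\cdot\tfrac{1}{n}H_n(\hat\theta_n)$. Assumption (A3), combined with the compactness of $\Theta$ from (A1), yields a uniform law of large numbers for the Hessian: $\sup_{\theta\in\Theta}\|\tfrac{1}{n}H_n(\theta)-\bar H(\theta)\|\overset{P}{\to}0$. Together with the consistency $\hat\theta_n\overset{P}{\to}\theta^*$ implied by Theorem~\ref{thm::MLE} and the continuity of $\bar H$ (also a consequence of (A3) via dominated convergence), this gives $\tfrac{1}{n}H_n(\hat\theta_n)\overset{P}{\to}\bar H(\theta^*)$. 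Since $\theta^*$ is an interior maximizer of $\bar\ell$ and $\bar H(\theta^*)$ is invertible by (A2), it is symmetric negative definite with spectrum in $[-L,-\mu]$ for some $0<\mu\le L<\infty$. Setting $\xi_0 = 1/L$, on the high-probability event $\mathcal{A}_n$ where $\|\tfrac{1}{n}H_n(\hat\theta_n)-\bar H(\theta^*)\|\le \mu/2$ and $\bar\Lambda_n\le 2\E[\Lambda]$, every $\xi<\xi_0$ makes $I+\xi\cdot\tfrac{1}{n}H_n(\hat\theta_n)$ symmetric with operator norm at most $\tilde\rho := 1-\xi\mu/2 \in (0,1)$.

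On $\mathcal{A}_n$, I would then fix the radius $\zeta_0$ small enough that $2\xi_0\E[\Lambda]\,\zeta_0 \le (1-\tilde\rho)/2$, so that inside $B(\hat\theta_n,\zeta_0)$ the linear contraction dominates the quadratic remainder and
\begin{equation*}
\|\theta^{(t+1)} - \hat\theta_n\| \le \tilde\rho\,\|\theta^{(t)} - \hat\theta_n\| + \xi\bar\Lambda_n\,\|\theta^{(t)} - \hat\theta_n\|^2 \le \rho_\xi\,\|\theta^{(t)} - \hat\theta_n\|
\end{equation*}
for some $\rho_\xi\in(\tilde\rho,1)$. In particular $\theta^{(t+1)}\in B(\hat\theta_n,\zeta_0)$, so the ball is forward-invariant and a simple induction on $t$ delivers the claimed geometric rate $\rho_\xi^t\|\theta^{(0)}-\hat\theta_n\|$. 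The main obstacle is precisely the passage from population to sample quantities: (A1)--(A3) are stated for $\bar\ell$, so lifting the strict concavity at $\theta^*$ to uniform negative-definiteness of $\tfrac{1}{n}H_n$ in a neighborhood of $\hat\theta_n$ requires both the uniform LLN for the Hessian and a Lipschitz control of the Hessian across that neighborhood, and it is exactly the third-derivative domination in (A3), together with compactness in (A1), that supplies both. Once the event $\mathcal{A}_n$ is in place, the remainder of the argument is a deterministic, standard contraction-mapping calculation.
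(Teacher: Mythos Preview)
Your argument is correct and shares the same high-level skeleton with the paper---transfer curvature from $\bar H(\theta^*)$ to the sample Hessian on a high-probability event, then run a deterministic local contraction---but the deterministic part is executed differently. The paper first establishes strong concavity of $\bar\ell_n$ on the \emph{entire} ball $B(\hat\theta_n,\zeta_0)$ (via the uniform Hessian law together with Weyl's inequality and the Lipschitz bound on $\bar H$), and then applies the textbook ``$M$-strongly-concave $+$ $L$-smooth'' inequalities to obtain $\|\theta^{(t+1)}-\hat\theta_n\|^2\le (1+M^*\xi)\|\theta^{(t)}-\hat\theta_n\|^2$; this yields explicit constants $\zeta_0=-\lambda^*_{\max}/(6\psi_3)$, $\xi_0=\min\{-3/\lambda^*_{\max},\,1/(2H_{\max})\}$ and $\rho_\xi=\sqrt{1+\tfrac{1}{3}\lambda^*_{\max}\xi}$. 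You instead linearize the update map at the single point $\hat\theta_n$, bound the spectrum of $I+\xi\cdot\tfrac{1}{n}H_n(\hat\theta_n)$ directly, and absorb everything else into a quadratic remainder controlled by $\bar\Lambda_n$. Your route is slightly more elementary and avoids the separate ``smoothness event'' $E_{3,n}$ the paper introduces, at the cost of less explicit constants. One bookkeeping point: in your choice of $\zeta_0$ the displayed inequality $2\xi_0\E[\Lambda]\,\zeta_0\le(1-\tilde\rho)/2$ is slightly off, since $1-\tilde\rho=\xi\mu/2$ scales with $\xi$; the clean condition is $\bar\Lambda_n\zeta_0\lesssim\mu$, obtained by noting that the factor $\xi$ in the remainder cancels against the $\xi$ in $1-\tilde\rho$, so $\zeta_0$ can indeed be chosen independently of $\xi$ as the theorem requires.
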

Theorem \ref{thm::GD} shows a local linear convergence \citep{boyd2004convex} of the gradient ascent algorithm in equation \eqref{eq::GD}.
The high level idea of the proof is to show that the sample MOO log-likelihood function $\ell_n(\theta)$
is locally concave within $B(\hat \theta_n, \zeta_0)$ with a probability tending to $1$.
Once we have established this result, the algorithmic convergence follows
from the conventional analysis of algorithmic convergence for a (locally) strongly concave function.
$\zeta_0$ and $\xi_0$ can be chosen as
\begin{align*}
\zeta_0 = \frac{-\lambda^*_{\max}}{6\psi_3},\qquad
 \xi_0 =\min\left\{\frac{-3}{\lambda^*_{\max}}, \frac{1}{2H_{\max}}\right\},\qquad \rho_\xi =\sqrt{1+\frac{1}{3}\lambda^*_{\max}\xi},
\end{align*}
where 
$ \lambda^*_{\max} = \lambda_{\max}( \bar H(\theta^*))<0$
is the largest eigenvalue of the Hessian matrix
$\bar H(\theta)  = \nabla\nabla \bar \ell(\theta)$ at $\theta =\theta^*$,
$ \psi_3  =  \sup_{\theta\in\Theta} \max_{j_1,j_2,j_3}\left|\frac{\partial}{\partial \theta_{j_1}}\frac{\partial}{\partial \theta_{j_2}}\frac{\partial}{\partial \theta_{j_3}}\bar \ell(\theta)\right|$
is the maximal third-order derivative,
$ H_{\max} = \sup_{\theta\in\Theta} \|\bar H(\theta)\|_2$ is the maximal $2$-norm of the Hessian matrix.
Assumption (A3) guarantees that $\psi_3, H_{\max}<\infty$.
It is also possible to obtain the speed on how fast the probability tends to $1$
since we have an explicit characterization on the events that are needed for the linear convergence in Theorem~\ref{thm::GD};
see the proof in Section \ref{sec::thm::GD} for more details.

\subsection{Optimal MOO imputation models and log-likelihood}

The MOO log-likelihood is applicable for any imputation model $q$ admitting a PDF or PMF via
$$
\ell_n(q)=  \sum_{i=1}^n \ell(q|\bX_{i,\bR_i},\bR_i)  = \sum_{i=1}^n\sum_{j \in \bR_i} \log q(x_j = \bX_{ij}|x_{r} = \bX_{i, r}, r = \bR_i\ominus e_j).
$$
When evaluating $q$ is costly but sampling is tractable,  
the MOO log-likelihood  can be approximated via a Monte Carlo approach; see Appendix \ref{sec::MCMOO} for more details.

With this definition, the MOO log-likelihood has an interesting identification property. 
Let
\begin{equation}
\bar \ell(q) = \E\{\ell(q|\bX_{1,R_1},\bR_1)\} = \E\left\{\sum_{j \in \bR_1} \log q(x_j = \bX_{1j}|x_{r} = \bX_{1, r}, r = \bR_1\ominus e_j)\right\}
\label{eq::MOO::likelihood3}
\end{equation}
be the population MOO log-likelihood 
for any imputation model $q$.
\begin{theorem}
Any imputation model $q\in  \mathcal{Q}^*_{MOO}$ maximizes
the MOO log-likelihood, i.e., 
$\inf_{q\in \mathcal{Q}^*_{MOO}} \bar \ell(q) = \sup_{q} \bar\ell(q).$
\label{thm::MOO::ID}
\end{theorem}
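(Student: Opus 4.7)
The plan is to rewrite $\bar\ell(q)$ as a sum of cross-entropy-like quantities, one for each pair $(r,j)$ with $j\in r$, and then apply the Gibbs (information) inequality to each term separately. The key observation is that $\bar\ell(q)$ depends on $q$ only through its univariate marginals $q(x_j\mid x_{r'},r')$ for $r'\in\{0,1\}^d$ and $j\in\bar{r'}$, which is exactly the object constrained by the definition of $\mathcal{Q}^*_{MOO}$.

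First, I would condition on the response pattern and on the other observed entries. Writing $r'=r\ominus e_j$, the outer expectation factors as
\begin{align*}
\bar\ell(q)
&= \sum_{r} P(R=r)\sum_{j\in r}\int \log q(x_j\mid x_{r'},r')\, p(x_r\mid R=r)\,dx_r\\
&= \sum_{r}P(R=r)\sum_{j\in r}\int\!\!\int \log q(x_j\mid x_{r'},r')\, p(x_j\mid x_{r'},r)\,dx_j\, p(x_{r'}\mid R=r)\,dx_{r'}.
\end{align*}
The inner integral in $x_j$ is, up to an additive constant depending only on the data-generating distribution, minus the Kullback--Leibler divergence between the true conditional $p(x_j\mid x_{r'},r)$ and the model marginal $q(x_j\mid x_{r'},r')$.

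Next I would invoke Gibbs' inequality pointwise in $x_{r'}$: the inner integral is maximized, for every fixed $x_{r'}$, precisely when $q(x_j\mid x_{r'},r')=p(x_j\mid x_{r'},r)=p(x_j\mid x_{r'},r'\oplus e_j)$. Because the terms corresponding to distinct pairs $(r,j)$ are governed by marginals coming from the conditionals $q(x_{\bar{r'}}\mid x_{r'},r')$ indexed by the different patterns $r'$, they can in principle be optimized independently; feasibility of the simultaneous maximizer is ensured by the fact that $\mathcal{Q}^*_{MOO}$ is nonempty (the MOOPM model in \eqref{eq::MOOPM} witnesses this). Therefore every $q\in\mathcal{Q}^*_{MOO}$ attains the pointwise maximum in each inner integral simultaneously, and hence attains $\sup_q \bar\ell(q)$; conversely, any $q$ not in $\mathcal{Q}^*_{MOO}$ strictly loses on at least one $(r,j)$-term with positive probability and is strictly suboptimal. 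This yields $\inf_{q\in\mathcal{Q}^*_{MOO}}\bar\ell(q)=\sup_q\bar\ell(q)$.

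The only real obstacle is a bookkeeping one: I have to be careful that the index $j$ ranges over $r$ when taking expectations of the empirical quantity, but over $\bar{r'}$ when invoking the $\mathcal{Q}^*_{MOO}$ condition, and match these through $r'=r\ominus e_j$ and $r=r'\oplus e_j$. A small technical point is the usual caveat that Gibbs' inequality requires the relevant integrands to be well-defined; one should assume (or note) that the true conditional $p(x_j\mid x_{r'},r'\oplus e_j)$ is absolutely continuous with respect to $q(x_j\mid x_{r'},r')$ on the support of interest so that the cross-entropy term is not $-\infty$, which is automatic for $q\in\mathcal{Q}^*_{MOO}$.
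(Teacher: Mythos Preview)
Your proposal is correct and follows essentially the same approach as the paper: decompose $\bar\ell(q)$ into a sum of cross-entropy terms indexed by $(r',j)$ with $j\in\bar{r'}$, then apply Gibbs' inequality termwise and invoke nonemptiness of $\mathcal{Q}^*_{MOO}$ (via MOOPM) to guarantee simultaneous feasibility. The only cosmetic difference is that the paper packages the change of index $r\mapsto r'=r\ominus e_j$ into an explicit reparameterization lemma before taking the expectation, whereas you carry out the same bookkeeping inline; your remark about matching $j\in r$ with $j\in\bar{r'}$ through $r=r'\oplus e_j$ is exactly that lemma.
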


Theorem~\ref{thm::MOO::ID} offers another view on the optimal imputation model $\mathcal{Q}^*_{MOO}$--any models
inside $\mathcal{Q}^*_{MOO}$ will maximize the MOO log-likelihood.
Thus, these models are optimal  from the perspective of likelihood principle. 



\subsection{Identification under missing completely at random}


In this section, we  study the behavior of MOO likelihood under missing completely at random (MCAR). 
Suppose we have a parametric model for the marginal distribution of $X$ only, i.e., $p(x) =f_\theta(x)$,
where $\theta$ is the underlying parameter. 
The MCAR requires 
$P(R=r|X=x) = P(R=r)$, i.e., $R\perp X$.
Clearly, the imputation model under MCAR is
$$
p(x_{\bar r}|x_r, r)= p(x_{\bar r}|x_r) = f_\theta(x_{\bar r}|x_r) = \frac{f_\theta(x)}{f_{\theta}(x_r)},
$$
which is the implied conditional model under the joint model $f_\theta(x)$.


If such parametric model is correct, i.e., $p(x) = f_\theta(x)$, and true missing mechanism is MCAR, then this model also maximizes the MOO log-likelihood.
\begin{theorem}[Recovery under MCAR]
\label{thm::mcar}
Suppose the true joint distribution that generates our data is  $p(x) = f_{\theta^*}(x)$ for some unknown parameter $\theta^*$
and the missingness is MCAR. 
Then we have the following result:
$$
\bar \ell(f_{\theta^*}) = \sup_q \bar \ell(q).
$$
\end{theorem}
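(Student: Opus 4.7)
The plan is to reduce Theorem~\ref{thm::mcar} to Theorem~\ref{thm::MOO::ID} by showing that the parametric model, viewed as an imputation model, lies in the optimal set $\mathcal{Q}^*_{MOO}$. Since the notation $f_{\theta^*}$ refers to a joint density on $X$, I first need to interpret it as an imputation model. The natural choice is
\[
q^*(x_{\bar r}\mid x_r,r) = f_{\theta^*}(x_{\bar r}\mid x_r) = \frac{f_{\theta^*}(x)}{f_{\theta^*}(x_r)},
\]
which is the conditional distribution induced by the joint parametric model.

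Next I would verify the defining equality of $\mathcal{Q}^*_{MOO}$ in equation~\eqref{eq::MOO::opt::set}, namely that for every pattern $r$ and every $j\in \bar r$,
\[
q^*(x_j\mid x_r,r) = p(x_j\mid x_r,r\oplus e_j).
\]
The left-hand side is obtained by marginalizing $f_{\theta^*}(x_{\bar r}\mid x_r)$ over the remaining unobserved coordinates, which gives $f_{\theta^*}(x_j\mid x_r)$. For the right-hand side, MCAR supplies $R\perp X$, so conditioning on the pattern $r\oplus e_1$ is irrelevant to the distribution of $X$; combined with $p(x)=f_{\theta^*}(x)$ this yields $p(x_j\mid x_r,r\oplus e_j) = p(x_j\mid x_r) = f_{\theta^*}(x_j\mid x_r)$. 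The two sides therefore coincide, so $q^*\in \mathcal{Q}^*_{MOO}$.

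Having placed $q^*$ in the optimal set, I would invoke Theorem~\ref{thm::MOO::ID}, which states that every element of $\mathcal{Q}^*_{MOO}$ attains $\sup_q \bar\ell(q)$. This immediately yields $\bar\ell(f_{\theta^*}) = \sup_q\bar\ell(q)$, completing the argument.

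The only non-trivial step is the marginalization, but it is routine: the defining form of $q^*$ makes the marginal over any subset of $\bar r$ equal to the corresponding conditional of the joint $f_{\theta^*}$. The MCAR hypothesis does all the remaining work by eliminating dependence on the pattern $r\oplus e_j$. No compactness, differentiability, or identifiability conditions from Section~\ref{sec::learning} are needed, since the result is purely about population-level optimality of the implied imputation model rather than estimation.
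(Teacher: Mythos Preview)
Your proposal is correct and follows essentially the same route as the paper: interpret $f_{\theta^*}$ as the imputation model $f_{\theta^*}(x_{\bar r}\mid x_r)$, use MCAR ($R\perp X$) together with $p(x)=f_{\theta^*}(x)$ to verify $q^*(x_j\mid x_r,r)=f_{\theta^*}(x_j\mid x_r)=p(x_j\mid x_r,r\oplus e_j)$ so that $f_{\theta^*}\in\mathcal{Q}^*_{MOO}$, and then invoke Theorem~\ref{thm::MOO::ID}. (Minor typo: in your third paragraph ``$r\oplus e_1$'' should be ``$r\oplus e_j$''.)
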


Theorem~\ref{thm::mcar} shows that the correct parametric model under MCAR indeed 
maximizes the MOO log-likelihood. 
Since many modern imputation models
are trained under the assumption of MCAR \citep{yoon2018gain},
the MOO criterion offers an alternative objective 
in the training process.
Informally, Theorem~\ref{thm::mcar} also implies that
when all data are complete and we are just using masking to 
train the full model (this occurs in training a large language model or image model; \citealt{devlin2019bert,vincent2010stacked}), 
the true generative model maximizes the MOO log-likelihood.
Thus, maximizing the masked log-likelihood can be a method for learning the data-generating model.

\begin{remark}[MOO likelihood and MAR]	\label{rm::mar}
If we assume the joint model to be $f_\theta(x)$ and missing mechanism is missing-at-random (MAR), i.e., $P(R=r|X=x)= P(R=r|X_r = x_r) $, 
the imputation model will be 
$$
p(x_{\bar r}|x_r, r)= p(x_{\bar r}|x_r) = f_\theta(x_{\bar r}|x_r),
$$
which is similar to MCAR.
Suppose the data are from $f_{\theta^*}(x)$ and MAR is correct,
one may be wondering if the imputation model $f_{\theta^*}$ maximizes  the MOO log-likelihood?
Unfortunately, the answer is no unless the missingness is monotone (see Section \ref{sec::mono} and Equation \eqref{eq:MOOBL2}).
The major problem is that for a pattern $R=r$ and we attempt to impute $x_j$ where $j \in \bar r$,
the optimal imputation model under MOO is 
$$
q(x_j|x_r, R=r) = p(x_j|x_r, R=r \oplus e_j).
$$
Under MAR, the optimal imputation model is $p(x_j|x_r) = f_{\theta^*}(x_j|x_r)$.
Thus, the MOO optimal model
\begin{align*}
p(x_j|x_r, R=r \oplus e_j) & = \frac{p(x_j,x_r, R=r \oplus e_j)}{p(x_r, R=r \oplus e_j)}\\
& = \frac{P(R=r\oplus e_j|x_r, x_j) f_{\theta^*}(x_r,x_j)}{\int  P(R=r\oplus e_j|x_r, x_j) f_{\theta^*}(x_r,x_j)dx_j}.
\end{align*}
This quantity will be the same as $f_{\theta^*}(x_j|x_r)$ only if 
$$
f_{\theta^*}(x_r) P(R=r\oplus e_j|x_r, x_j)  = \int  P(R=r\oplus e_j|x_r, x_j) f_{\theta^*}(x_r,x_j)dx_j,
$$
which is generally not the case because the left-hand-side depends on $x_j$ while the right-hand-side does not. 
\end{remark}



\subsection{Selecting imputation models with MOO likelihood}

The likelihood function in equation \eqref{eq::MOO::likelihood2} can be used as a selection criterion 
for different imputation models
as long as we can evaluate $q$ easily. 
Specifically, suppose we have $q_1,\cdots, q_K$ and we want to select an imputation model. 
We compute their MOO log-likelihoods as in equation \eqref{eq::MOO::likelihood2}:
$$
\ell_n(q_k) = \sum_{i=1}^n\sum_{j \in R_i} \log q_k(x_j = \bX_{ij}|x_{r} = \bX_{i, r}, r = \bR_i\ominus e_j)
$$
and choose the model that has the highest log-likelihood. 
Note that the above quantity is a sample analogue of equation \eqref{eq::MOO::likelihood3}.
However, this  suffers from overfitting problem because 
a complex model tends to have a higher likelihood, 
so we should not directly use $\ell_n(q_k)$ for comparing different models.
We need to add a penalization/regularization such as the AIC \citep{akaike1974aic} or BIC \citep{schwarz1978bic} to $\ell_n(q_k)$ for model selection. 

For the case of BIC, the MOO criterion is 
\begin{equation}
\ell_{n, BIC}(q_k) = \ell_n(q_k) - \frac{1}{2} d(q_k) \log n,
\label{eq::BIC}
\end{equation}
where $d(q_k) $ is the number of parameters (dimension of free parameters) of the imputation model $q_k$.
We choose the model $q_{\hat k}$ via $\hat k = {\sf argmax}_k\,\, \ell_{n, BIC}(q_k)$.

When the models being compared are nested and the true model belongs to one of them, the BIC 
can select the correct model asymptotically. 
\begin{theorem}[Model selection consistency]
Suppose we have $K$ nested models $\mathcal{Q}_1\subset \mathcal{Q}_2\subset \cdots\subset \mathcal{Q}_K$
such that each model $\mathcal{Q}_k = \{q_{\theta_{[k]}}: \theta_{[k]} \in \Theta_{[k]}\subset \R^{d_k}\}$ is indexed by  $\theta_{[k]}$
with $d_k$ free parameters and $d_1<d_2<\cdots<d_K$. 
Assume that 
\begin{itemize}
\item[(AS)]  conditions (A1-3) hold for every model $\mathcal{Q}_k$ and 
\item[] there exists $k^*$ such that
\begin{itemize}
\item[(B1)] $\mathcal{Q}_k\cap \mathcal{Q}^*_{MOO} =\emptyset$ for all $k=1,\cdots, k^*-1$.
\item[(B2)] For model $\mathcal{Q}_{k^*}$, there exists a parameter $\theta^*_{[k^*]}$ such that $q_{\theta^*_{[k^*]}}\in \mathcal{Q}^*_{MOO}$. 
\end{itemize}
\end{itemize}
Let $q_k \in \mathcal{Q}_k$ be the model corresponding to the MLE under model $\mathcal{Q}_k$,
i.e., 
$$
q_k  = q_{\hat \theta_{[k]}},\qquad \hat \theta_{[k]} = {\sf argmax}_{\theta_{[k]} \in \Theta_{[k]}} \ell_n(q_{\theta_{[k]}}),
$$
and
$\ell_{n, BIC}(q_k)$ be the BIC  in equation \eqref{eq::BIC}
and $\hat k = {\sf argmax}_k\,\, \ell_{n, BIC}(q_k)$ be the model selected by the BIC. 
Then 
$
P\left(\hat k = k^*\right) {\rightarrow}1
$
as $n\rightarrow \infty$.
\label{thm::BIC}
\end{theorem}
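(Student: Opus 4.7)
The plan is to follow the standard BIC consistency framework by showing that for each $k \neq k^*$, $\ell_{n,BIC}(q_{k^*}) - \ell_{n,BIC}(q_k) \to +\infty$ in probability, after which a union bound over the fixed number $K$ of candidate models completes the argument. The analysis splits into the underfitting case ($k < k^*$) and the overfitting case ($k > k^*$), which demand quite different rate arguments.

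For underfitting, (B1) together with Theorem \ref{thm::MOO::ID} and the strict propriety of the log-likelihood scoring rule imply that $\Delta_k := \bar\ell(q_{\theta^*_{[k^*]}}) - \bar\ell(q_{\theta^*_{[k]}}) > 0$. Under (A1) the MLE $\hat\theta_{[k]}$ is consistent for $\theta^*_{[k]}$ (a by-product of Theorem \ref{thm::MLE}), and a uniform law of large numbers on the compact $\Theta_{[k]}$, available from the envelope implicit in (A3), yields $n^{-1}\ell_n(q_k) \overset{P}{\rightarrow} \bar\ell(q_{\theta^*_{[k]}})$, and likewise for $k^*$. Hence $\ell_n(q_{k^*}) - \ell_n(q_k) = n\Delta_k + o_P(n)$, which dwarfs the penalty gap $\tfrac{1}{2}(d_{k^*} - d_k)\log n = O(\log n)$.

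For overfitting, the population gap vanishes: since $\mathcal{Q}_{k^*} \subset \mathcal{Q}_k$ and $\mathcal{Q}_{k^*}$ contains some $q^* \in \mathcal{Q}^*_{MOO}$ by (B2), both $q_{\theta^*_{[k]}}$ and $q_{\theta^*_{[k^*]}}$ lie in $\mathcal{Q}^*_{MOO}$ by Theorem \ref{thm::MOO::ID}. Because the MOO log-likelihood depends only on the marginals $q(x_j \mid x_r, r)$ that define $\mathcal{Q}^*_{MOO}$, it coincides sample-wise at these two points: $\ell_n(q_{\theta^*_{[k]}}) = \ell_n(q_{\theta^*_{[k^*]}})$. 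The crux is then to show $\ell_n(q_k) - \ell_n(q_{\theta^*_{[k]}}) = O_P(1)$ via a second-order Taylor expansion around $\theta^*_{[k]}$: the linear term is $\nabla\ell_n(\theta^*_{[k]})^T(\hat\theta_{[k]} - \theta^*_{[k]}) = O_P(\sqrt n) \cdot O_P(n^{-1/2}) = O_P(1)$ by the CLT and Theorem \ref{thm::MLE}, the quadratic term is $O_P(n^{-1/2}) \cdot O_P(n) \cdot O_P(n^{-1/2}) = O_P(1)$, and the cubic remainder is $o_P(1)$ by the (A3) envelope. Subtracting the analogous expansion at $\theta^*_{[k^*]}$ gives $\ell_n(q_k) - \ell_n(q_{k^*}) = O_P(1)$, while the penalty $\tfrac{1}{2}(d_k - d_{k^*})\log n$ diverges.

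The main obstacle is the overfitting step, where one must argue that the likelihood gap stays bounded rather than growing linearly. This hinges on the larger model $\mathcal{Q}_k$ having a unique population maximizer that agrees with the optimum from $\mathcal{Q}_{k^*}$ on all marginals needed by the MOO likelihood. If $q^*$ were reached by a continuum of $\theta_{[k]}$ values — a real danger in genuinely nested parametric families — then Theorem \ref{thm::MLE} would fail for $\mathcal{Q}_k$ and the $O_P(n^{-1/2})$ control on $\hat\theta_{[k]} - \theta^*_{[k]}$ would break. Assumption (A1) applied through (AS) to every nested model is exactly what rules this out, at the cost of an implicit identifiability condition on how the families are parametrized. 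Once this is granted, the two rate comparisons combine through the union bound to yield $P(\hat k = k^*) \to 1$.
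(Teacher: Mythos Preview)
Your proposal is correct and follows essentially the same two-case strategy as the paper: for $k<k^*$ use a uniform law of large numbers (the paper's Lemma~\ref{lem::ERM}) to extract the linear-in-$n$ gap $n\Delta_k+o_P(n)$, and for $k>k^*$ use a quadratic Taylor expansion together with the $\sqrt n$-rate from Theorem~\ref{thm::MLE} to bound $\ell_n(q_k)-\ell_n(q_{k^*})=O_P(1)$, which the $\log n$ penalty dominates. The only cosmetic difference is that the paper expands around the sample MLE $\hat\theta_{[k]}$ (so the linear term vanishes exactly) whereas you expand around the population maximizer $\theta^*_{[k]}$ and bound the linear term via the CLT; both routes yield the same $O_P(1)$ control.
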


Theorem~\ref{thm::BIC} shows that the BIC  has model selection consistency for nested models. 
A technical challenge of this proof is that since MOO likelihood is not the conventional likelihood function,
we cannot apply the Wilk's theorem \citep{wilks1938large}. So we need some extra conditions to ensure model selection consistency. 
Condition (AS) is needed so that the MLE of each model $q_k$ is well-behaved.
This condition implies two useful results. First, we have a uniform bound 
$\sup_{\theta_{[k]}\in\Theta_{[k]}}\frac{1}{n}\left|\ell_n(q_{\theta_{[k]}}) - \E[\ell_n(q_{\theta_{[k]}})]\right|\overset{P}{\rightarrow}0$,
which will be useful in controlling the errors when $k<k^*$ .
Secondly, the asymptotic normality of each MLE $q_k$ will lead to a finite-order
stochastic fluctuations for $k>k^*$, which eventually leads to a control over the empirical MOO log-likelihood value. 
Condition (B1) means that the optimal imputation model under MOO likelihood does not 
appear before model $\mathcal{Q}_{k^*}$. Condition (B2) states that $\mathcal{Q}_{k^*}$ is the minimal model 
that contains the optimal imputation model. Thus, $\mathcal{Q}_{k^*}$ can be interpreted as the `correct' model
for imputation under MOO likelihood.
Note that (B2) can be relaxed so that we do not need $q_{\theta^*_{[k^*]}}\in \mathcal{Q}^*_{MOO}$
but instead, we require $q_{\theta^*_{[k^*]}}$ is the closest model to $\mathcal{Q}^*_{MOO}$
under KL divergence
and for larger models, $q_{\theta^*_{[k]}}$ with $k>k^*$,
this gap in the KL divergence is not improved. 
The nested model assumptions (B1-2) are common assumptions for BIC to be consistent \citep{nishii1984asymptotic, yang2005can}.

\section{Masking in monotone missing data}	\label{sec::mono}

The monotone missing data is a special scenario where the variables are ordered and missingness occurs
in a way that if one variable is missing, all subsequent variables are missing \citep{little2019statistical}.
This occurs  frequently in health-related data due to participants dropout of the study. 

In the monotone missing data scenario, the response vector $R\in\{0,1\}^d$ can be summarized 
by $T = \sum_{j} R_j$, the total number of observed variables,
because $R_j = 0$ will imply $R_k = 0$ for all $k> j$. 
In this case, the observed variable
$X_R = (X_j : j \in R) = X_{\leq T} = (X_j: j=1,\cdots, T)$.
The observed data distribution is $p(x_{\leq t}, t)$
and the extrapolation distribution is $p(x_{>t}|x_{\leq t}, t).$

Because of the monotone missing pattern,
we can no longer mask any variable arbitrarily 
otherwise we will end up with patterns that do not exist in monotone missing data. 
For MOO, there are two  ways of performing masking while
maintaining the monotone missingness.

{\bf MOO with latest case (MOOLC).}
Given an observation $(x_{\leq t},t)$, the MOO-LC
just mask the latest variable $x_t$. 
Namely, after masking, the data becomes $(x_{\leq t-1}, t-1)$. 
So the MOO procedure will only mask one variable for every individual
and attempt to impute the masked variable. 

{\bf MOO with blocking (MOOBL).}
Given an observation $(x_{\leq t},t)$, the MOOBL
allows masking any observed variable $j =1,\cdots, t$.
If variable $x_j$ is masked,
we block out all subsequent variables (pretending subsequent variables are missing)
and impute $x_j$ via $q(x_j|x_{\leq j-1}, T= j-1).$
Note that we only impute $x_j$; the other blocked variables $x_{j+1},\cdots, x_{t}$ will not be imputed.

\begin{example}
Suppose we have an observation $X = (1.3, 2.5, 1.5, 3.1, \texttt{NA})$, which implies $T=4$ ($R=11110$).
The MOOLC has only one possible masking scenario: masking  $X_4= 3.1$
and attempt to impute $X_4$ with $p(x_4|X_1=1.3, X_2=2.5, X_3=1.5, T=3).$
In the case of MOOBL, we have 4 possible ways of masking: masking $X_1,X_2,X_3,X_4$, separately.
Suppose we mask $X_3=1.5$, we will  block $X_4=3.1$ at the same time
and impute $X_3$ with $p(x_3|X_1=1.3, X_2=2.5, T=2)$ and compare the imputed value to
the observed value. Note that we do not impute $X_4$--we just ignore (block) it. 
MOOBL will apply this masking to all four observed variables and
use summation over all losses.
\end{example}

%

The MOOLC and MOOBL lead to different optimal imputation models. 
Their optimal models in the form of equation \eqref{eq::MOO::opt::set} are
\begin{equation}
\mathcal{Q}^*_{MOOLC} =\{q: q(x_{t+1}|x_{\leq t}, t) = p(x_{t+1}|x_{\leq t},T=t+1), \quad t=0,\cdots, d-1\}
\label{eq::MOOLC::opt::set}
\end{equation}
and 
\begin{equation}
\mathcal{Q}^*_{MOOBL} =\{q: q(x_{t+1}|x_{\leq t}, t) = p(x_{t+1}|x_{\leq t},T\geq t+1), \quad t=0,\cdots, d-1\},
\label{eq::MOOBL::opt::set}
\end{equation}
respectively.
Note that the MOO procedure does not constrain any imputation on $x_{t+2},\cdots, x_d$
when the the last observed variable is at $T=t$.

It is clear why $\mathcal{Q}^*_{MOOLC}$ is the optimal imputation model
because the only constraint from MOOLC on the imputation model $q(x_{t+1}|x_{\leq t},T=t)$ 
occurs when we observe $x_1,\cdots, x_{t+1}$ and $T=t+1$. The masking of this case will lead to 
an imputation of $x_{t+1}$ using  $x_1,\cdots, x_t$  and $T=t$.
For the case of $\mathcal{Q}^*_{MOOBL}$, the imputation model $q(x_{t+1}|x_{\leq t}, t)$
will be used in all of the following cases:
we observed $x_1,\cdots, x_s$ ($T=s$) with $s\geq t+1$. 
When we mask  $x_{t+1}$, the MOOBL procedure will require imputing $x_{t+1}$ with $x_1,\cdots, x_t$ and $T=t$,
which is the imputation model $q(x_{t+1}|x_{\leq t}, t)$.


Here is an interesting connection to the classical nearest-case missing value assumption (NCMV)
and the available-case missing value assumption (ACMV; \citealt{molenberghs1998monotone}).
The NCMV and ACMV corresponds to the imputation models such that 
for every $t$ and $\tau\geq t$, 
\begin{equation}
\begin{aligned}
q_{NCMV}(x_{\tau+1}|x_{\leq \tau}, t)  &=p(x_{\tau+1}|x_{\leq \tau}, T= \tau+1),\\
q_{ACMV}(x_{\tau+1}|x_{\leq \tau}, t)  &=p(x_{\tau+1}|x_{\leq \tau}, T\geq \tau+1).
\end{aligned}
\label{eq::acmv}
\end{equation}
The above two imputation models can be viewed as a sequential imputation procedure that
when we observed only up to time point $T=t$ (i.e., $x_{\leq t}$ is observed), we sequentially impute $x_{t+1},x_{t+2},\cdots, x_d$
from the conditional distribution $q(x_{\tau+1}|x_{\leq \tau}, T=t)$ for $\tau=t,t+1,\cdots, d-1$.
At each $\tau$,
the NCMV requires that such imputation model only uses individuals where only the set of variables $(x_1,\cdots, x_{\tau+1})$ is observed and
the subsequent variables are missing (i.e., $T=\tau+1$). 
The ACMV, on the other hand, uses any individual as long as the vector $(x_1,\cdots, x_{\tau+1})$
is observed.

\begin{proposition}
The imputation model implied by NCMV belongs to $\mathcal{Q}^*_{MOOLC}$, i.e., $q_{NCMV} \in \mathcal{Q}^*_{MOOLC}$.
The imputation model implied by ACMV belongs to $\mathcal{Q}^*_{MOOBL}$, i.e., $q_{ACMV}\in \mathcal{Q}^*_{MOOBL}$.
\label{prop::acmv}
\end{proposition}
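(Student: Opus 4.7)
The plan is to verify membership by directly unwrapping definitions, which turns out to be almost immediate. Both $\mathcal{Q}^*_{MOOLC}$ and $\mathcal{Q}^*_{MOOBL}$ impose conditions only on the one-step-ahead marginal $q(x_{t+1}\mid x_{\leq t}, t)$ at each pattern $T=t$; they do not restrict the joint dependence among the imputed variables $(x_{t+1}, \ldots, x_d)$. The sequential rules in \eqref{eq::acmv} defining $q_{NCMV}$ and $q_{ACMV}$ specify the full joint extrapolation density as a product of one-step-ahead conditionals, so all I need is to check that these one-step-ahead marginals match.

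First I would write the NCMV joint extrapolation density explicitly as
$$q_{NCMV}(x_{>t}\mid x_{\leq t}, t) = \prod_{\tau=t}^{d-1} p(x_{\tau+1}\mid x_{\leq \tau}, T=\tau+1),$$
and then observe that the one-step-ahead marginal $q_{NCMV}(x_{t+1}\mid x_{\leq t}, t)$ is obtained by integrating out $x_{t+2}, \ldots, x_d$ in reverse order. Each factor with $\tau \geq t+1$ is a proper conditional density in $x_{\tau+1}$ (integrating to one), so the marginalization collapses the product to its $\tau=t$ factor, yielding $q_{NCMV}(x_{t+1}\mid x_{\leq t}, t) = p(x_{t+1}\mid x_{\leq t}, T=t+1)$. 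This is precisely the defining condition of $\mathcal{Q}^*_{MOOLC}$ in \eqref{eq::MOOLC::opt::set}, so $q_{NCMV} \in \mathcal{Q}^*_{MOOLC}$.

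Second, the identical argument applied to $q_{ACMV}$ (with the sequential factors $p(x_{\tau+1}\mid x_{\leq \tau}, T \geq \tau+1)$) yields $q_{ACMV}(x_{t+1}\mid x_{\leq t}, t) = p(x_{t+1}\mid x_{\leq t}, T \geq t+1)$, matching \eqref{eq::MOOBL::opt::set}, so $q_{ACMV} \in \mathcal{Q}^*_{MOOBL}$.

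There is no real obstacle; the result is essentially a bookkeeping check. The only subtlety worth flagging explicitly is the indexing mismatch: the NCMV/ACMV specifications in \eqref{eq::acmv} are indexed by $\tau \geq t$ and describe an entire sequential sweep, while the MOO optimum sets impose a single marginal constraint per $t$. The $\tau=t$ slice of the sequential rule is exactly what MOO constrains, and the subsequent $\tau > t$ factors are irrelevant to MOO because they concern variables that are either blocked (in MOOBL) or not reached through a single mask (in MOOLC).
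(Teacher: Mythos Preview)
Your proposal is correct and follows essentially the same approach as the paper's proof: both observe that the MOO optimum sets constrain only the one-step-ahead marginal $q(x_{t+1}\mid x_{\leq t},t)$, and then check that setting $\tau=t$ in the sequential NCMV/ACMV specifications \eqref{eq::acmv} delivers exactly that marginal. Your writeup is slightly more explicit in spelling out the product form and the reverse-order marginalization that collapses all $\tau>t$ factors, whereas the paper simply states ``satisfies this constraint with $\tau=t$''; but this is a matter of presentation, not a different argument.
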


Proposition \ref{prop::acmv} implies a very interesting result about the missing-at-random (MAR) and MOO
log-likelihood. In the monotone missing data problem,
the imputation model under MAR maximizes the log-likelihood function under MOOBL (in contrast to Remark \ref{rm::mar}).
To see this,
the population MOO log-likelihood in equation \eqref{eq::MOO::likelihood3} 
under monotone missing data is
\begin{equation*}
\bar \ell_{MOOBL}(q)=  \E\left\{\sum_{j \leq \bT_1} \log q(x_j = \bX_{1j}|x_{<j} = \bX_{1, <j}, t = j-1)\right\},
\end{equation*}
where $(\bX_1,\bT_1)$ are identically distributed as $(X,T)$.
By Theorem~\ref{thm::MOO::ID}, any imputation model $q\in \mathcal{Q}^*_{MOOBL}$
maximizes $\bar \ell_{MOOBL}(q)$.
Using the fact that under monotone missing data, MAR and ACMV are equivalent \citep{molenberghs1998monotone},
Proposition \ref{prop::acmv} implies that the imputation model based on MAR
will maximize $\bar \ell_{MOOBL}(q)$. Namely, we can revise Theorem \ref{thm::mcar} to:  
under monotone missing data,
if the true joint distribution that generates our data is $f_{\theta^*}(x) = p(x)$ for some $\theta^*$ and missing mechanism is MAR,
then 
\begin{equation}
\bar \ell_{MOOBL}(f_{\theta^*}) = \sup_q\bar \ell_{MOOBL}(q).
\label{eq:MOOBL2}
\end{equation}
Thus, in the monotone missing data case,
we may use the sample version of $\bar \ell_{MOOBL}(q)$  as an objective function 
to learn the optimal imputation model under MAR. 

\section{Empirical studies}\label{sec::empirical}


\subsection{Simulations}
We evaluate the proposed masking criteria using five real datasets from the UCI Machine Learning Repository, with varying dimensions summarized in Table \ref{tab::sim_data}. Variables that are numeric are kept and standardized to ensure the loss values are of the same order. Missing values are introduced under the missing completely at random (MCAR) mechanism, with each data entry independently set to missing with probability 0.3.
We adopt a cross-fitting strategy (see also \citealp{zhao2025imputationpoweredinference}) to separate model training from evaluation and thereby prevent overfitting. The data are partitioned into $K=5$ folds. For each fold $k$, we train the imputation models on the remaining $K-1$ folds and apply the trained imputers to that fold for evaluation using masking criteria. 

We consider a collection of deterministic and stochastic imputation methods, including mean imputation, the Expectation-Maximization (EM) algorithm, nearest-neighbor hot deck (NN HD; \citealt{little2019statistical}), complete-case missing value (CCMV; \citealt{tchetgen2018discrete}), Markov missing graph (MMG; \citealp{yang2025markovmissinggraphgraphical}), and multiple imputation by chained equations (MICE; \citealt{van2011mice}). The EM, CCMV, and MMG methods fit Gaussian models, where MMG uses the Gaussian-MMG specification. Because MICE lacks an explicit OOS property, we append each test observation to the training data and rerun MICE to impute the masked entries using the ``mice'' function in R.
Finally, we compute the naive MOO risk using the squared loss function, the MOORT risk using the Kolmogorov distance, and the MOOEN risk. Each simulation is repeated $n_{\mathrm{sim}} = 100$ times. We set the number of neighbors to $k = 10$ for the nearest neighbor hot deck and the number of multiple imputations to $M=20$ for all three criteria (MOO, MOORT, MOOEN). 
We do not use the MOO log-likelihood since the log-likelihood function for some imputation models 
such as MICE and NN HD
are either not well-defined or would require a non-trivial amount of Monte Carlo approximations.

\begin{table}
\centering
\caption{Datasets with number of observations $n$ and number of variables $p$ in the simulation.}
\begin{tabular}{|l|c|c|}
\hline
\text{data set} & \text{n} & \text{p} \\ \hline
Iris & 150 & 4 \\ \hline
Seeds & 210 & 6 \\ \hline
Yacht & 308 & 7 \\ \hline
Concrete Compression & 1030 & 9 \\ \hline
Red Wine Quality& 1599 & 11 \\ \hline
\end{tabular}
\label{tab::sim_data}
\end{table}

To illustrate the relative performance of imputation methods under different masking criteria, we introduce the \emph{Prediction-Imputation (PI) diagram} (see Figure \ref{fig:PI_sim}). In this diagram, the x-axis is the MOO risk, which measures the prediction error of each imputation method, and the y-axis is the imputation risk under MOORT or MOOEN, which measures the departure of the imputation model from $\mathcal{Q}^*_{MOO}$. Methods near the lower-left region of the diagram achieve better performance in terms of prediction and imputation. 
Across datasets, MICE and EM mostly occupy this favorable region, which suggests good prediction and imputation performance. In contrast, mean imputation generally performs the worst (as expected), while CCMV, MMG, and hot deck method show intermediate performance that vary by dataset. Beyond comparing methods, the PI diagrams also reveal a clear distinction between prediction and imputation. High predictive accuracy does not necessarily correspond to distributional closeness to the optimal imputation model. For instance, mean imputation achieves a relatively small MOO prediction risk on datasets such as Concrete and Red Wine because the imputed values are close to true conditional mean, yet it exhibits the largest imputation risk due to its inability to account for the uncertainty. This also reflects the aforementioned limitation of the MOO criterion, which may favor deterministic imputation method.
In Appendix \ref{sec::appendix_sim}, we provide additional analysis on these data by comparing the MOO criteria
to the `oracle' imputation performance.

\begin{figure}
    \centering
    \includegraphics[height=3in]{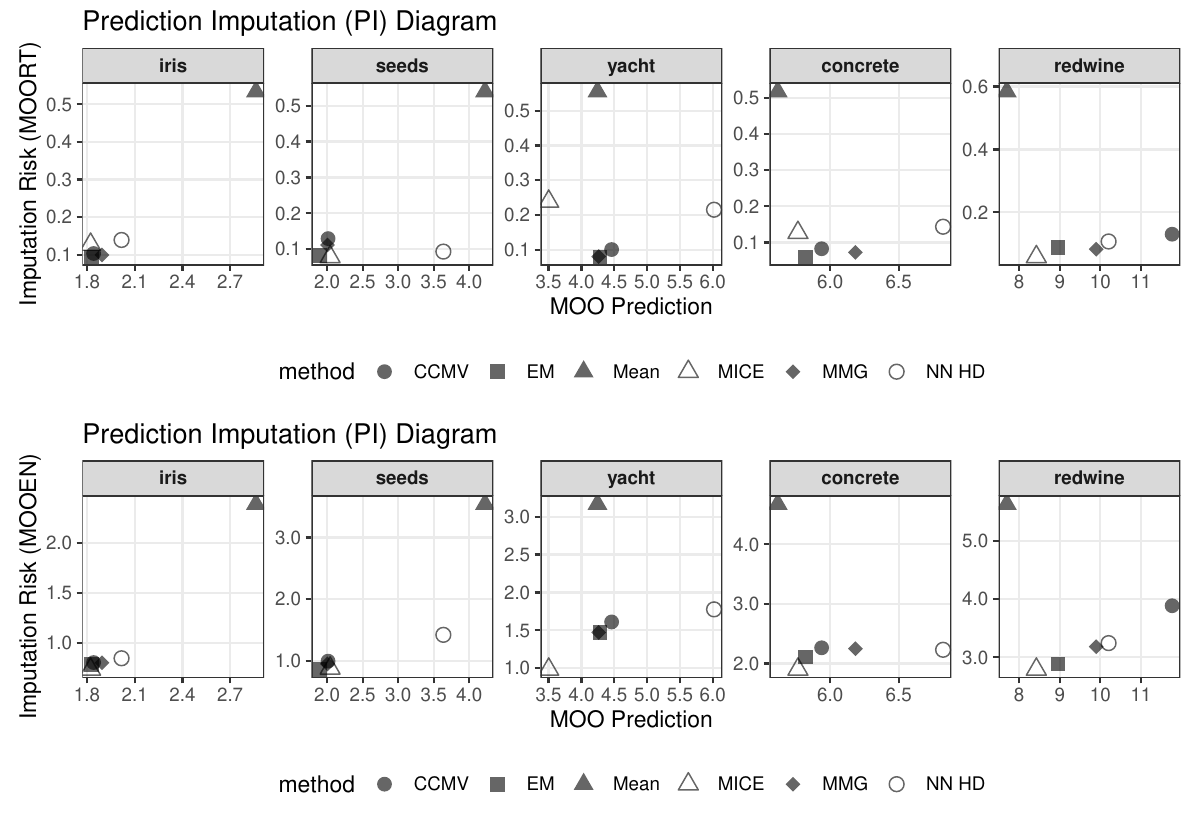}
    \caption{Prediction-Imputation (PI) Diagram comparing imputation methods (CCMV, EM, mean imputation, MICE, MMG, and nearest-neighbor hot deck) under the MOO, MOORT, and MOOEN criteria across simulation datasets. Methods closer to the lower-left region indicate lower risks and better performance.}
    \label{fig:PI_sim}
\end{figure}




\subsection{Real data: NACC data}

\begin{figure}
    \centering
    \includegraphics[height=2in]{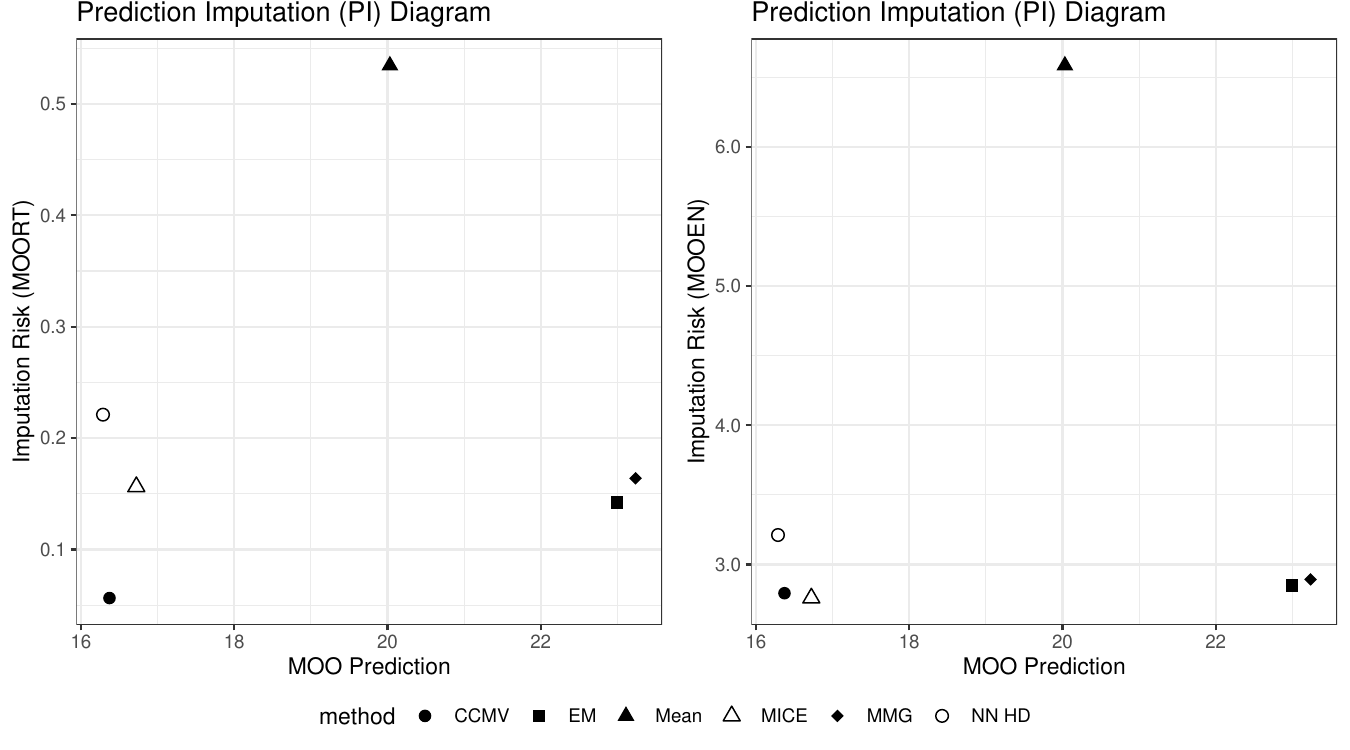}
    \caption{Prediction-Imputation (PI) Diagram comparing imputation methods (CCMV, EM, mean imputation, MICE, MMG, and nearest-neighbor hot deck) under the MOO, MOORT, and MOOEN criteria on the NACC dataset for the DIGFORCT variable. }
    \label{fig:PI_nacc}
\end{figure}

We further illustrate the masking criteria using the National Alzheimer's Coordinating Center (NACC) dataset\footnote{\url{https://naccdata.org/}}  from the years 2005 to 2024. The NACC data include longitudinal cognitive test scores collected across multiple visits. To focus on a well-defined neuropyschological outcome, we analyze the Number Span Test-Forward (DIGFORCT), which is a numeric measure of memory and attention ranging from 0 to 25. We restrict the analysis to the first five visits, resulting in 2777 participants with approximately 30\% missing entries. To avoid overfitting, we use the cross-fitting with $K=3$ folds. We use the same number of multiple imputation $M=20$ for the three MOO methods.  Because the test score is discrete, we use the EM algorithm with a mixture of binomial product experts model \citep{suen2023modelingmissingrandomneuropsychological} and the MMG method with the mixture-of-product MMG specification. The graph structure for MMG is estimated using graphical lasso applied to the complete cases. Figure~\ref{fig:PI_nacc} shows that CCMV has the lowest imputation risks under both MOORT and MOOEN while maintaining a small prediction risk follow by MICE and hot deck. MMG and EM perform comparably well in terms of imputation but show a higher prediction risk, whereas mean imputation  exhibits the largest imputation risk. 

\begin{figure}
    \centering
    \includegraphics[height=3in]{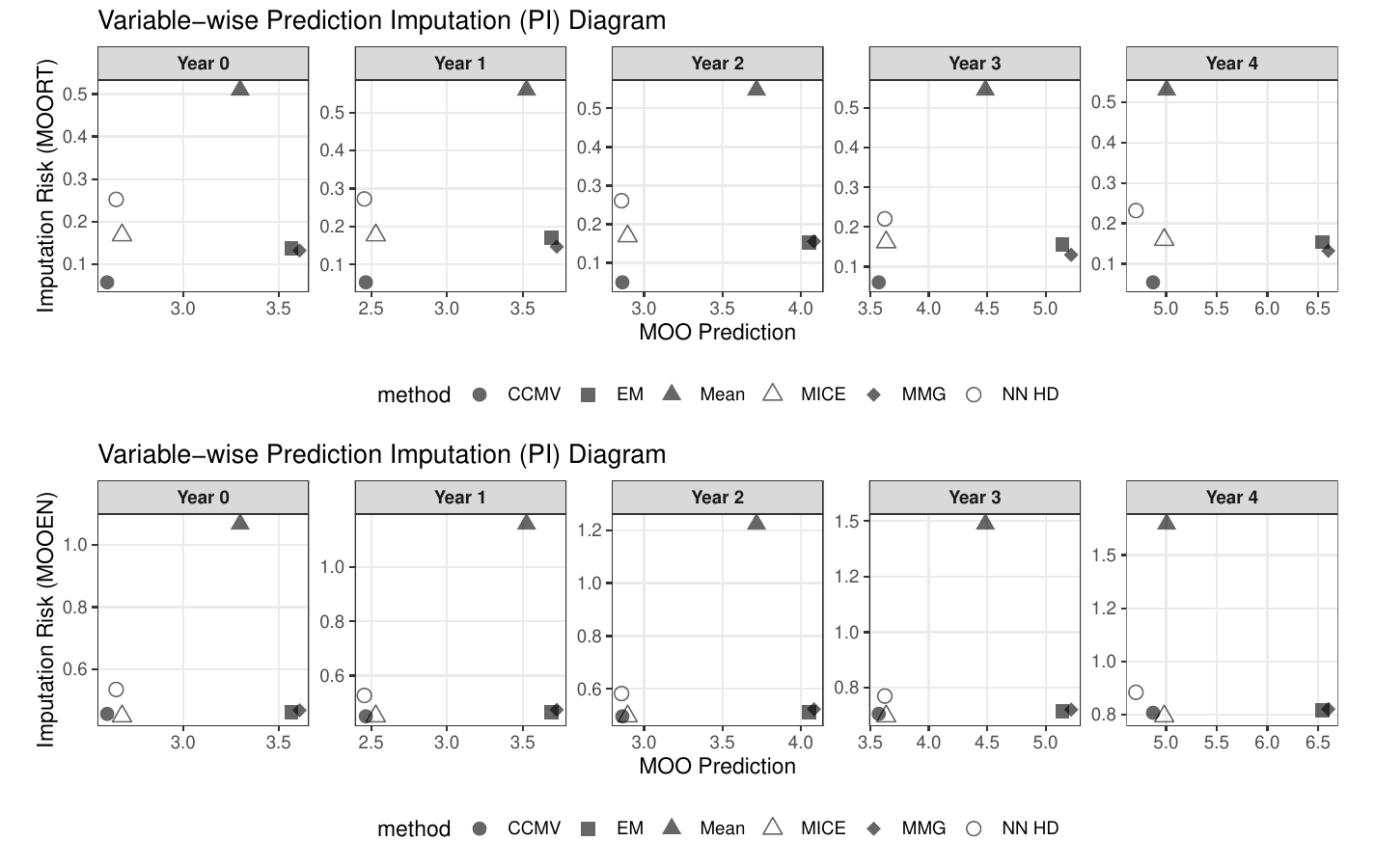}
    \caption{Prediction-Imputation (PI) Diagram under variable-wise MOO, MOORT, and MOOEN criteria on the NACC dataset (DIGFORCT).
    Each variable is the DIGFORCT score at different visits.}
    \label{fig:PI_nacc_varwise}
\end{figure}
The variable-wise masking procedure (Appendix \ref{sec::variablewise}) computes the risks separately for each variable and provides a detailed comparison of imputation methods. Figure~\ref{fig:PI_nacc_varwise} displays the PI diagrams for each visit under variable-wise MOO, MOORT, and MOOEN procedures. Compared to Figure~\ref{fig:PI_nacc}, 
the patterns are similar.
CCMV generally achieves the lowest risks, whereas mean imputation has the largest imputation risk. We note that MOO risks in the variable-wise PI diagrams are not directly comparable across variables, as each variable can have a different missingness rate and thus contribute differently to the risk computation.

\section{Discussion}

In this paper, we analyze four masking criteria:  naive MOO, MOORT, MOOEN, and MOO log-likelihood. 
All these criteria are computable for observations with missing entries. 
We rigorously study theoretical properties of these criteria
and investigate the  corresponding optimal imputation model.
For comparing imputation models in practice, 
we will recommend MOORT and MOOEN 
because they are distributional measures and straightforward to implement. 
The naive MOO is more like an assessment of prediction,
which is not a suitable criterion for comparing imputation models \citep{van2018flexible,naf2023imputation, grzesiak2025needdozensmethodsreal}. 
While the MOO log-likelihood has elegant theoretical properties and is useful in learning an imputation model,
it is not ideal for comparing different models because an imputation model may not
have a well-defined likelihood function or could be costly to evaluate. 
This is a particularly severe problem for modern generative models.



In what follows, we discuss some possible future work.

\begin{itemize}

\item {\bf Evaluating the joint dependency of imputed variables.} The proposed criteria (MOORT, MOOEN, and MOO likelihood) are designed to recover the optimal marginal imputation distributions. They do not, however, constrain the joint dependency structure among the variables being imputed. While we generalized this to Mask-K-Out (MKO) in Appendix \ref{sec::MKO}, the use of an additive loss means it also fails to evaluate this joint structure. Furthermore, as we showed in Appendix \ref{sec::max::mao}, defining a consistent optimal joint distribution as a target is a non-trivial task. Therefore, an open problem is the design of a principled masking criterion that can successfully evaluate and constrain the joint dependencies of an imputation model.


\item {\bf Constructing imputation risk via masking.}
The masking approach offers an elegant way to turn the problem of learning an imputation
model into a risk minimization procedure under different missing data assumptions.
In Section \ref{sec::learning}, we have demonstrated that
masking can be used to construct an imputation risk via a likelihood approach that acts like an empirical risk,
leading to an objective for learning an imputation model. 
There are two key ingredients for this procedure. 
The first is how we mask observed entries.
A masking procedure (masking only one entry versus multiple entries)
corresponds to a missing data assumption.
So different masking procedure refers to different missing data assumptions.
The second ingredient is how we construct the loss function. 
The log-likelihood in Section \ref{sec::learning} is just an example of a loss function; other options
such as Energy distance may be applicable.

\item {\bf Deep generative models. }
The MOO log-likelihood method enables us to use deep generative models
to learn an imputation method as an alternative approach to the GAIN approach \citep{yoon2018gain}.
For learning via the MOO log-likelihood, we need a model that is easy to sample from and the log-likelihood is tractable. 
Normalizing flow \citep{papamakarios2021normalizing} is an excellent model for this task--it is easy to sample from and the evaluation
of log-likelihood function can be done efficiently.
Variational autoencoder \citep{pu2016variational} is another good alternative if we use a variational approximation
to the MOO log-likelihood function
for training the model. 
It will be of great interest to use these deep generative model
in constructing an imputation model under MOO framework.

\item {\bf Complex masking and missing procedure.}
When the variables are associated in certain ways such as text or image or functional data, 
the missing data patterns often occur in a structured way. 
Not all possible response patterns may occur in the data.
In this case, the masking procedure has to be modified according to the
missingness structure. 
The monotone missing data problem in Section \ref{sec::mono} presents
an example of MOO under monotone missing data, which has two possible variants. 
Thus, how to properly modify the masking procedure
and analyze the underlying optimal imputation distribution remains
an open problem.



\end{itemize}

\begin{acks}[Acknowledgments]
The authors would like to thank Yikun Zhang for helpful comments on the paper.
\end{acks}
%
\begin{funding}
YY and DS were partially supported by NSF Grant DMS-2141808.
YC was supported by NSF Grant DMS-2141808, 2310578 and NIH Grant U24 AG072122. 

\end{funding}
%
%

\bibliography{MOO_new}

\bibliographystyle{abbrvnat}


%

\begin{supplement}
\stitle{Supplementary materials}
\sdescription{This appendix contains the theoretical proofs and additional contents related to the main paper. }
\end{supplement}

\begin{appendix}

{\bf Contents of Appendix.}
\begin{itemize}
\item {\bf Section \ref{sec::MKO}: Masking multiple variables.}
We discuss the procedure of masking multiple variables. 

\item {\bf Section \ref{sec::variablewise}: Variable-wise MOO.}
We include the procedure for performing a variable-wise MOO (and MOORT).

\item {\bf Section \ref{sec::sep}: Learning a separable imputation model using MOO likelihood.}
We analyze the case of learning a separable imputation model via MOO likelihood and
provide a Gaussian example.

\item {\bf Section \ref{sec::MCMOO}: Monte Carlo approximation of MOO likelihood.}
We include a Monte Carlo method for computing the MOO likelihood function 
for general imputation model. .

\item {\bf Section \ref{sec::proofs}: Proofs.}
This section includes all the details of technical proofs.

\item {\bf Section \ref{sec::appendix_sim}: Additional details for the simulation studies.}
We include additional details on the simulation studies. 

\end{itemize}

\section{Masking multiple variables}	\label{sec::MKO}

The idea of masking is not limited to a single variable.
We can perform masking of multiple variables at the same time.
For a response pattern $R\in\{0,1\}^d$, let $J_K(R) = \{r\in\{0,1\}^d: r\leq  R, r\neq 0,  |r|\leq K\}$,
where $|r| = \sum_{j}r_j$ is the summation of elements in the binary vector
and for binary vectors $r,s\in\{0,1\}^d$, we write $r< s$ if $r_j\leq s_j$ for all $j$ and there exist at least strict inequality ($r\leq s$ means
$r_j\leq s_j$ for all $j$).
The element in $J_K(R)$ can be interpreted as the set of all possible variables in $R$
when we can mask at most $K$ variables at a time.
For instance, suppose $R=00111$,
then the set $J_2(R) = \{00110,  00101, 00011, 00100, 00010,00001\}$
and $J_3(R)  = J_2(R) \cup \{00111\}$.
The element $00011$ in $J_2(R)$ is the case where we are masking $x_4$ and $x_5$.
In the case of MOO, we are masking with respect to $J_1(R)$, which in the above example is $J_1(R) = \{00100, 00010, 00001\}$. 


With the notation of $J_K$, we  formally define the \emph{mask-K-out (MKO)} procedure
in Algorithm \ref{alg::mko}.

\begin{algorithm}
\caption{Mask-K-out (MKO) procedure}
\label{alg::mko}
\begin{algorithmic}
\State  \textbf{Input:} Imputation model $q$ and an integer $K>0$. 
\begin{enumerate}
\item For $i=1,\cdots, n$, we do the following:
	\begin{enumerate}
	\item For each $s\in J_K(\bR_i) = \{r\in\{0,1\}^d: r\leq  \bR_i, r\neq 0,  |r|\leq K\}$: 
	\begin{enumerate}
	\item We mask the observed entries $\bX_{i, s}$ and updated the response pattern to be $\bR_i \ominus s$.  
	\item We impute $ \bX_{i,s}$ by sampling from $q(x_{s}|X_{\bR_i \ominus s} = \bX_{i, \bR_i \ominus s}, R = \bR_i \ominus s)$.
	Namely, we treat the data as if $\bX_{i, s}$
	is a missing value and attempt to impute it.
	\item Compute the  loss value $L(\bX_{i, s}, \hat \bX_{i, s}) = \sum_{j \in s}L(\bX_{ij}, \hat \bX_{ij})$
	\end{enumerate} 
	\item Compute the total loss for this individual: $L_K(q| \bX_{i, \bR_i},\bR_i) = \sum_{s\in J_K(R)} L(\bX_{i, s}, \hat \bX_{i, s})$. 
	\end{enumerate}
\item Compute the total risk of the imputation model $q$ as 
$$
\hat{\mathcal{E}}_{K, n }(q ) = \frac{1}{n}\sum_{i=1}^n L_K(q| \bX_{i,\bR_i},\bR_i).
$$
\item (Optional) Repeat the above multiple times and take the average value of $\hat{\mathcal{E}}_{K, n }(q )$ to reduce Monte Carlo error. 
\end{enumerate}

\end{algorithmic}
\end{algorithm}


Note that the loss value $L(\bX_{i, s}, \hat \bX_{i, s}) = \sum_{j \in s}L(\bX_{ij}, \hat \bX_{ij})$
makes it easy to compute and such loss function avoid the need of 
specifying different loss function when input is a vector of different lengths.
One may use a loss function that takes into account dependency among different variables
when imputing multiple variables. However, this will require specifying different losses
when we mask different numbers of variables.

The minimization problem in Algorithm \ref{alg::mko} can be viewed as an empirical risk minimization
and the corresponding population/test risk of $\hat{\mathcal{E}}_{K, n }(q )$ is 
$$
 {\mathcal{E}}_{K }(q ) = \E\left(L_K(q| \bX_{1,\bR_1},\bR_1)\right).
$$
Understanding the minimizer of $ {\mathcal{E}}_{K }(q )$ provides us useful information on the MKO procedure.

To investigate the optimal imputation value under MKO,
we define 
$$
\mathbb{U}_K(r, j)  = \{s\in\{0,1\}^d: s \geq r\oplus e_j, |s-r|\leq K\}
$$
for each $j \in \bar r$,
where for a binary vector $s\in\{0,1\}$, $|s| = \sum_j s_j$.
The set $\mathbb{U}_K(r, j)$ is the collection of response patterns where variable
$X_j$ and every observed variables in $r$ are observed
and there are at most $K$ more variables being observed relative to $r$.
In the special case $K=1$, $\mathbb{U}_1(r, j) = r\oplus e_j$.
For $K>1$, $\mathbb{U}_K(r, j)$ will be those response patterns $s$ such that the variables $x_{r\oplus e_j}$ are observed (i.e., $s> r\oplus e_j$)
and $x_s$ has $K-1$ additional observed variables than $x_{r\oplus e_j}$.

\begin{theorem}[Optimal imputation value under MKO]
\label{thm::mko}
For an observation $(x_r,r)$, let $j \in \bar r$ be the index of an unobserved variable. 
Then the imputation value
\begin{equation}
\hat x^*_j = {\sf argmin}_{\theta} \int L( x_j,\theta ) p(x_j|x_{r}, R\in \mathbb{U}_K(r, j))dx_j
\label{eq::opt::value::mko}
\end{equation}
is the optimal imputation value when minimizing $ {\mathcal{E}}_{K }(q )$.
\end{theorem}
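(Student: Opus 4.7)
The plan is to extend the argument of Theorem \ref{thm::moo} by carefully bookkeeping which $(s,t)$ pairs in the MKO procedure invoke the imputation model $q(x_j\mid x_r,r)$ for a fixed missing variable. First I would expand the population MKO risk into the triple sum
\begin{equation*}
\mathcal{E}_K(q)=\sum_{s}\sum_{t\in J_K(s)}\sum_{j'\in t}\int\left[\int L(x_{j'},\hat x)\,q(\hat x\mid x_{s\ominus t},s\ominus t)\,d\hat x\right]p(x_s,s)\,dx_s,
\end{equation*}
which exploits the additive structure $L(x_t,\hat x_t)=\sum_{j'\in t}L(x_{j'},\hat x_{j'})$. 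This additivity is the crucial reduction, because each summand depends on $q$ only through the marginal $q(x_{j'}\mid\cdot,\cdot)$, so the optimization over the joint dependency of $q$ on masked subsets decouples across the individual masked coordinates.

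Next, for a fixed target pattern $r$ and missing index $j\in\bar r$, I would collect all summands in which $q(x_j\mid x_r,r)$ actually appears. These correspond to $j'=j$, $s\ominus t=r$, and $t\in J_K(s)$. When $t\le s$, the constraint $s\ominus t=r$ forces $t$ to be the componentwise difference of $s$ and $r$; combining this with $j\in t$ and $|t|\le K$ translates to $s\ge r\oplus e_j$ and $|s-r|\le K$, which is precisely the defining condition of $s\in\mathbb{U}_K(r,j)$. Integrating out the nuisance coordinates $x_{s\ominus(r\oplus e_j)}$ in each selected summand, and using
\begin{equation*}
\sum_{s\in\mathbb{U}_K(r,j)}p(x_j,x_r,R=s)=p(x_j\mid x_r,R\in\mathbb{U}_K(r,j))\,p(x_r,R\in\mathbb{U}_K(r,j)),
\end{equation*}
the collected contribution to $\mathcal{E}_K(q)$ reduces to
\begin{equation*}
\int p(x_r,R\in\mathbb{U}_K(r,j))\left[\int\int L(x_j,\hat x)\,p(x_j\mid x_r,R\in\mathbb{U}_K(r,j))\,q(\hat x\mid x_r,r)\,dx_j\,d\hat x\right]dx_r.
\end{equation*}
Because the outer weight is nonnegative and the choice of $q(\cdot\mid x_r,r)$ is free across $x_r$ and across different $r$, the expression is minimized pointwise in $x_r$ by any distribution concentrated on the minimizer $\hat x^*_j$ from equation \eqref{eq::opt::value::mko}, giving the claimed optimal imputation value.

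The main obstacle is the combinatorial bookkeeping: one must verify rigorously that the set of $(s,t)$ pairs for which $q(x_j\mid x_r,r)$ is invoked in the MKO risk coincides exactly with $\mathbb{U}_K(r,j)$, so that the relevant loss-agnostic target is the mixture $p(x_j\mid x_r,R\in\mathbb{U}_K(r,j))$ rather than any single conditional. Once this index translation is confirmed, the rest of the proof reduces to the same pointwise convex minimization used in Theorem \ref{thm::moo}. As a sanity check, $\mathbb{U}_1(r,j)=\{r\oplus e_j\}$ so the target collapses to $p(x_j\mid x_r,r\oplus e_j)$, recovering Theorem \ref{thm::moo} as the $K=1$ special case.
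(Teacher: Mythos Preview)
Your proposal is correct and follows essentially the same approach as the paper: both proofs expand the population MKO risk into a triple sum, use the additive loss to decouple across masked coordinates, and then reparametrize by collecting all $(s,t)$ pairs that invoke the fixed imputation model $q(x_j\mid x_r,r)$, identifying this set with $\mathbb{U}_K(r,j)$ before minimizing pointwise. Your index translation (the combinatorial bookkeeping you flag as the main obstacle) is exactly what the paper calls the ``reparameterization technique,'' and your derivation of it is arguably more explicit than the paper's appeal to induction.
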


A key technique in the proof of Theorem~\ref{thm::mko}
is the reparameterization technique in Lemma \ref{lem::repar}  (also appear in Section~\ref{sec::sep})
that we changes the individual-view of the loss (summation over index of individual $i$ first)
to imputation model-view (summation over $r$ first). 
This technique is how   $J_K(r)$ and 
$\mathbb{U}_K(r, j)$
are associated: for individual with missing pattern $r$, $J_K(r)$
is all possible masked pattern that can occur while the set $\mathbb{U}_K(r, j)$
is what patterns can contribute to imputing variable $ x_j$ when $ x_j$  is masked.

Based on Theorem~\ref{thm::mko}, if we use the square loss $L(a,b) = (a-b)^2$,
$$
\hat x^*_j = \E(X_j|X_r = x_r, R\in \mathbb{U}_K(r, j))
$$
is the mean value of the conditional distribution $p(x_j|x_r, R\in \mathbb{U}_K(r, j))$.
Thus, the conditional distribution $p(x_j|x_r, R\in \mathbb{U}_K(r, j))$
can be viewed as the optimal imputation distribution
for the MKO procedure. 

It is critical to note that the MKO procedure, by using an additive marginal loss, still only evaluates the marginal properties of an imputation model. It does not evaluate the joint dependency structure of the imputed variables. As Theorem \ref{thm::mko} shows, the optimal imputation model is still a deterministic, point-wise imputation (e.g., the conditional mean). This demonstrates that simply increasing the number of masked variables does not, by itself, solve the deterministic imputation problem that we identified as the core limitation of MOO.

{\bf Mask-all-out (MAO).} Now we consider a special case $K=d$
which we call masking-all-out (MAO).
In this case, the set 
$$
\mathbb{U}_d(r, j) = \{s\in\{0,1\}^d: s\geq r\oplus e_j\}
$$
is the collection of all patterns where the variable $x_{r\oplus e_j}$ is observed. 
The resulting optimal imputation value has a nice form:
\begin{equation}
\hat x^*_j = {\sf argmin}_{\theta} \int L( x_j,\theta ) p(x_j|x_{r}, R\geq r \oplus e_j)dx_j.
\label{eq::opt::value::mao}
\end{equation}
The imputation model
uses every case as long as the variables $x_r, x_j$ are both observed. 

%
%
Therefore, the resulting optimal imputation value can be easily estimated
since we have the highest effective sample size. 
On the other hand, the MOO is the case where we have least information in
estimating the optimal imputation value because we are using only
the pattern $R = r \oplus e_j$.
Note that we can also construct an MKO or MAO likelihood similar to the MOO likelihood.
We provide an example of learning an imputation model based on this in Section \ref{sec::learning::mao}.

\begin{example}
Consider the data in Table \ref{tab::non}
and let $q$ be an imputation model.
When we apply the MKO to ID=001 with $K=2$, 
we have 6 possible masking scenario: 
masking $\bX_{11}, \bX_{12}, \bX_{13}$ individually,
and masking two-variables at the same time $(\bX_{11}, \bX_{12}), (\bX_{11}, \bX_{13}), (\bX_{12}, \bX_{13})$
which corresponds to
$$
J_2(111) = \{100,010,001,110,101,011\}.
$$
When we mask $\bX_{11}, \bX_{12}$, we will draw 
$$
(\hat \bX_{11}, \hat \bX_{12}) \sim q(x_1,x_2|X_3 = \bX_{13}, R = 001)
$$
and compute the loss $L(\bX_{11}, \hat \bX_{11}) + L(\bX_{12}, \hat \bX_{12})$.
The total  loss value of this individual will be 
the summation of the losses
under the six masking scenarios. 
Note that if we apply MAO to ID=001, 
we will have one additional masking scenario: masking all three variables:
$$
J_3(111) = \{100,010,001,110,101,011, 111\}.
$$
Consider the observations  with $\bR_i=010$ and  suppose our goal is to impute $X_1$, 
the MKO with $K=2$ leads to
$$
\mathbb{U}_2(010, 1) =\{110, 111\},
$$
which means that we will use observations with $\bR_i=110$ or $111$
to evaluate the performance of the imputation model on imputing $X_1$  from $\bR_i = 010$.

\end{example}

\begin{remark}
Under MKO, the losses are generally higher
for observations with more observed entries.
Take the data in Table \ref{tab::non} as an example. 
Individual ID=001 has three observed variables. 
So under MKO with $K=2$, 
we will evaluate a total of $3 + 3\times 2 = 9$ loss functions:
the first three comes from masking one variables
and the latter six ($3 \times 2$) is from masking two variables--when masking two variables,
the loss function will be evaluated twice per each imputation. 
On the other hand, for ID=002, there are only two observed entries.
So the loss function will be evaluated only $2 + 2\times 2 = 6$ times. 
Elementary calculation shows that for a variable with $L$ observed entries,  MKO will require evaluating
the loss function $\sum_{k=1}^K {L \choose k} \cdot k$ times.
\end{remark}

\subsection{Challenges on characterizing the maximizer of the MAO likelihood}	\label{sec::max::mao}

While the MAO likelihood defines a criterion that incorporates interactions among variables 
during imputations, its maximizer cannot be easily characterized.
Similar to equation \eqref{eq::MOO::likelihood3}, we may define the (population) MAO log-likelihood for any imputation model $q$
as 
$$
\bar \ell_{MAO}(q)  = \E\left\{\sum_{s < \bR_1} \log q_\theta(x_s =\bX_{1,s}|x_{s} = \bX_{1, s}, r = \bR_1\ominus s)\right\}.
$$
Theorem~\ref{thm::MOO::ID}
shows that the class $\mathcal{Q}^*_{MOO}$
characterizes a class of distribution such that $\bar \ell(q)$
is maximized. 
One may wonder whether we can also find a similar set 
$\mathcal{Q}^*_{MAO}$
that maximizes this likelihood. 
A naive approach of generalizing $\mathcal{Q}^*_{MOO}$ to MAO is the following set
\begin{equation}
\mathcal{Q}^\dagger_{MAO} =\{q: q(x_s|x_r, r) = p(x_s|x_r, R\geq r\oplus s), \quad \forall s \leq  \bar r,\quad r \in\{0,1\}^d\}.
\label{eq::MAO::opt::set}
\end{equation}
While seemingly reasonable, this set may be an empty set!
The high level idea  is that
we cannot guarantee that an imputation  model for
multiple variables can be marginalized into an imputation model on 
fewer variables that are compatible with other patterns.
The following is an example highlighting this issue.

\begin{example}[Failure of $\mathcal{Q}^\dagger_{MAO} $]
Consider a bivariate data $X=(X_1,X_2)$.
In this case, we have four response patterns $R \in\{11,10,01,00\}$. 
Now we assume that 
$$
p(x_1|R=10)  = \begin{cases}
2,\quad\mbox{with a probability of } 0.5\\
3,\quad\mbox{with a probability of } 0.5
\end{cases}
$$
and 
$$
p(x_1,x_2|R=11)\sim {\sf Uni}[0,1]^2.
$$
In this case, there is no imputation model $q(x_1,x_2|R=00)$ in $\mathcal{Q}^\dagger_{MAO} $
that can perfectly fit to both distributions.
This is because when we mask $x_1$ in $R=10$, the optimal imputation model $q(x_1|R=00)$
should always predict a value that is either $2$ or $3$. 
On the other hand, when we mask $x_1,x_2$ together in $R=11$,
the optimal imputation model $q(x_1,x_2|R=00)$ should be a uniform distribution of the region $[0,1]^2$. 
Because we need to cover both $R=11$ and $R=10$ when imputing $x_1$,
$q(x_1|R=00)$ has a support on $[0,1]\cup\{2,3\}$. 
This conflicts with the support of the joint distribution $q(x_1,x_2|R=00)$!
Note that this does not conflict with Theorem~\ref{thm::mko}. There are optimal imputation values under say squared loss.
For instance, suppose $P(R=11) = P(R=10)$ then the optimal imputation value $\hat x_1 = 1.5$, which is the average of the mean value $X_1$
under the two models. 
\end{example}

\section{Variable-wise MOO procedure}	\label{sec::variablewise}

\subsection{Variable-wise MOO}	\label{sec::moo::v}

The MOO procedure in Algorithm~\ref{alg::moo} can be applied to a specific variable, leading to a risk value for that particular variable.
The idea is very simple. 
Instead of looking  at every individual and every variable in the MOO procedure, 
we focus on individuals with the  variable of interest, for instance $X_j$, being observed. 
We only mask this variable and impute it to compare the difference.
This variable-wise MOO procedure for an imputation model $q$ is summarized in Algorithm \ref{alg::moo::v}.

\begin{algorithm}
\caption{Variable-wise MOO procedure}
\label{alg::moo::v}
\begin{algorithmic}
\State \textbf{Input:} Imputation model $q$ and the variable of interest $X_j$.
\State Let $D_j =\{i: \bR_{ij} = 1\}$ be those individuals with $X_j$ being observed.
\begin{enumerate}
\item For $i\in D_j$, we do the following:
	\begin{enumerate}
	\item We mask the observed entry $\bX_{ij}$, pretending it to be missing.
	\item We impute $\hat \bX_{ij}$ by sampling from $q(x_{j}|X_{\bR_i \ominus e_j} = \bX_{i, \bR_i \ominus e_j}, R = \bR_i \ominus e_j)$. 
	\item Compute the loss value $L(\bX_{ij}, \hat \bX_{ij})$. Note that such loss function may vary from variable to variable. 
	\end{enumerate}
\item Compute the total risk of the imputation model $q$ as 
$$
\hat{\mathcal{E}}_{j, n}(q ) =\frac{1}{n} \sum_{i\in D_j} L(\bX_{ij}, \hat \bX_{ij})
$$
\end{enumerate}
\end{algorithmic}
\end{algorithm}

The quantity $\hat{\mathcal{E}}_{j, n}(q )$ from the output of Algorithm \ref{alg::moo::v}
is the risk of imputation model $q$ for variable $X_j$. 
It can be viewed as the risk of imputing variable $X_j$ under the imputation model $q$.

Interestingly, if we use the risk of all variables, we have the equality
$$
\sum_{j=1}^d\hat{\mathcal{E}}_{j, n}(q ) = \hat{\mathcal{E}}_{n}(q ),
$$
which is the total risk in the original MOO procedure.
This is because the original MOO procedure can be viewed as computing the loss for each row of the data matrix
and then summing over every row. 
This risk is the same if we perform a column-wise MOO of the data matrix and sum over every column. 
The variable-wise MOO is exactly the column-wise MOO of the data matrix.

\subsection{Variable-wise MOORT and MOOEN}	\label{sec::MOORT::v}

The idea of MOORT and MOOEN can be modified into criteria
for comparing imputation models for a specific variable as well.
Here we describe the procedure for MOORT; the case of MOOEN follows in a similar way.

Suppose we are interested in variable $X_j$. 
We modify Step 1 in Algorithm \ref{alg::MOORT} 
so that we only consider observations where $X_j$ is observed, i.e., $\bR_{ij}=1$,
and instead of randomly select a variable to mask, 
we always mask variable $X_{j}$. 
The rest steps remain the same and we will obtain a risk score 
for each imputation model under this variable. 
Algorithm \ref{alg::MOORT::v} summarizes the whole procedure.

\begin{algorithm}
\caption{Variable-wise MOORT}
\label{alg::MOORT::v}
\begin{algorithmic}
\State \textbf{Input:} An imputation model $q$ and a distance of distribution $d$ and a variable to be compared $X_j$
\State Let $D_j = \{i: \bR_{ij}=1\}$ be the indices of the observations where variable $X_j$ is observed.
\begin{enumerate}
\item For each $i\in D_j$, we do the following.
\item We mask $\bX_{ij}$, pretending it to be a missing value.
\item Using imputation model $q$, we sample $M$ times from the conditional distribution $q(x_j | X_{\bR_i \ominus e_j} =  \bX_{i, \bR_i \ominus e_j}, R=\bR_i \ominus e_j)$ to generate $M$ imputed values:
$$
\hat \bX_{ij}^{(1)},\cdots, \hat \bX_{ij}^{(M)}.
$$
\item We compute the EDF of these $M$ values: $\hat G_{\bX_{ij}}(x) = \frac{1}{M}\sum_{m=1}^M I(\hat \bX_{ij}^{(m)}\leq x)$. 
\item We compute the score $\hat S_i = \hat G_{\bX_{ij}}(\bX_{ij})$.
\item By doing so for every individual, we obtain $\{ \hat S_i: i \in D_j\}$ and the corresponding empirical distribution $\hat H_j(t; q) = \frac{1}{|D_j|}\sum_{i\in D_j} I(\hat S_i\leq t),$
where $|D_j| = \sum_{i=1}^n I(i\in D_j)$ is the cardinality of $D_j$.
\item We compute
$$
\hat{\mathcal{R}}_j(q) = d\left(\hat H_j(\cdot; q), {\sf Uni}[0,1]\right).
$$
\end{enumerate}

\end{algorithmic}
\end{algorithm}

The output $\hat{\mathcal{R}}_j(q)$ from Algorithm \ref{alg::MOORT::v}
can be viewed as a measure of imputation performance on variable $X_j$.
We may apply Algorithm \ref{alg::MOORT::v} to every variable and compute
$\bar{\mathcal{R}}(q) = \sum_{j=1}^d \hat{\mathcal{R}}_j(q)$ as
an alternative criterion for evaluating the overall imputation performance.
The following theorem shows a recovery result under $\bar{\mathcal{R}}(q)$.

\begin{theorem}
\label{thm::MOORT::v}
Consider the variable-wise MOORT procedure for every variable with stochastic rank for categorical and discrete random variables.
Let $d$ be a metric for distribution. 
When $n,M\rightarrow\infty$, 
$$
\bar{\mathcal{R}}(q) \overset{P}{\rightarrow} 0
$$
for any $q \in \mathcal{Q}^*_{MOO}$.
\end{theorem}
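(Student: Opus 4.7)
The plan is to reduce the variable-wise claim to the same probability-integral-transform argument that powers Theorem~\ref{thm::MOORT}, and then combine this with standard empirical-process tools. Since $\bar{\mathcal{R}}(q)=\sum_{j=1}^d \hat{\mathcal{R}}_j(q)$ is a finite sum of nonnegative terms, it suffices to establish, for each $j$, that $\hat{\mathcal{R}}_j(q)\overset{P}{\to}0$ when $q\in\mathcal{Q}^*_{MOO}$.

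Fix $j$ and condition on $i\in D_j$. The first step is to identify the true conditional law of $\bX_{ij}$. Given $(\bX_{i,\bR_i\ominus e_j},\bR_i)$ with $\bR_{ij}=1$, the conditional density/mass of $\bX_{ij}$ is $p(x_j\mid x_{\bR_i\ominus e_j}, \bR_i)$. By the defining property of $\mathcal{Q}^*_{MOO}$, the imputation distribution used to draw $\hat\bX_{ij}^{(1)},\ldots,\hat\bX_{ij}^{(M)}$ satisfies
\begin{equation*}
q(x_j\mid x_{\bR_i\ominus e_j},\bR_i\ominus e_j)
= p(x_j\mid x_{\bR_i\ominus e_j},(\bR_i\ominus e_j)\oplus e_j)
= p(x_j\mid x_{\bR_i\ominus e_j},\bR_i).
\end{equation*}
Thus, conditional on $(\bX_{i,\bR_i\ominus e_j},\bR_i)$, the observed value $\bX_{ij}$ and the $M$ imputations are i.i.d.\ draws from a common distribution $G^*_i$.

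The second step is to apply the (randomized) probability integral transform. Let $S^*_i$ denote the stochastic rank of $\bX_{ij}$ under $G^*_i$; by construction $S^*_i\sim{\sf Uni}[0,1]$ exactly, even for discrete or categorical $\bX_{ij}$ (this is the role of the ``stochastic rank'' assumption). The empirical CDF $\hat G_{\bX_{ij}}$ computed in Algorithm~\ref{alg::MOORT::v} is based on $M$ i.i.d.\ draws from $G^*_i$, so by Glivenko--Cantelli $\|\hat G_{\bX_{ij}}-G^*_i\|_\infty\overset{a.s.}{\to}0$ as $M\to\infty$; the same applies to the randomized versions. Consequently the computed rank $\hat S_i$ converges in probability to $S^*_i\sim{\sf Uni}[0,1]$ as $M\to\infty$, uniformly over the i.i.d.\ family indexed by $i\in D_j$.

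The third step is to pass from pointwise convergence of each $\hat S_i$ to convergence of their empirical distribution $\hat H_j(\cdot;q)$. The triples $\{(\bX_{i,\bR_i\ominus e_j},\bR_i,\bX_{ij}):i\in D_j\}$ are i.i.d., and $|D_j|/n\to P(R_j=1)>0$ almost surely, so $\{S^*_i:i\in D_j\}$ are i.i.d.\ ${\sf Uni}[0,1]$ and their empirical CDF converges in Kolmogorov distance to ${\sf Uni}[0,1]$ by Glivenko--Cantelli. Coupling this with the Monte Carlo approximation $\hat S_i-S^*_i=o_P(1)$ gives $\hat H_j(\cdot;q)\Rightarrow{\sf Uni}[0,1]$ in Kolmogorov distance as $n,M\to\infty$, which yields $\hat{\mathcal{R}}_j(q)\overset{P}{\to}0$ for the Kolmogorov metric and, by the same weak-convergence argument, for any metric $d$ that metrizes weak convergence on the compact set $[0,1]$ (e.g.\ Wasserstein, MMD, energy distance).

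The main obstacle I anticipate is managing the double limit $n,M\to\infty$ simultaneously, because the Monte Carlo error in $\hat S_i$ is not independent of the other randomness in $\hat H_j$. The cleanest way around this is to first control the Monte Carlo error uniformly in $i\in D_j$ using a Dvoretzky--Kiefer--Wolfowitz-type bound conditional on $(\bX_{i,\bR_i\ominus e_j},\bR_i)$, noting that the conditioning laws $G^*_i$ enter only through the realized random sample, and then invoke Glivenko--Cantelli on the exactly uniform $S^*_i$'s. The stochastic-rank assumption does most of the real work in the discrete case, converting what would otherwise be a non-trivial approximation at ties into an exact distributional identity.
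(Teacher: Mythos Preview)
Your proposal is correct and follows essentially the same route as the paper: reduce to showing $\hat{\mathcal{R}}_j(q)\overset{P}{\to}0$ for each fixed $j$ and then sum, with the per-$j$ argument being exactly the probability-integral-transform plus Glivenko--Cantelli reasoning used in Theorem~\ref{thm::MOORT}. The paper's own proof simply invokes that earlier argument in one line, whereas you spell it out (and are somewhat more careful about the double limit and the stochastic-rank device for discrete variables), but there is no substantive difference in approach.
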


Thus, 
we may minimize $\bar{\mathcal{R}}(q) = \sum_{j=1}^d \hat{\mathcal{R}}_j(q)$ to select the best imputation model. 
Compared with MOORT, 
this procedure will  have a higher computational cost (since we have to go through all observed entries)
but it has less randomness.

\section{Learning a separable imputation model using MOO likelihood}	\label{sec::sep}

When the imputation model's parameters are separable among different response patterns, 
learning/estimating the parameters using the MOO likelihood in equation \eqref{eq::MOO::likelihood2} can be decomposed into
several sub-problems, reducing the complexity of learning.
We start with a simple Gaussian example highlighting this feature.

\begin{example}[A separable Gaussian model]	\label{ex::sep}
Suppose that every individual has two variables $X = (X_1,X_2) \in \R^2$
and there are a total of four possible missing patterns: $R \in \{11, 10, 01, 00\}$.
Let the data be $(\bX_{1, \bR_1},\bR_1),\cdots, (\bX_{n, \bR_n}, \bR_n)$.
Among these missing patterns, any imputation model consists of three submodels:
$$q(x_1|x_2, 01), q(x_2|x_1, 10), q(x_1,x_2|00).
$$
We assume that these three models are parametrized as 
\begin{align*}
q(x_1|x_2, 01) &\sim N(\mu_{01} + \beta_{01} x_2, \sigma^2_{01})\\
q(x_2|x_1, 10) &\sim N(\mu_{10} + \beta_{10} x_1, \sigma^2_{10})\\
q(x_1,x_2| 00) &\sim N((\mu_{00,1}, \mu_{00,2})^T, \sigma^2_{00} \mathbf{I}_2).
\end{align*}
The above parametric model satisfies assumptions (A1-3) in Theorem \ref{thm::MLE}. 
Note that the off-diagonal term in the covariance matrix of $q(x_1,x_2| 00) $ has to 
be a fixed quantity (like $0$ in the above model) otherwise 
the MLE is not unique--this is because the MOO log-likelihood procedure does not learn
the dependency.

In this case, the model's parameters  $\theta = (\mu_{01}, \mu_{10}, \mu_{00,1}, \mu_{00,2}, \beta_{01}, \beta_{10}, \sigma^2_{01},\sigma^2_{10},\sigma^2_{00})$.
The parameters are separable in this case because each submodel uses a different set of parameters: $(\mu_{01}, \beta_{01}, \sigma^2_{01})$,
$(\mu_{10}, \beta_{10}, \sigma^2_{10})$, and $(\mu_{00,1}, \mu_{00,2}, \sigma^2_{00})$.
Thus, we can reparametrize the MOO log-likelihood function as 
\begin{equation}
\begin{aligned}
\ell_n(\theta) & = \sum_{i=1}^n\sum_{j \in \bR_i} \log q_\theta(x_j = \bX_{ij}|x_{\bR_i\ominus e_j} = \bX_{i, \bR_i\ominus e_j}, r = \bR_i\ominus e_j)\\
& = \sum_{i=1}^n I(\bR_i = 11) \log q(x_1 = \bX_{i1}|x_2 = \bX_{i2}, 01) \\
&\qquad+ \sum_{i=1}^n I(\bR_i = 11)\log q(x_2 = \bX_{i2}|x_1 = \bX_{i1}, 10) \\ 
&\qquad+ \sum_{i=1}^n I(\bR_i = 10) \log q(x_1 = \bX_{i1}| 00) + I(\bR_i = 01) \log q(x_2 = \bX_{i2}| 00).
\end{aligned}
\label{eq::sep}
\end{equation}
In this easy case, 
the MLE has the following closed-form solution: 
\begin{equation}
\begin{aligned}
(\hat  \mu_{01} ,\hat \beta_{01}) &= {\sf argmin}_{\mu ,\beta}\,\,  \sum_{i=1}^n I(\bR_i = 11) (\bX_{i1} - \mu - \beta\cdot \bX_{i2})^2\\
\hat \sigma^2_{01} &= \frac{\sum_{i=1}^n I(\bR_i = 11) (\bX_{i1} - \hat\mu_{01} - \hat\beta_{01}\cdot \bX_{i2})^2}{n_{11} }\\
(\hat  \mu_{10} ,\hat \beta_{10}) &= {\sf argmin}_{\mu ,\beta}\,\,  \sum_{i=1}^n I(\bR_i = 11) (\bX_{i2} - \mu - \beta\cdot \bX_{i1})^2\\
\hat \sigma^2_{10} &= \frac{\sum_{i=1}^n I(\bR_i = 11) (\bX_{i2} - \hat\mu_{10} - \hat\beta_{10}\cdot \bX_{i1})^2}{n_{11}}\\
\hat \mu_{00,1} & = \frac{1}{n_{10}}\sum_{i=1}^n I(\bR_i = 10) \bX_{i1}\\
\hat \mu_{00,2} & = \frac{1}{n_{01}}\sum_{i=1}^n I(\bR_i = 01) \bX_{i2}\\
\hat \sigma^2_{00} & = \frac{1}{n_{10}+n_{01}}\sum_{i=1}^n [I(\bR_i = 10) (\bX_{i1}-\hat \mu_{00,1})^2 + I(\bR_i=01) (\bX_{i2}-\hat \mu_{00,2})^2]
\end{aligned}
\label{eq::mle}
\end{equation}
where $n_r = \sum_{i=1}^n I(\bR_i=r)$ is the number of observations of response pattern $r$.
\end{example}

The reparameterization in equation \eqref{eq::sep}
is useful because we rewrite the likelihood function from an individual-view (summation over $i$ first) to pattern-view (summation over $r$ first).
A more general form of reparameterization is given in Lemma \ref{lem::repar}.
For a general MOO log-likelihood model, the same reparameterization leads to
\begin{equation}
\begin{aligned}
\ell_n(\theta) & = \sum_{i=1}^n\sum_{j \in \bR_i} \log q_\theta(x_j = \bX_{ij}|x_{\bR_i\ominus e_j} = \bX_{i, \bR_i\ominus e_j}, r = \bR_i\ominus e_j)\\
& = \sum_{r:r\neq 1_d} \sum_{i=1}^n   \sum_{j \in \bar r} I(\bR_i = r\oplus e_j)\log q_\theta(x_j = \bX_{ij}|x_{r} = \bX_{i,r}, r) .
\end{aligned}
\label{eq::sep2}
\end{equation}
If the parameters in the imputation model are separable  in the sense that $\theta = (\theta_r: r \in \{0,1\}^d\backslash 1_d)$, then 
equation \eqref{eq::sep2} can be further decomposed into 
\begin{equation}
\begin{aligned}
\ell_n(\theta)
& = \sum_{r:r\neq 1_d} \sum_{i=1}^n   \sum_{j \in \bar r} I(\bR_i = r\oplus e_j)\log q_\theta(x_j = \bX_{ij}|x_{r} = \bX_{i,r}, r) \\
& = \sum_{r:r\neq 1_d} \sum_{i=1}^n   \sum_{j \in \bar r} I(\bR_i = r\oplus e_j)\log q_{\theta_r}(x_j = \bX_{ij}|x_{r} = \bX_{i,r}, r)\\
& = \sum_{r:r\neq 1_d} \ell_{r, n}(\theta_r),
\end{aligned}
\label{eq::sep3}
\end{equation}
where $\ell_{r, n}(\theta_r) = \sum_{i=1}^n   \sum_{j \in \bar r} I(\bR_i = r\oplus e_j)\log q_{\theta_r}(x_j = \bX_{ij}|x_{r} = \bX_{i,r}, r)$
is the log-likelihood function of the parameter $\theta_r$.
A major benefit of the separable model is that we can learn different sets of 
parameters separately. This is particularly useful when the number of parameters is huge.

Note that even if we choose a model with separable parameters,
the estimated parameters may still be dependent. 
In the Gaussian example, the complete observations, $\bR_i = 11$,
are used in estimating both $q_{01}$ and $q_{10}$. 
Thus, the uncertainty of estimated parameters is correlated.

\subsection{Separable product imputation model}	\label{sec::sepprod}
The imputation model may be further separated if the submodel parameters $\theta_r$ for pattern $r$
can be  decomposed into $\theta_r = (\theta_{r,j}: j \in \bar r)$
and the imputation model has a product form as in equation \eqref{eq::MOOPM}:
$$
q_{\theta_r}(x_{\bar r} |x_r, r) = \prod_{j \in \bar r}q_{\theta_{r,j}}(x_j |x_r, r).
$$
In this case, 
$$
q_{\theta_r}(x_j = \bX_{ij}|x_{r} = \bX_{i,r}, r) = q_{\theta_{r,j}}(x_j = \bX_{ij}|x_{r} = \bX_{i,r}, r)
$$
and
equation \eqref{eq::sep3} is  decomposed into
\begin{align*}
\ell_n(\theta) &=\sum_{r:r\neq 1_d} \sum_{i=1}^n   \sum_{j \in \bar r} I(\bR_i = r\oplus e_j)\log q_{\theta_r}(x_j = \bX_{ij}|x_{r} = \bX_{i,r}, r)\\
&=\sum_{r:r\neq 1_d} \sum_{i=1}^n   \sum_{j \in \bar r} I(\bR_i = r\oplus e_j)\log q_{\theta_r,j}(x_j = \bX_{ij}|x_{r} = \bX_{i,r}, r)\\
& =  \sum_{r:r\neq 1_d} \sum_{j \in \bar r} \ell_{r,j, n}(\theta_{r,j}),
\end{align*}
where $\ell_{r,j, n}(\theta_{r,j}) = \sum_{i=1}^n I(\bR_i = r\oplus e_j)\log q_{\theta_r,j}(x_j = \bX_{ij}|x_{r} = \bX_{i,r}, r)$.
Thus, we can separately learn each $\theta_{r,j}$
and the observations contributing to learning this parameter have the same response pattern $R = r\oplus e_j$.

In the  simple Gaussian example of Example \ref{ex::sep}, this occurs when we modify the covariance matrix of $q(x_1,x_2|00)$ 
so that
$$
q(x_1,x_2| 00) \sim N\left(\begin{bmatrix}\mu_{00,1}\\
\mu_{00,2}
\end{bmatrix},
\begin{bmatrix}
\sigma_{00,1}^2 & 0\\
0 &\sigma_{00,2}^2
\end{bmatrix}\right).
$$
It is clear that $q(x_1,x_2| 00) = q(x_1| 00)q(x_2| 00)$
and 
$q(x_j| 00)\sim N(\mu_{00,j}, \sigma_{00,j}^2)$ for $j=1,2$
and the last part of equation \eqref{eq::sep}
becomes
$$
\sum_{i=1}^n I(\bR_i = 10) \log q(x_1 = \bX_{i1}| 00) +\sum_{i=1}^n I(\bR_i = 01) \log q(x_2 = \bX_{i2}| 00),
$$
so learning $(\mu_{00,1},\sigma_{00,1}^2)$ and $(\mu_{00,2}, \sigma_{00,2}^2)$
can be done separately.

%

\subsection{Shared parameters in imputation model}	\label{sec::pool}

In the previous section, we introduce the idea of separable product imputation model that
is flexible for imputation. 
However, this approach may suffer from low sample size for learning certain parameters,
leading to an unstable estimate of the model's parameters. 
To address this issue, we consider a shared parameter approach. Specifically,
we equate some imputation submodel's parameters so that we may pool observations with different missing patterns
to estimate the parameter.

To illustrate the idea, again we consider the Gaussian example in Section \ref{sec::sepprod} (Example \ref{ex::sep}):
\begin{equation}
\begin{aligned}
q(x_1|x_2, 01) &\sim N(\mu_{01} + \beta_{01} x_2, \sigma^2_{01})\\
q(x_2|x_1, 10) &\sim N(\mu_{10} + \beta_{10} x_1, \sigma^2_{10})\\
q(x_1,x_2| 00) &\sim N\left(\begin{bmatrix}\mu_{00,1}\\
\mu_{00,2}
\end{bmatrix},
\begin{bmatrix}
\sigma_{00,1}^2 & 0\\
0 &\sigma_{00,2}^2
\end{bmatrix}\right).
\end{aligned}
\label{eq::shared}
\end{equation}
Suppose we include a constraint that 
\begin{equation}
\sigma^2_{01} = \sigma_{00,1}^2,
\label{eq::sigma}
\end{equation}
which means that the variance parameters are shared across the two models. 
Under this constraint, one can easily show that the MLE of $\sigma^2_{01}$ in equation \eqref{eq::mle} is updated to 
$$
\tilde \sigma^2_{01} = \frac{\sum_{i=1}^n I(\bR_i = 11) (\bX_{i1} - \hat\mu_{01} - \hat\beta_{01}\cdot \bX_{i2})^2 + I(\bR_i = 10) (\bX_{i1}- \hat \mu_{00,1})^2}{n_{11}+n_{10} }.
$$
The original variance estimator in equation \eqref{eq::mle} has an effective sample size $n_{11}$ while
the effective sample size of the new variance estimator $\tilde \sigma^2_{00,1}$ is $n_{11}+n_{10}$. 
When the complete case sample size $n_{11}$ is small and the partial response's sample size $n_{10}$ is large,
the equality constraint in equation \eqref{eq::sigma}
offers a huge boost to the sample size.

The shared parameter model approach is particularly useful when the 
number of missing variables is high. 
To see this, when there are $d$ variable that can be missing,
there are a total number of $2^d$ possible missing patterns,
so we need  $2^d$ distinct imputation submodels  $q_{\theta_r}$. 
Consequently, the size of all parameters scale at the order of $O(2^d)$,
so the complexity diverges quickly with respect to $d$.

\subsection{Learning imputation model under MAO}	\label{sec::learning::mao}
The statistical learning approach in Section \ref{sec::learning}
can be generalized to the MKO scenario. 
To simplify the problem, we consider the case of the MAO.
Specifically, we  modify the MOO log-likelihood in equations \eqref{eq::MOO::likelihood}
and \eqref{eq::MOO::likelihood2}
to the MAO log-likelihood
\begin{equation}
\begin{aligned}
\ell_{MAO,n}(\theta)&=  \sum_{i=1}^n \ell_K(\theta|\bX_{i,\bR_i},\bR_i) \\
& = \sum_{i=1}^n\sum_{\ell < \bR_i} \log q_\theta(x_\ell =\bX_{i,\ell}|x_{ \bR_i\ominus \ell} = \bX_{i, \bR_i\ominus \ell}, r = \bR_i\ominus \ell).
\label{eq::MAO::likelihood}
\end{aligned}
\end{equation}
Using the reparameterization technique in equation \eqref{eq::sep2},
we can rewrite the MAO log-likelihood from
the individual-view (summation over observation index $i$ first)
to the imputation model-view (summation over response pattern $r$):
\begin{equation}
\begin{aligned}
\ell_{MAO,n}(\theta)& = \sum_{r:r\neq 1_d} \sum_{i=1}^n \sum_{s\leq  \bar r} I( \bR_i =  s+r) \log q_{\theta_r}(x_s = \bX_{i,s}|x_r = \bX_{i,r}, R=r)\\
& =  \sum_{r:r\neq 1_d} \ell_{MAO, r,n} (\theta_r),\\
\ell_{MAO, r,n} (\theta_r) & = \sum_{i=1}^n \sum_{s\leq  \bar r} I( \bR_i =  s+r) \log q_{\theta_r}(x_s = \bX_{i,s}|x_r = \bX_{i,r}, R=r).
\end{aligned}
\end{equation}
Note that in the above scenario, we consider a separable imputation model (Section \ref{sec::sep})
so that we can partition the parameter $\theta = (\theta_r: r\in\{0,1\}^d\backslash 1_d)$. 

The function $\ell_{MAO, r,n} (\theta_r)$
is the MAO log-likelihood function
for the imputation submodel $ q_{\theta_r}$. 
Comparing $\ell_{MAO, r,n} (\theta_r)$ to the MOO version $\ell{r,n}(\theta_r)$ in equation \eqref{eq::sep3},
the MAO considers jointly imputing multiple missing variables 
whereas the MOO only consider imputing a single variable. 
Therefore, MAO allows learning imputation model's parameters on the 
interactions. 


\begin{example}[Example \ref{ex::sep} revisited]
To see how MAO learns the joint distribution of missing variables, we consider again the 
Gaussian example in Section \ref{sec::sep}.
Now we consider the imputation models
\begin{align*}
q(x_1|x_2, 01) &\sim N(\mu_{01} + \beta_{01} x_2, \sigma^2_{01})\\
q(x_2|x_1, 10) &\sim N(\mu_{10} + \beta_{10} x_1, \sigma^2_{10})\\
q(x_1,x_2| 00) &\sim N\left(\underbrace{\begin{bmatrix}\mu_{00,1}\\
\mu_{00,2}
\end{bmatrix}}_{=\mu_{00}},
\underbrace{\begin{bmatrix}
\Sigma_{00,11} & \Sigma_{00,12}\\
\Sigma_{00,21} &\Sigma_{00,22}
\end{bmatrix}}_{=\Sigma_{00}}\right).
\end{align*}
Note that we now allow the covariance matrix of $q(x_1,x_2| 00)$
to have off-diagonal terms.
The MAO log-likeihood function will be 
\begin{equation}
\begin{aligned}
\ell_{MAO, n}(\theta)
& = \sum_{i=1}^n I(\bR_i = 11) \log q(x_1 = \bX_{i1}|x_2 = \bX_{i2}, 01) \\
&\qquad+ \sum_{i=1}^n I(\bR_i = 11)\log q(x_2 = \bX_{i2}|x_1 = \bX_{i1}, 10) \\ 
&\qquad+ \sum_{i=1}^n I(\bR_i = 10) \log q(x_1 = \bX_{i1}| 00) + I(\bR_i = 01) \log q(x_2 = \bX_{i2}| 00)\\
&\qquad+ \sum_{i=1}^n I(\bR_i = 11) \log q(x_1 = \bX_{i1}, x_2 = \bX_{i2}| 00).
\end{aligned}
\label{eq::mao::sep}
\end{equation}
Comparing to Equation \eqref{eq::sep}, we have an additional term, $\sum_{i=1}^n I(\bR_i = 11) \log q(x_1 = \bX_{i1}, x_2 = \bX_{i2}| 00)$,
which is the likelihood when we mask both variables from complete data.
Under the MAO procedure, learning parameters of $q(x_1|x_2, 01)$ and $q(x_2|x_1, 10)$
remains the same as MOO. 
However, learning parameters of $q(x_1,x_2| 00)$
will also use the data of complete case ($\bR_i = 11$).

It turns out that the MLEs of $\mu_{00}$ and $\Sigma_{00}$ do not have a closed-form solution
but we can easily maximize it by a gradient ascent approach. 
Here are the gradients of them:
\begin{align*}
\nabla_{\mu_{00}} \ell_{MAO, n}(\theta) &=
\sum_{i=1}^n  \begin{bmatrix}
\Sigma^{-1}_{00,11}&0\\
0& \Sigma^{-1}_{00,22}
\end{bmatrix}
\begin{bmatrix}
 I(\bR_i = 10) (\bX_{i1}- \mu_{00,1})\\
 I(\bR_i = 01) (\bX_{i2}- \mu_{00,2})
 \end{bmatrix}
 +
 \Sigma_{00}^{-1} \begin{bmatrix}
I(\bR_i = 11) (\bX_{i1}- \mu_{00,1})\\
I(\bR_i = 11) (\bX_{i2}- \mu_{00,2})
 \end{bmatrix}\\
\nabla_{\Sigma_{00}} \ell_{MAO, n}(\theta) &=
\begin{bmatrix}
S_{10,n}(\theta)
&0\\
0&S_{01,n}(\theta)
\end{bmatrix}
-\frac{n_{11}}{2}\Sigma_{00}^{-1} + \frac{1}{2}\Sigma_{00}^{-1} \left[\sum_{i=1}^n I(\bR_i=11) (X_{i} - \mu_{00})(X_{i} - \mu_{00})^T\right]\Sigma_{00}^{-1}\\
S_{10,n}(\theta)& =-\frac{n_{10}}{2}\Sigma^{-1}_{00,11} +\frac{1}{2}\Sigma^{-2}_{00,11}\sum_{i=1}^n I(\bR_i=10)(X_{i1} - \mu_{00,1})^2  \\
S_{01,n}(\theta)& = -\frac{n_{01}}{2}\Sigma^{-1}_{00,22} +\frac{1}{2}\Sigma^{-2}_{00,22}\sum_{i=1}^n I(\bR_i=01)(X_{i2} - \mu_{00,2})^2.
\end{align*}
\end{example}

\section{Monte Carlo approximation of MOO likelihood}	\label{sec::MCMOO}

When evaluating $q$ is intractable while sampling $q$ is easy, 
the rank transformation or energy distance approaches (Section \ref{sec::QT}) are generally recommended but we may still use the likelihood method. 
In this case, we can perform a Monte Carlo approximation of the likelihood. 
Suppose $\bX_{ij}$ is the masked variable and we have an imputation model $q$ that we want to evaluate the likelihood.
We first  perform a multiple imputation: $\hat \bX_{ij}^{(1)},\cdots, \hat \bX_{ij}^{(M)}$
and then estimate the density using $\hat \bX_{ij}^{(1)},\cdots, \hat \bX_{ij}^{(M)}$ evaluated at $\bX_{ij}$. 
One simple example is the kernel density estimation:
\begin{equation}
\tilde q(x_j = \bX_{ij}|x_r = \bX_{i, r}, r= \bR_i\ominus e_j) = \frac{1}{Mh}\sum_{m=1}^M K\left(\frac{\bX_{ij} -  \hat \bX_{ij}^{(m)}}{h}\right),
\label{eq::kde}
\end{equation}
where $h>0$ is the smoothing bandwidth and $K(\cdot) >0$ is the smoothing kernel such as a Gaussian. 
Fortunately, in the case of MOO, we only need to apply a univariate KDE so 
there will be no curse-of-dimensionality and the choice of $h$ is a relatively
easy problem.
Algorithm \ref{alg::MC::likelihood} summarizes this procedure.
When using this Monte Carlo method,
it is like performing a multiple imputation on the observed entries
and then evaluating the density.

\begin{algorithm}
\caption{Monte Carlo approximation of MOO log-likelihood}
\label{alg::MC::likelihood}
\begin{algorithmic}
\State \textbf{Input:} Imputation model $q$ that is easy to sample from.
\begin{enumerate}
\item For each individual $i=1,\cdots, n$, 
and each observed entry $j \in \bR_i$. 
\item We mask  $\bX_{ij}$, pretending it to be a missing value.
\item Using imputation model $q$, we sample $M$ times from the conditional distribution 
$$
q(x_j |x_r = \bX_{i, r},r=\bR_i \ominus e_j) \equiv q(x_j |X_{\bR_i \ominus e_j} = \bX_{i, \bR_i \ominus e_j},R=\bR_i \ominus e_j)
$$ 
to generate $M$ imputed values:
$$
\hat \bX_{ij}^{(1)},\cdots, \hat \bX_{ij}^{(M)}.
$$
\item We compute the estimated density $\hat q(x_j = \bX_{ij}| \bX_{i, \bR_i\ominus e_j},  \bR_i\ominus e_j)$.
For continuous variable, we may use the KDE in equation \eqref{eq::kde}:
$$
\tilde q(x_j = \bX_{ij}| x_r = \bX_{i, r}, r= \bR_i\ominus e_j) = \frac{1}{Mh}\sum_{m=1}^M K\left(\frac{\bX_{ij} -  \hat \bX_{ij}^{(m)}}{h}\right).
$$
\item We compute the MOO likelihood
$$
\hat \ell_n(q) = \sum_{i=1}^n\sum_{j \in R_i} \log \tilde q(x_j = \bX_{ij}|x_r = \bX_{i, r}, r=\bR_i\ominus e_j).
$$
\end{enumerate}

\end{algorithmic}
\end{algorithm}

\section{Proofs}	\label{sec::proofs}

\subsection{Proof of Theorem~\ref{thm::moo}}
\begin{proof}
The key to this proof is that
we can view the problem of finding optimal imputation value as a risk minimization problem.
So finding the optimal imputation value is
similar to the derivation of a Bayes classifier.

By equation \eqref{eq::MOO::TR},
the population risk  is
\begin{align*}
\mathcal{E}(q) = \E\{L(q| \bX_{i,\bR_i},\bR_i)\} &= \sum_{r:r\neq 1_d}\int \bar L(q|x_{r} , r) p(x_r,r)dx_r,\\
\bar L(q|x_{r} , r) & = \sum_{j \in  r}\int L(x_j, x_j')  q(x'_j|x_{r\ominus e_j},R=r\ominus e_j)dx'_j,
\end{align*}
which further implies 
\begin{align*}
\mathcal{E}(q) & = \sum_{r:r\neq 1_d} \sum_{j \in  r} \int L(x_j, x_j')  q(x'_j|x_{r\ominus e_j},R=r\ominus e_j) p(x_r,r)dx'_j dx_r\\
&= \sum_{r:r\neq 1_d} \sum_{j \in  r} \underbrace{\left\{\int L(x_j, x_j')  q(x'_j|x_{r\ominus e_j},R=r\ominus e_j) p(x_j|x_{r\ominus e_j},r)dx_jdx'_j \right\}}_{ = \mathbf{L}_{r,j}} p(x_{r\ominus e_j}, r)dx_{r\ominus e_j}.
\end{align*}

Under the above decomposition, one can clearly see that
the imputation model $q(x_j|x_{r\ominus e_j},{r\ominus e_j})$ is applied separately for every variable $x_j$
and  every $(x_r,r)$. 
Thus, to investigate the optimal imputation model $q(x_j|x_r,r)$, we only need to consider the term 
\begin{equation}
\begin{aligned}
\mathbf{L}_{r,j}
 &=\int  L(x_j, x_j') p(x_j|x_{r\ominus e_j}, r) q(x'_j|x_{r\ominus e_j},R=r\ominus e_j)dx_jdx'_j,\\
&= \int \left\{\int L(x_j, x_j') p(x_j|x_{r\ominus e_j}, r) dx_j \right\}q(x'_j|x_{r\ominus e_j},R=r\ominus e_j)dx'_j.
\end{aligned}
\label{eq::Lj}
\end{equation}
The quantity in the bracket
has a deterministic minimizer 
$$
x'_j = \hat x^*_j = {\sf argmin}_{\theta} \int L( x_j,\theta ) p(x_j|x_{r\ominus e_j}, r)dx_j.
$$
Namely, we should choose $q(x'_j|x_{r\ominus e_j},R=r\ominus e_j)$ so that  
it is a degenerate density (point mass) at $x'_j = \hat x^*_j$.

Let $s = r\ominus e_j$ be the response pattern that differs from $r$ by $e_j$.
Clearly, $r = s\oplus e_j$
and we can rewrite equations \eqref{eq::Lj} 
as 
\begin{align*}
\mathbf{L}_{r,j} = \int \left\{\int L(x_j, x_j') p(x_j|x_s, s\oplus e_j) dx_j \right\}q(x'_j|x_{s},s)dx'_j,
\end{align*}
so the optimal imputation model $q(x_j|x_{s},s)$
should be the point mass $\hat x^*_j$ for $j \in \bar s$,
which completes the proof.

\end{proof}

\subsection{Proof of Proposition \ref{prop::prob}}
\begin{proof}

	\emph{Part 1: Conditional independence statement of $\mathcal{Q}^*_{MOO}$.}
	By construction, any imputation model $q\in\mathcal{Q}^*_{MOO}$ must satisfies
	$$
	p (x_j|x_r, r) = q(x_j|x_r, r) = p(x_j|x_r, r\oplus e_j)
	$$
	for any $j\in \bar r$.
	Since $j\in \bar r$, $r_j = 0$, so the above equality can be written as
	$$
	p(x_j|x_r, r_{-j}, R_j = 0) = p(x_j|x_r, r\oplus e_j) = p(x_j|x_r, r_{-j}, R_j = 1),
	$$
	which is the same as $X_j \perp R_j | X_{R}, R_{-j}$. 
	This holds for every $r$ and $j\in \bar r$, so we have the desired result.

	\emph{Part 2: Equivalence of the two conditional statements.} 
	We consider the direction $\eqref{eq::CI} \Rightarrow \eqref{eq::CI2}$ first.  Let $R\in\mathcal{R}$, and pick $j\in R$.  Then, define $R' \equiv R \ominus e_j$ as flipping the $j$-th element of $R$, so that $R_j' \equiv 1-R_j$ and $j\in \bar{R'}$.  From equation \eqref{eq::CI}, we have
		$$X_j \perp R_j' | X_{R'}, R'_{-j}.$$
		Observe that $R'_{-j} = R_{-j}$ and $X_{R'} = X_{R_{-j}}$.  We now have a series of equivalences
		\begin{align*}
			X_j \perp R_j' | X_{R'}, R'_{-j} &\iff X_j \perp R_j' | X_{R_{-j}}, R_{-j} \\
			&\iff X_j \perp (1-R_j) | X_{R_{-j}}, R_{-j} \\
			&\iff X_j \perp R_j | X_{R}, R_{-j},
		\end{align*}
		which is exactly equation \eqref{eq::CI2}.
		A similar argument proves the $\eqref{eq::CI2} \Rightarrow \eqref{eq::CI}$ direction.
	\end{proof}



\subsection{Proof of Theorem~\ref{thm::EIF}}
	\begin{proof}

		Since we can decompose 
		\begin{align*}
		\theta \equiv \E[X_1]  = \E\left[X_1\sum_r I(R=r)\right]  =  \sum_r \E[X_1I(R=r)] = \sum_r \theta_r.
		\end{align*}
		The EIF of $\theta$ will be the summation of EIF of each $\theta_r$.
		So we first consider the EIF for $\theta_r$.  

		To write the parameter in terms of the observables, we have the following decomposition
		for $r$ with $r_1 = 0$:
		\begin{align*}
			\theta_r &= \E[X_1 \cdot I(R=r)] \\
			&= \int x_1 \cdot p(x_1,r) dx_1\\
			&= \int x_1 \cdot p(x_1|x_r,r) \cdot p(x_r,r) dx_1dx_r \\
			&\stackrel{\eqref{eq::CI}}{=}  \int x_1  p(x_1|x_r,r\oplus e_1)dx_1 \cdot p(x_r,r) dx_r\\
			&= \int \mu_1(x_r, r\oplus e_1) \cdot p(x_r,r)dx_r.
		\end{align*}
		To avoid tedious computation that arises from deriving from first principles, it suffices to appeal to the approaches outlined in Section 3.4.3 of \cite{kennedy2023semiparametricdoublyrobusttargeted}.
		
		For all $r$ such that $r_1=0$, we have
		\begin{align*}
			\EIF(\theta_r) &= \EIF\left(\int \mu_1(x_r, r\oplus e_1) \cdot p(x_r,r)dx_r \right) \\
			&= \int \left[ \EIF(\mu_1(x_r, r\oplus e_1)) \cdot p(x_r,r) + \mu_1(x_r, r\oplus e_1) \cdot \EIF(p(x_r,r)) \right]dx_r \\
			&= \int \biggr[\frac{I(X_r=x_r,R=r\oplus e_1)}{p(x_r,r\oplus e_1)}(X_1 - \mu_1(x_r, r\oplus e_1)) \cdot p(x_r,r) \\
			&\qquad + \mu_1(x_r, r\oplus e_1) \cdot [I(X_r=x_r, R=r) - p(x_r,r)] \biggr] dx_r\\
			&= \frac{I(R=r\oplus e_1)}{p(X_r, r\oplus e_1)} (X_1 - \mu_1(X_r, r\oplus e_1)) \cdot p(X_r,r) \\
			& \qquad + I(R=r) \cdot \mu_1(X_r, r\oplus e_1) - \theta_r \\
			&= I(R=r\oplus e_1)\cdot O_1(X_r,r) (X_1 - \mu_1(X_r, r\oplus e_1)) + I(R=r) \cdot \mu_1(X_r, r\oplus e_1) - \theta_r.
		\end{align*}
		For patterns such that $r_1=1$, it is trivial to estimate $\theta_r$ using $I(R=r)X_1$.\\
	
		Now, summing over all patterns, we have the final EIF
		\begin{align*}
			\EIF(\theta) &= I(R_1=1)X_1 \\
			&\qquad + \sum_{r:r_1=0} \biggr[I(R=r\oplus e_1)\cdot O_1(X_r,r) (X_1 - \mu_1(X_r, r\oplus e_1)) \\
			&\qquad+ I(R=r) \cdot \mu_1(X_r, r\oplus e_1)\biggr] - \theta.
		\end{align*}
	\end{proof}

	\begin{proof}[Proof of Theorem~\ref{thm::MR}]
		We show unbiasedness for $\theta_r$ when at least one of the nuisance functions is correctly specified.  For all $r$ such that $r_1=0$, we show that $I(R=r\oplus e_1)\cdot O_1(X_r,r) (X_1 - \mu_1(X_r, r\oplus e_1)) + I(R=r) \cdot \mu_1(X_r, r\oplus e_1)$ is unbiased for $\theta_r$ as long as one of $O_1(X_r,r)$ and $\mu_1(X_r, r\oplus e_1)$ are correctly specified.  Under other appropriate regularity conditions, consistency and $\sqrt{n}$-asymptotic normality can be shown.\\
		
		\textit{Case 1: odds models are correctly specified.} Assume that $\mu_1(x_r,r\oplus e_1)$ may be misspecified but $O_1(x_r,r)$ is correctly specified.  
		Namely, the estimator $\hat \mu_1(x_r,r\oplus e_1)$ converges to $\tilde\mu_1(x_r,r\oplus e_1)\neq \mu_1(x_r,r\oplus e_1)$ while $\hat O_1(x_r,r)$ converges to the true $O_1(x_r,r)$.  We have
		\begin{align*}
			&\E[I(R=r\oplus e_1)\cdot O_1(X_r,r) (X_1 - \tilde\mu_1(X_r,r\oplus e_1)) + I(R=r) \cdot \tilde\mu_1(X_r,r\oplus e_1)] \\
			&\quad = \E[I(R=r\oplus e_1)\cdot O_1(X_r,r) \cdot X_1 + \tilde\mu_1(X_r,r\oplus e_1) \cdot (I(R=r)- I(R=r\oplus e_1) O_1(X_r,r))] \\
			&\quad =  \E[I(R=r\oplus e_1)\cdot O_1(X_r,r) \cdot X_1] \\
			&\quad = \theta_r.
		\end{align*}
	
		\textit{Case 2: outcome models are correctly specified.} 
		Assume the estimator $\hat \mu_1(x_r,r\oplus e_1)$ converges to true $\mu_1(x_r,r\oplus e_1)$ while $\hat O_1(x_r,r)$ converges to  $\tilde O_1(x_r,r)\neq O_1(x_r,r)$.
		We have
		\begin{align*}
			&\E[I(R=r\oplus e_1)\cdot \tilde O_1(X_r,r) (X_1 - \mu_1(X_r, r\oplus e_1)) + I(R=r) \cdot \mu_1(X_r, r\oplus e_1)] \\
			&\quad = \E[\tilde O_1(X_r,r) \cdot I(R=r\oplus e_1)\cdot (X_1 - \mu_1(X_r, r\oplus e_1)) + I(R=r) \cdot \mu_1(X_r, r\oplus e_1)] \\
			&\quad = \E[\tilde O_1(X_r,r) \cdot I(R=r\oplus e_1)\cdot (X_1 - \mu_1(X_r, r\oplus e_1))] + \theta_r \\
			&\quad = \theta_r.
		\end{align*}
		Note that for every pattern $r$ such that $r_1=0$ and $r\neq 0_d$, we have two nuisance functions.  This leads to the $[2^{d-1}-1]$-multiple robustness.  Of particular interest, each pair of nuisance functions for a given $\theta_r$ is variation independent.
	\end{proof}

\subsection{Proof of Theorem~\ref{thm::MOORT}}
\begin{proof}
For simplicity, we assume that all variables are continuous. The case of categorical and discrete cases follow
similarly when we use the stochastic rank.

Consider an observation $(x_r, r) = (\bX_{i, \bR_i}, \bR_i)$, where
the distribution of $\bX_{i, \bR_i}$ given its response vector $ \bR_i$ is $p(x_{r}|r = \bR_i)$.  

Let $j \in \bR_i$ be the variable we choose in MOORT to perform the rank transformation in step 1. 
Namely, we mask $\bX_{ij}$, and pretending that our observation has a response vector $R = \bR_i\ominus e_j$. 
In this case, 
the  conditional PDF of $\bX_{ij}$ given all other observed variables is 
$$
p(x_j|X_{\bR_i\ominus e_j} = \bX_{i,\bR_i\ominus e_j}, R = \bR_i).
$$ 
Let $G_{\bX_{ij}}(t)$ denote the CDF of $p(x_j|X_{\bR_i\ominus e_j} = \bX_{i,\bR_i\ominus e_j}, R = \bR_i)$.

Now, for any imputation model $q\in \mathcal{Q}^*_{MOO}$, 
imputed value  $\hat \bX_{ij}$ is drawn from 
\begin{align*}
q(x_{j}|X_r = \bX_{i, \bR_i\ominus e_j}, R = \bR_i\ominus e_j) &= q(x_{j}|X_{\bR_i\ominus e_j} = \bX_{i, \bR_i\ominus e_j}, R = \bR_i\ominus e_j\oplus e_j) \\
&= p(x_{j}|X_{\bR_i\ominus e_j} = \bX_{i, \bR_i\ominus e_j}, R = \bR_i),
\end{align*}
which coincides with the true conditional PDF of $\bX_{ij}$.

As the number of imputation $M\rightarrow \infty$,
the empirical CDF satisfies 
$$
\sup_t|\hat G_{\bX_{ij}}(t)- G_{\bX_{ij}}(t)|\overset{P}{\rightarrow} 0
$$ 
by the Glivenko-Cantelli theorem.
Thus, the rank $\hat S_{i} = \hat G_{\bX_{ij}}(\bX_{ij})$ satisfies
$$
\max_{i=1,\cdots, n}\left| \hat G_{\bX_{ij}}(\bX_{ij})-G_{\bX_{ij}}(\bX_{ij})\right| \overset{P}{\rightarrow}0 
$$
and each $G_{\bX_{ij}}(\bX_{ij})\equiv U_i \sim {\sf Uni}[0,1]$.


Since exactly one variable is selected per individual and individuals are independent, the collection
$
\hat S_{1},\cdots, \hat S_{n}
$
asymptotically behaves like IID random variables $U_1,\cdots, U_n\sim {\sf Uni}[0,1]$ as $M\rightarrow \infty$. 

Thus, as $M \to \infty$, the empirical distribution of $\hat S_{1},\cdots, \hat S_{n}$
converges under Kolmogorov distance to the empirical distribution of $U_1,\cdots, U_n$.
As $n\rightarrow\infty$, 
the empirical distribution of $U_1,\cdots, U_n$.
converges  to the uniform distribution on $[0,1]$.
So the result follows.

\end{proof}

\subsection{Proof of Theorem \ref{thm::MOOEN}}
\begin{proof}
The proof is a direct consequence of the definition of $\mathcal{Q}^*_{MOO}$ and the fact that the energy distance is a strictly proper scoring rule.

Recall that from Algorithm \ref{alg::MOOEN},
$$
L_{\sf EN}(q|\bX_{ij}) = \frac{1}{M}\sum_{m=1}^M \left|\bX_{ij} - \hat \bX_{ij}^{(m)}\right| - \frac{1}{2M (M-1)} \sum_{m<m'}\left|\hat \bX_{ij}^{(m)} - \hat \bX_{ij}^{\dagger(m')}\right| 
$$
and
the final output is
$$
\hat{\mathcal{R}}_{\sf EN}(q) = \frac{1}{n}\sum_{i=1}^n \sum_{j\in \bR_i} L_{\sf EN}(q|\bX_{ij}).
$$

As $M\rightarrow\infty$, it is clear that for each $\bX_{ij}$,
\begin{align*}
L_{\sf EN}(q|\bX_{ij})&\overset{P}{\rightarrow} \E_{\hat \bX_{ij}}\left|\bX_{ij} - \hat \bX_{ij}\right| - \frac{1}{2}\E\left| \hat \bX_{ij} -  \hat \bX_{ij}^\dagger\right|\\
& = -{\sf ES} \left(\bX_{ij}, Q_{j, \bX_{i, \bR_i}, \bR_i}\right)
\end{align*}
where $\E_{\hat \bX_{ij}}$ refers to expectation only to random variable $\hat \bX_{ij}$
and 
$
Q_{j, \bX_{i, \bR_i}, \bR_i}
$ 
is the distribution function corresponding to the imputation model $q(x_j | X_{\bR_i\ominus e_j} = \bX_{i, \bR_i\ominus e_j}, R=\bR_i \ominus e_j)$.
The above convergences is for the Monte Carlo errors.


By the definition of $\mathcal{Q}^*_{MOO}$ (equation \eqref{eq::MOO::opt::set}), if $q \in \mathcal{Q}^*_{MOO}$, its imputation distribution is 
\begin{align*}
q(x_j|X_{\bR_i\ominus e_j} = \bX_{i, \bR_i\ominus e_j}, R = \bR_i\ominus e_j) &\overset{\eqref{eq::MOO::opt::set}}{=} p(x_j|X_{\bR_i\ominus e_j} = \bX_{i, \bR_i\ominus e_j}, R = (\bR_i\ominus e_j)\oplus e_j)\\
& = p(x_j|X_{\bR_i\ominus e_j} = \bX_{i, \bR_i\ominus e_j}, R = \bR_i),
\end{align*}
which equals $p(x_j|X_{\bR_i\ominus e_j} = \bX_{i, \bR_i\ominus e_j}, R = \bR_i)$, 
the distribution of the masked value $\bX_{ij}$.
Therefore, each masked value $\bX_{ij}$ is  from the distribution $Q_{j, \bX_{i, \bR_i}, \bR_i}$
as well. 

This implies that the summation 
$$
\sum_{j\in \bR_i} L_{\sf EN}(q|\bX_{ij})\overset{P}{\rightarrow} \sum_{j\in \bR_i} -{\sf ES} \left(\bX_{ij}, Q_{j, \bX_{i, \bR_i}, \bR_i}\right)
$$
when $M\rightarrow\infty$ (Monte Carlo errors disappear)
and each $\bX_{ij}$ is from $Q_{j, \bX_{i, \bR_i}, \bR_i}$. 
Thus, 
we conclude that 
\begin{align*}
\hat{\mathcal{R}}_{\sf EN}(q) &= \frac{1}{n}\sum_{i=1}^n \sum_{j\in \bR_i} L_{\sf EN}(q|\bX_{ij})\\
& \overset{P}{\rightarrow}  \frac{1}{n}\sum_{i=1}^n\sum_{j\in \bR_i} -{\sf ES} \left(\bX_{ij}, Q_{j, \bX_{i, \bR_i}, \bR_i}\right)\qquad \mbox{(as $M\rightarrow\infty$)}\\
& \overset{P}{\rightarrow}  \E\left(\sum_{j\in \bR_i} -{\sf ES} \left(\bX_{ij}, Q_{j, \bX_{i, \bR_i}, \bR_i}\right)\right)\qquad \mbox{(as $n\rightarrow\infty$)}\\
& = 0
\end{align*}
since each $\bX_{ij}$  is from $Q_{j, \bX_{i, \bR_i}, \bR_i}$
so the expectation is $0$ due to the energy score ${\sf ES}$ being a strictly proper scoring rule.

%
%
%
%
%
%

\end{proof}

\subsection{Proof of Theorem~\ref{thm::MLE}}

Before we proceed, we want to note again that 
we have the following notational conventions:
\begin{equation}
\begin{aligned}
 q(x_j = \bX_{ij}|x_{r} = \bX_{i, r}, r = \bR_i\ominus e_j)
& \equiv  q(x_j = \bX_{ij}|X_{r} = \bX_{i, r}, R = \bR_i\ominus e_j)\\
& \equiv  q(x_j = \bX_{ij}|\bX_{i, \bR_i\ominus e_j}, \bR_i\ominus e_j).
\end{aligned}
\label{eq::convention}
\end{equation}
The first expression, $ q(x_j = \bX_{ij}|x_{r} = \bX_{i, r}, r = \bR_i\ominus e_j)$,
makes the proofs of likelihood inference a lot cleaner since the expectation operator is clearly only applies to the random variable $(\bX_i, \bR_i)$.

\begin{proof}

We will derive the asymptotic normality using the standard procedure for Z-estimation (score equation). 
Recall the score function in equation \eqref{eq::MOO::likelihood3} is
\begin{equation*}
S_n(\theta) = \nabla\ell_n(\theta)   = \frac{1}{n}\sum_{i=1}^n\sum_{j \in \bR_i} \nabla_\theta \log q_\theta(x_j = \bX_{ij}|x_{r} = \bX_{i, r}, r = \bR_i\ominus e_j),
\end{equation*}
which implies the population score 
$$
\bar S(\theta) = \nabla \bar \ell(\theta) = \E\left(\sum_{j \in \bR_1} \nabla_\theta \log q_\theta(x_j = \bX_{1j}|x_{r} = \bX_{1, r}, r = \bR_1\ominus e_j) \right).
$$
In $S_n(\theta)$, we additionally divide it by $n$
to make it align with the population score. 
This will not influence the location of the maximizer (MLE).

By assumption (A1), the MLE and population MLE both solve the score equation, namely,
\begin{align*}
S_n(\hat\theta_n) = 0 ,\qquad \bar S(\theta^*) = 0.
\end{align*}
Thus, using the Taylor expansion, we have the following decomposition:
\begin{align*}
S_n(\theta^*) - \bar S(\theta^*) & = S_n(\theta^*) - S_n(\hat \theta_n)\\
& = - (\hat \theta_n - \theta^*) \nabla S_n(\theta^*) + o_P(\|\hat \theta_n - \theta^*\|).
\end{align*}
Note that we need the uniformly bounded  second-order derivatives on $\bar S(\theta)$
condition in (A3) to ensure the remainder terms is $o_P(\|\hat \theta_n - \theta^*\|)$.

By rearrangements, 
$$
\hat \theta_n - \theta^* = -  \nabla S^{-1}_n(\theta^*) (S_n(\theta^*) - \bar S(\theta^*)) + o_P(\|\hat \theta_n - \theta^*\|). 
$$
Note that the term 
\begin{equation}
\begin{aligned}
S_n(\theta^*) - \bar S(\theta^*)  &= \frac{1}{n}\sum_{i=1}^n \Gamma(\theta| \bX_i, \bR_i) - \E[\Gamma(\theta| \bX_i, \bR_i)],\\
  \Gamma(\theta| \bX_i, \bR_i) &= \sum_{j \in \bR_i} \nabla_\theta \log q_\theta(x_j = \bX_{ij}|x_{r} = \bX_{i, r}, r = \bR_i\ominus e_j)\\
  & = \nabla_\theta  \ell(\theta|\bX_{i,\bR_i},\bR_i),
\end{aligned}
\label{eq::MLE::G}
\end{equation}
are summation of IID random variables, so it has asymptotic normality by the central limit theorem:
\begin{equation}
\sqrt{n}(S_n(\theta^*) - \bar S(\theta^*)) \overset{d}{\rightarrow}N (0, \E\left[\Gamma(\theta^*| \bX_i, \bR_i) \Gamma(\theta^*| \bX_i, \bR_i)^T \right ]). 
\label{eq::MLE::N}
\end{equation}
So all we need is to control the inverse matrix $ \nabla S^{-1}_n(\theta^*)$.

Using the fact that each element of the matrix
$$
\nabla S_n(\theta^*) = \frac{1}{n}\sum_{i=1}^n\sum_{j \in \bR_i} \nabla_\theta \nabla_\theta \log q_\theta(x_j = \bX_{ij}|x_{r} = \bX_{i, r}, r = \bR_i\ominus e_j)
$$
is a sample average of i.i.d. random variables, by Assumption (A2) and the law of large numbers, 
we have 
$$
\nabla S_n(\theta^*) \overset{P}{\rightarrow} \nabla \bar S(\theta^*)  = \bar H(\theta^*) = \E\left\{\sum_{j \in \bR_i} \nabla_\theta \nabla_\theta \log q_\theta(x_j = \bX_{ij}|x_{r} = \bX_{i, r}, r = \bR_i\ominus e_j)\right\}.
$$
Again, (A2) require the Hessian matrix $ \bar H(\theta^*)$ to be invertible, so we have 
$$
\nabla S^{-1}_n(\theta^*)\overset{P}{\rightarrow} \nabla \bar S^{-1}(\theta^*) = \bar H^{-1}(\theta^*)
$$
by continuous mapping theorem.
Combing this with equation \eqref{eq::MLE::N} and applying the Slutsky theorem, we conclude that 
\begin{align*}
\sqrt{n}(\hat \theta_n - \theta^*) &= -  \sqrt{n}\nabla S^{-1}_n(\theta^*) (S_n(\theta^*) - \bar S(\theta^*)) + o_P(\sqrt{n}\|\hat \theta_n - \theta^*\|)\\
&\overset{d}{\rightarrow}  N (0, \bar H^{-1}(\theta^*)\E\left[\Gamma(\theta^*| \bX_i, \bR_i) \Gamma(\theta^*| \bX_i, \bR_i)^T \right ]\bar H^{-1}(\theta^*)),
\end{align*}
which completes the proof. 
\end{proof}

\subsection{Proof of Theorem~\ref{thm::GD}}	\label{sec::thm::GD}

Before we prove Theorem~\ref{thm::GD}, we first introduce a useful lemma on the uniform convergence.
\begin{lemma}
Under assumptions (A1) and (A3), we have the following uniform bounds:
\begin{align*}
\sup_{\theta \in \Theta} \left|\frac{1}{n}\ell_n(\theta) - \bar \ell(\theta)\right| &\overset{P}{\rightarrow}0\\
\sup_{\theta \in \Theta} \left\|\frac{1}{n}\nabla\ell_n(\theta) - \nabla\bar \ell(\theta)\right\|_{\max} &\overset{P}{\rightarrow}0\\
\sup_{\theta \in \Theta} \left\|\frac{1}{n}\nabla\nabla\ell_n(\theta) - \nabla\nabla\bar \ell(\theta)\right\|_{\max} &\overset{P}{\rightarrow}0.
\end{align*}
\label{lem::ERM}
\end{lemma}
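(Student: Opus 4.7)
The plan is to establish a standard uniform law of large numbers (ULLN) for each of the three quantities by combining pointwise convergence at a finite $\varepsilon$-cover of $\Theta$ with a stochastic equicontinuity argument. The key observation is that assumption (A3) controls the Lipschitz behavior of the Hessian, and by successive integration this yields envelope and Lipschitz bounds for the gradient and the log-likelihood itself, putting us in the regime where the classical parametric-class ULLN (see, e.g., Newey--McFadden's Lemma 2.4 or van der Vaart, Chapter 19) applies.

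Concretely, I would first use (A3) and the mean value theorem in $\theta$ to conclude that each entry of $\nabla^2 \ell(\theta|X_R,R)$ is Lipschitz in $\theta$ with Lipschitz constant bounded (up to a dimensional factor $C$) by $\Lambda(X_R,R)$. Fixing any reference point $\theta_0 \in \Theta$, compactness of $\Theta$ from (A1) then gives
$$\sup_{\theta \in \Theta}\|\nabla\nabla \ell(\theta|X_R,R)\|_{\max} \le \|\nabla\nabla \ell(\theta_0|X_R,R)\|_{\max} + C\,\mathrm{diam}(\Theta)\,\Lambda(X_R,R),$$
which is an integrable envelope for the Hessian class provided $\nabla^2\ell(\theta_0|\cdot)$ is integrable at the reference point (a mild regularity that is implicit in the MLE setup). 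Iterating the same Taylor/integration argument once more produces analogous Lipschitz-plus-envelope bounds for $\nabla \ell(\theta|\cdot)$ and then for $\ell(\theta|\cdot)$.

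With these ingredients in hand, the ULLN argument is standard: cover $\Theta$ with finitely many balls of radius $\varepsilon$, apply the ordinary law of large numbers to $\frac{1}{n}\sum_i \nabla^2 \ell(\theta_k|\bX_{i,\bR_i},\bR_i)$ at each center $\theta_k$, and use the Lipschitz bound together with $\frac{1}{n}\sum_i \Lambda(\bX_{i,\bR_i},\bR_i) \overset{P}{\to} \E[\Lambda]$ to bound the empirical-process oscillation inside each ball by a term of order $\varepsilon$. Sending $n \to \infty$ and then $\varepsilon \to 0$ gives uniform convergence of the Hessian, and re-running the identical scheme for the gradient class and the function class delivers the remaining two assertions.

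The main obstacle is purely bookkeeping: (A3) literally only controls third derivatives, so to produce envelopes for the Hessian, gradient, and $\ell$ itself one genuinely needs pointwise integrability at some $\theta_0$ as a starting condition. Once this (standard) regularity is acknowledged, there is no real difficulty, and the result follows from the Lipschitz-plus-compactness version of the Glivenko--Cantelli theorem applied three times.
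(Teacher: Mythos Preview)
Your proposal is correct and takes essentially the same approach as the paper: both use compactness of $\Theta$ together with an integrable Lipschitz envelope (coming from the bounded derivatives) to place the function classes in the parametric Glivenko--Cantelli regime. The paper simply invokes Example~19.7 and Theorem~19.4 of van der Vaart directly, whereas you spell out the underlying finite-cover-plus-equicontinuity mechanics; your observation that pointwise integrability at a reference $\theta_0$ is needed to bootstrap envelopes from (A3) alone is a genuine bookkeeping point that the paper's proof leaves implicit.
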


\begin{proof}
The proof is an application of Example 19.7 of \cite{van2000asymptotic}. 
The key is to observe that 
$$
\frac{1}{n}\ell_n(\theta)  = \frac{1}{n}\sum_{i=1}^n \ell(\theta|\bX_{i,\bR_i},\bR_i)
$$
is an empirical average. 
So we can use empirical process theory to obtain this bound.

Under (A1) and (A3), there exists $\Lambda_1(X_R,R)$ such that 
$$
\sup_{\theta \in \Theta}\|\nabla_\theta \ell(\theta|X_R,R)\|_{\max} \leq \Lambda_1(X_R,R)
$$
and $\E[|\Lambda_1(X_R,R)|]<\infty$.
Thus, 
the collection of function 
$$
\mathcal{L} = \{ \ell(\theta|x_r,r): \theta\in\Theta\}
$$
has an $\epsilon-$bracketing number shrinking at rate $O(\epsilon^{-{\sf dim}(\Theta)})$. 
So the collection $\mathcal{L}$ forms a Glivenko-Cantelli class and by Theorem 19.4 of \cite{van2000asymptotic},
we have 
$$
\sup_{\theta \in \Theta} \left|\frac{1}{n}\ell_n(\theta) - \bar \ell(\theta)\right| \overset{P}{\rightarrow}0.
$$

The case of gradient and Hessian convergence follows similarly; we just focus on each element 
and use the fact that the parameter space is compact (from (A1)) and the third-order derivative is integrable (from (A3)).

\end{proof}


With Lemma \ref{lem::ERM}, we are able to prove Theorem \ref{thm::GD}.

\begin{proof}
Our proof consists of three parts.
In Part 1, we will show that regions around the population MOO log-likelihood is strongly concave (all eigenvalues of the Hessian matrix
are negative).
In Part 2, we will extend the result of Part 1 to sample log-likelihood.
We will utilize the uniform bounds in 
Lemma \ref{lem::ERM} in this part.
Part 3 is the analysis on the algorithmic convergence.
A technical challenge here is that our objective function is the sample MOO log-likelihood,
which is a random quantity while our assumptions (A1-3) are on the population MOO log-likelihood.
So we have to use uniform bounds to convert concavity of the population MOO log-likelihood to
the sample MOO log-likelihood.


{\bf Part 1: local concavity of the population log-likelihood.}
Assumption (A2) requires that the Hessian matrix $\bar H(\theta) = \nabla \nabla \bar \ell(\theta)$
is invertible at $\theta =\theta^*$ and $\theta^*$ is an interior point in the parameter space by (A1).
Assumption (A3) further requires that the third-order derivatives of $\bar \ell(\theta)$
is uniformly bounded.
Thus, there exists a constant $\zeta_1>0$ such that all eigenvalues of 
$\nabla\nabla\bar \ell(\theta)$ is 
negative for $\theta \in B(\theta^*, \zeta_1)$. 

Specifically, we  construct $\zeta_1$  as follows. The uniform third-order derivative implies
that the Hessian $ \bar H(\theta)$ is Lipschitz in the sense that 
$$
\| \bar H(\theta_1) - \bar H(\theta_2)\|_{2} \leq \psi_3 \|\theta_1-\theta_2\|,
$$
where $\psi_3 = \sup_{\theta\in\Theta} \max_{j_1,j_2,j_3}\left|\frac{\partial}{\partial \theta_{j_1}}\frac{\partial}{\partial \theta_{j_2}}\frac{\partial}{\partial \theta_{j_3}}\bar \ell(\theta)\right|<\infty$
by assumption (A3).

Let $\lambda^*_{\max}<0$ be the largest eigenvalue of $ \bar H(\theta^*)$. 
We pick 
$$
\zeta_1 = \frac{-\lambda^*_{\max}}{3 \psi_3}
$$
so that by Weyl's theorem (see, e.g., Chapter 4 of \citealt{horn2012matrix}), 
$$
|\lambda_{\max}(\bar H(\theta)) - \lambda^*_{\max}|\leq \| \bar H(\theta) - \bar H(\theta^*)\|_2 \leq \psi_3 \|\theta-\theta^*\|
$$
and we conclude that 
\begin{equation}
\lambda_{\max}(\bar H(\theta)) \leq \lambda^*_{\max} + \psi_3 \|\theta-\theta^*\| \leq \frac{2}{3} \lambda^*_{\max}<0 
\label{eq::GA::pf1}
\end{equation}
for any $\theta \in B(\theta^*, \zeta_1)$.
As a result, all eigenvalues of $\bar H(\theta)$ are negative when $\theta \in B(\theta^*, \zeta_1)$,
so $\bar \ell(\theta)$ is strongly concave within $B(\theta^*, \zeta_1)$.

{\bf Part 2: local concavity of the sample log-likelihood.}
To convert the results on population log-likelihood to sample log-likelihood,
we need to use uniform bounds. 
Let $\bar \ell_n(\theta) = \frac{1}{n}\ell_n(\theta)$ be the normalized log-likelihood function.
Clearly, the maximizer of $\bar \ell_n(\theta)$ is the same as $ \ell_n(\theta)$,
so we will focus on analyzing the gradient ascent on $\bar \ell_n(\theta)$ 
since this quantity has a  limiting behavior easier to analyze.

Let $\bar H_n(\theta)  = \nabla\nabla \bar \ell_n(\theta)$ be the Hessian matrix. 
Under assumption (A3) and by Lemma \ref{lem::ERM},
$$
\sup_{\theta\in\Theta} \left|\bar H_n(\theta) - \bar H(\theta)\right|\overset{P}{\rightarrow}0.
$$
Denote the event 
$$
E_{1,n} = \left\{\sup_{\theta\in\Theta} \left|\bar H_n(\theta) - \bar H(\theta)\right|\leq -\frac{1}{3}\lambda^*_{\max}\right\}.
$$
Note that $\lambda^*_{\max}<0$ so $-\frac{1}{3}\lambda^*_{\max}$ is a positive number.
Clearly, $P(E^C_{1,n})\rightarrow 0$. 
Under event $E_{1,n}$, for any point $\theta \in B(\theta^*, \zeta_1)$, 
the maximal eigenvalues $\lambda_{\max}(\bar H_n(\theta))$
satisfies
\begin{align*}
\lambda_{\max}(\bar H_n(\theta)) &\leq  \lambda_{\max}(\bar H(\theta)) + \sup_{\theta\in\Theta}\left|\lambda_{\max}(\bar H_n(\theta))-\lambda_{\max}(\bar H(\theta))\right|\\
&\leq  \lambda_{\max}(\bar H(\theta)) - \frac{1}{3}\lambda^*_{\max}\\
& \overset{\eqref{eq::GA::pf1}}{\leq} \frac{1}{3}\lambda^*_{\max}.
\end{align*}
Note that we use the Weyl's inequality again in the first inequality.

Now consider the event 
$$
E_{2,n} = \left\{\|\hat \theta_n - \theta^*\|\leq \frac{1}{2}\zeta_1 \right\}. 
$$
Since we know that $\|\hat \theta_n - \theta^*\|\overset{P}{\rightarrow}0$, 
$P(E^C_{2,n})\rightarrow 0$. 
Under the event $E_{2,n}$, the ball $B\left(\hat\theta_n, \frac{1}{2}\zeta_1\right)\subset B(\theta^*, \zeta_1)$.
Thus, we choose 
$$
\zeta_0 = \frac{1}{2}\zeta_1 =\frac{-\lambda^*_{\max}}{6 \psi_3}.
$$
Under such choice of $\zeta_0$ and when events $E_{1,n},E_{2,n}$ occur, we have $B\left(\hat\theta_n,\zeta_0\right)\subset B(\theta^*, \zeta_1)$,
so that the maximal eigenvalues 
$$
\sup_{\theta \in B\left(\hat\theta_n,\zeta_0\right)} \lambda_{\max}(\bar H_n(\theta)) \leq \frac{1}{3}\lambda^*_{\max} < 0.
$$
Thus, the function $\bar \ell_n(\theta)$ is strongly concave within $B\left(\hat\theta_n,\zeta_0\right)$
when both $E_{1,n} $ and $E_{2,n}$ occur, which 
has a probability greater than or equal to $1-P(E^C_{1,n}) - P(E^C_{2,n})\rightarrow 1$.

{\bf Part 3: algorithmic convergence.}
Part of this proof is
from standard analysis in convex optimization \citep{boyd2004convex}. 
The technical difficulty is that all our assumptions (A1-3)
are on the population log-likelihood but the gradient ascent algorithm is applied to a
sample log-likelihood.
So the key is to control the smoothness of the sample log-likelihood.

Recall that our gradient ascent is 
$$
\theta^{(t+1)} = \theta^{(t)} + \xi \frac{1}{n}S_n (\theta^{(t)}) = \theta^{(t)} + \xi \bar S_n (\theta^{(t)}), 
$$
where $\bar S_n(\theta) = \frac{1}{n}S_n(\theta) = \frac{1}{n}\nabla \ell_n(\theta) = \nabla \bar\ell_n(\theta)$.

For $\theta \in B\left(\hat\theta_n,\zeta_0\right)$, the analysis in Part 2 shows that 
$\bar \ell_n(\theta)$ is $M^*$-strongly concave with $M^* = \frac{1}{3}\lambda^*_{\max}<0$. So we have 
$$
\bar \ell_n(\hat \theta) - \bar \ell_n(\theta^{(t)}) \leq (\hat \theta_n - \theta^{(t)})^T \bar S_n(\theta^{(t)}) + \frac{M^*}{2}\|\hat \theta_n - \theta^{(t)}\|^2,
$$
which implies
\begin{equation}
2(\hat \theta_n - \theta^{(t)})^T \bar S_n(\theta^{(t)}) \geq 2(\bar \ell_n(\hat \theta) - \bar \ell_n(\theta^{(t)})) - M^*\|\hat \theta_n - \theta^{(t)}\|^2.
\label{eq::GA::pf3}
\end{equation}

On the other hand, 
the score function is smooth 
in the sense that 
\begin{equation}
\|\bar S_n(\theta_1) - \bar S_n(\theta_2)\| \leq 2 H_{\max} \|\theta_1-\theta_2\|,
\label{eq::GA::pf4}
\end{equation}
where $H_{\max} = \sup_{\theta} \|\bar H(\theta)\|_2$ is the maximal spectral norm ($2$-norm) of the population Hessian matrix. 
Equation \eqref{eq::GA::pf4} follows from the fact that the spectral norm
$$
\|\bar S_n(\theta_1) - \bar S_n(\theta_2)\| \leq \sup_{\theta} \|\bar H_n(\theta)\|_2 \|\theta_1-\theta_2\|
$$
and the uniform convergence of the Hessian in Lemma \ref{lem::ERM}  implies that 
$$
\sup_{\theta}| \bar H_n(\theta) - \bar H(\theta)|\overset{P}{\rightarrow } 0 ,
$$
so we have 
$$
P\left(\sup_{\theta} \|\bar H_n(\theta)\|_2  \leq 2 \sup_{\theta}\|\bar H(\theta)\|_2\right) \rightarrow 1.
$$
Let event $E_{3,n}$ be
$$
E_{3,n} = \left\{\sup_{\theta} \|\bar H_n(\theta)\|_2  \leq 2 \sup_{\theta}\|\bar H(\theta)\|_2\right\}.
$$ 
Thus, under event $E_{3,n}$, equation \eqref{eq::GA::pf4} holds.
In the language of convex optimization \citep{boyd2004convex}, equation \eqref{eq::GA::pf4} can be interpreted as the function $\ell_n$ is $L^*$-smooth with $L^* = 2 H_{\max}$.

Note that for an $L^*$-smooth function $f$, i.e., $\|\nabla f(x) - \nabla f(y)\|\leq L^* \|x-y\|$, we have 
\begin{equation}
\begin{aligned}
f(x) - f(y) &\leq (x-y)^T \nabla f(y) + \frac{L^*}{2}\|x-y\|^2\\
f(y) - f(x) &\geq (y-x)^T \nabla f(y) - \frac{L^*}{2}\|x-y\|^2.
\end{aligned}
\label{eq::GA::pf5}
\end{equation}

When $\bar \ell_n(\theta)$ is $L^*$-smooth, we have the following inequalities:
\begin{align*}
\bar \ell_n(\hat\theta_n) - \bar \ell_n\left(\theta + \frac{1}{L^*}\bar S_n(\theta)\right)&\geq 0\\
\Rightarrow \bar \ell_n(\hat\theta_n) - \bar \ell_n\left(\theta + \frac{1}{L^*}\bar S_n(\theta)\right) - \bar \ell_n(\theta) &\geq- \bar \ell_n\left(\theta \right)\\
\Rightarrow  \bar \ell_n(\hat\theta_n)  - \bar \ell_n(\theta) &\geq\bar \ell_n\left(\theta + \frac{1}{L^*}\bar S_n(\theta)\right)- \bar \ell_n\left(\theta \right)\\
\end{align*}
Now applying equation \eqref{eq::GA::pf5} to the last inequality with $y= \theta+\frac{1}{L^*}\bar S_n(\theta)$ and $x = \theta$ and $f = \bar \ell_n$,
we obtain the inequality
\begin{align*}
\bar \ell_n\left(\theta +\frac{1}{L^*}\bar S_n(\theta)\right)- \bar \ell_n\left(\theta \right) &\geq \frac{1}{L^*} \|\bar S_n(\theta)\|^2 - \frac{L^*}{2} \|\frac{1}{L^*}\bar S_n(\theta)\|\\
& = \frac{1}{2L^*} \|\bar S_n(\theta)\|^2.
\end{align*}
Therefore, we conclude that 
\begin{equation}
\bar \ell_n(\hat\theta_n)  - \bar \ell_n(\theta) \geq\bar \ell_n\left(\theta + \frac{1}{L^*}\bar S_n(\theta)\right)- \bar \ell_n\left(\theta \right)\geq \frac{1}{2L^*} \|\bar S_n(\theta)\|^2.
\label{eq::GA::pf6}
\end{equation}


Now we formally derive the convergence of the gradient ascent. 
Recall that $\theta^{(t+1)} = \theta^{(t)} + \xi \bar S_n(\theta^{(t)})$. 
\begin{equation}
\begin{aligned}
\|\theta^{(t+1)} - \hat \theta_n\|^2 &=  \left\|\theta^{(t)} + \xi\bar S_n(\theta^{(t)}) - \hat \theta_n\right\|^2\\
& = \|\theta^{(t)} - \hat \theta_n\|^2 - 2\xi  (\hat \theta_n-\theta^{(t)} )^T \bar S_n(\theta^{(t)}) + \xi^2 \left\|\bar S_n(\theta^{(t)})\right\|^2\\
&\overset{\eqref{eq::GA::pf3}}{\leq }\|\theta^{(t)} - \hat \theta_n\|^2 - \xi\left[2(\bar \ell_n(\hat \theta) - \bar \ell_n(\theta^{(t)})) - M^*\|\hat \theta_n - \theta^{(t)}\|^2\right] + \xi^2 \left\|\bar S_n(\theta^{(t)})\right\|^2\\
& = (1+M^*\xi) \|\theta^{(t)} - \hat \theta_n\|^2  -\xi \left[2(\bar \ell_n(\hat \theta) - \bar \ell_n(\theta^{(t)})) - \xi \left\|\bar S_n(\theta^{(t)})\right\|^2\right]\\
&\overset{\eqref{eq::GA::pf6}}{\leq }
(1+M^*\xi) \|\theta^{(t)} - \hat \theta_n\|^2  -\xi \left[\frac{1}{L^*} \|\bar S_n(\theta^{(t)})\|^2- \xi \left\|\bar S_n(\theta^{(t)})\right\|^2\right]\\
& = (1+M^*\xi) \|\theta^{(t)} - \hat \theta_n\|^2 - \frac{\xi}{L^*} (1- \xi L^*) \|\bar S_n(\theta^{(t)})\|^2.
\end{aligned}
\label{eq::GA::pf7}
\end{equation}
Note that $M^* = \frac{1}{3}\lambda^*_{\max}<0$ and $L^* = 2 H_{\max}>0$,
so when the step size 
$$
\xi < \min\left\{\frac{-1}{M^*}, \frac{1}{L^*}\right\}, 
$$
equation \eqref{eq::GA::pf7} becomes
\begin{align*}
\|\theta^{(t+1)} - \hat \theta_n\|^2 &\leq (1+M^*\xi) \|\theta^{(t)} - \hat \theta_n\|^2 - \frac{\xi}{L^*} (1- \xi L^*) \|\bar S_n(\theta^{(t)})\|^2\\
&\leq 
(1+M^*\xi) \|\theta^{(t)} - \hat \theta_n\|^2 
\end{align*}
and by telescoping, 
$$
\|\theta^{(t)} - \hat \theta_n\| \leq (1+M^*\xi)^{t/2} \|\theta^{(0)} - \hat \theta_n\| = \rho^t  \|\theta^{(0)} - \hat \theta_n\|
$$
with $\rho =\sqrt{1+M^*\xi} = \sqrt{1+\frac{1}{3}\lambda^*_{\max}\xi}\in(0,1)$, which is the desire result.

Throughout the proof, we need events $E_{1,n}, E_{2,n}, E_{3,n}$ to hold
and choose the radius $\zeta_0 = \frac{-\lambda^*_{\max}}{6\psi_3} $ and stepsize $\xi < \xi_0 = \min\left\{\frac{-1}{M^*}, \frac{1}{L^*}\right\} = \min\left\{\frac{-3}{\lambda^*_{\max}}, \frac{1}{2H_{\max}}\right\}$. 
Thus, the above algorithmic convergence holds with a probability 
\begin{align*}
P(E_{1,n}\cap E_{2,n}\cap E_{3,n}) \geq 1- P(E_{1,n}) - P(E_{2,n}) - P(E_{3,n}) \rightarrow 1.
\end{align*}

\end{proof}

\subsection{Proof of Theorem~\ref{thm::MOO::ID}}

We will utilize the following reparameterization method in equation \eqref{eq::sep2} for our proof. 
\begin{lemma}[reparameterization method]
Let $s\in\{0,1\}^d$ be a response vector and $s \neq 0_d$
and $q$ is an imputation model.
Then we have
\begin{align*}
\sum_{j \in s} \log q(x_j &|x_{s\ominus e_j} , R = s\ominus e_j)
=\sum_{r:r\neq 1_d}    \sum_{j \in \bar r} I(s = r\oplus e_j)\log q(x_j |x_{r} , r) 
\end{align*}
\label{lem::repar}
\end{lemma}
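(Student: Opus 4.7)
The plan is to prove the identity by a direct re-indexing argument. The indicator $I(s = r \oplus e_j)$ on the right-hand side sets up a bijection between index pairs on the two sides, and once this bijection is verified the summands match tautologically.

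First, I would show that each term on the left-hand side corresponds to exactly one non-zero term on the right-hand side. Given $j \in s$, define $r = s \ominus e_j$. Since $s_j = 1$, we have $r_j = 0$, hence $j \in \bar{r}$ (so $j$ is in the range of the inner sum on the RHS) and $r \neq 1_d$ (since $r_j = 0$, so $r$ is in the range of the outer sum). A direct calculation gives $r \oplus e_j = (s \ominus e_j) \oplus e_j = s$, so the indicator fires. Moreover $x_r = x_{s \ominus e_j}$, so the summand $\log q(x_j | x_r, r)$ equals $\log q(x_j | x_{s \ominus e_j}, R = s \ominus e_j)$, matching the $j$-th term on the LHS.

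Next, I would verify the reverse direction: whenever the indicator on the RHS is non-zero, we must have $r = s \ominus e_j$; combined with the constraint $j \in \bar{r}$ (which forces $r_j = 0$), this gives $s_j = 1$, i.e., $j \in s$. So each non-zero summand on the RHS arises from exactly one $j \in s$, and the map $j \mapsto (s \ominus e_j, j)$ above is its inverse.

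I do not anticipate a substantive obstacle; the identity is essentially a combinatorial bookkeeping statement about changing the order of summation from an individual-view (sum over $j \in s$) to a pattern-view (sum over the ``source'' pattern $r$ together with the masked variable $j$). The only subtlety worth checking is that the restriction $r \neq 1_d$ on the RHS does not exclude any valid pair: since $r = s \ominus e_j$ always has $r_j = 0$ in the bijection, we have $r \neq 1_d$ automatically, so no terms are lost.
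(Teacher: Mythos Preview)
Your proposal is correct and takes essentially the same approach as the paper: both arguments establish the equivalence of the index sets $\{j: s_j=1\}$ and $\{(r,j): r\neq 1_d,\ j\in\bar r,\ s=r\oplus e_j\}$ via the bijection $j\mapsto (s\ominus e_j,\, j)$, then observe that the summands agree under this correspondence. Your presentation is arguably a bit cleaner in explicitly checking both directions of the bijection and the side condition $r\neq 1_d$, but the underlying idea is identical.
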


\begin{proof}

A key to this proof is that the following two conditions are equivalent:
\begin{equation}
\{(r,s,j): s_j = 1, r=s\ominus e_j\} \equiv \{(r,s,j):r\neq 1_d, s = r\oplus e_j, j \in \bar r\}.
\label{eq::repar1}
\end{equation}
Equation \eqref{eq::repar1} implies that 
$I(s_j=1, r=s\ominus e_j) = I(s = r\oplus e_j, j\in \bar r)$. 
Thus,
\begin{align*}
\sum_{j \in s} \log q(x_j &|x_{s\ominus e_j} , R = s\ominus e_j)\\
& = \sum_{j } I(s_j = 1)\log q(x_j |x_{s\ominus e_j} , R = s\ominus e_j)\\
& = \sum_{r:r\neq 1_d} I(r=s\ominus e_j)\sum_{j } I(s_j = 1)\log q(x_j |x_{s\ominus e_j} , R = s\ominus e_j)\\
& = \sum_{r:r\neq 1_d}\sum_{j} I(r=s\ominus e_j, s_j = 1)\log q(x_j |x_{r} , R = r)\\
& = \sum_{r:r\neq 1_d}\sum_{j} I(s = r\oplus e_j, j\in \bar r)\log q(x_j |x_{r} , R = r)\\
& = \sum_{r:r\neq 1_d}    \sum_{j \in \bar r} I(s = r\oplus e_j)\log q(x_j |x_{r} , r) .
\end{align*}


\end{proof}

With the above reparameterization Lemma, we can formally prove Theorem~\ref{thm::MOO::ID}. 

\begin{proof}
By Lemma \ref{lem::repar}, we have
\begin{align*}
\sum_{j \in \bR_i} \log q_\theta(x_j = \bX_{ij}&|x_{r} = \bX_{i, r}, r = \bR_i\ominus e_j)\\
& = \sum_{r:r\neq 1_d}    \sum_{j \in \bar r} I(\bR_i = r\oplus e_j)\log q_\theta(x_j = \bX_{ij}|x_{r} = \bX_{i,r}, r) .
\end{align*}
The power of the above equality is that 
in the first summation, we have a random indices $j\in\bR_i$
while in the second summation, there is no randomness in the summation $\sum_{r:r\neq 1_d}    \sum_{j \in \bar r}$.

Using the reparameterization method,
we can rewrite the MOO log-likelihood as
\begin{align*}
\bar \ell(q) & = \E\left\{\sum_{j \in \bR_1} \log q(x_j = \bX_{1j}|x_{r} = \bX_{1, r}, r = \bR_1\ominus e_j)\right\}\\
&= \E\left\{ \sum_{r:r\neq 1_d}  \sum_{j \in \bar r} I(\bR_1 = r\oplus e_j)\log q(x_j = \bX_{1j}|x_{r} = \bX_{1,r}, r) \right\}\\
& = \sum_{r:r\neq 1_d} \sum_{j\in \bar r}  \int p(x_{j}, x_r, r\oplus e_j)\log q(x_j |x_{r}, r) dx_j dx_r\\
& = \sum_{r:r\neq 1_d} \sum_{j\in \bar r}\int  \underbrace{\left[\int p(x_{j}| x_r, r\oplus e_j)\log q(x_j |x_{r}, r) dx_j\right]}_{= (A)}  p(x_r, r\oplus e_j)dx_r.
\end{align*}

By definition of $\mathcal{Q}^*_{MOO}$ in equation \eqref{eq::MOO::opt::set}, 
any imputation model $q\in \mathcal{Q}^*_{MOO}$ satisfies
$$
q(x_j|x_r, r) = p(x_j|x_r, r\oplus e_j)
$$
so the quantity (A) in the above equality 
is maximized since it is the cross-entropy.
This holds for every $r$ and every $j\in \bar r$.
As a result, for any $q\in \mathcal{Q}^*_{MOO}$, $\bar \ell(q)$ is maximized, which completes the proof.
\end{proof}

\subsection{Proof of Theorem~\ref{thm::mcar}}
\begin{proof}
Under MCAR and use the fact that the true data are generated from $p(x)=f_{\theta^*}(x)$,
the imputation model under $f_{\theta^*}$ for pattern $(x_r,r)$ is
$f_{\theta^*}(x_{j}|x_r)$
for any $j \in \bar r$.

By the definition of MCAR, $R\perp X$, so 
\begin{align*}
p(x_j|x_r, R= r\oplus e_j)  = p(x_j|x_r)
 = f_{\theta^*}(x_j|x_r).
\end{align*}
So the imputation model $f_{\theta^*}\in\mathcal{Q}_{MOO^*}$.
Thus, by Theorem \ref{thm::MOO::ID}, $\bar \ell(f_{\theta^*}) = \sup_q \bar \ell(q)$.


\end{proof}

\subsection{Proof of Theorem~\ref{thm::BIC}}

%
%
%


\begin{proof}[Proof of Theorem~\ref{thm::BIC}]

Here is the overview of the proof. 
For models $k=1,\cdots, k^*-1$ (models of lower order), we will show that 
the BIC 
$$
 \ell_{n, BIC}(q_k) -  \ell_{n, BIC}(q_{k^*}) \approx -c \cdot n
$$
for some constant $c>0$.
So asymptotically, their BIC values will be lower than $q_{k^*}$. 

For models $k = k^*+1,\cdots, K$ (models of higher order),
since the optimal model is in $\mathcal{Q}_{k^*}$, 
all these models contain the optimal model. 
So the log-likelihood value of $ \ell_{n}(q_k)\approx\ell_{n}(q_{k^*})$
so the BIC value 
$$
\ell_{n, BIC}(q_k) - \ell_{n, BIC}(q_{k^*}) \approx -(d_{k} - d_{k^*}) \log n.
$$
Therefore, the BIC values are also lower than $q_{k^*}$.

{\bf Model of lower order ($k<k^*$).}
For model $\mathcal{Q}_k$, 
let $\theta^*_{[k]}$ be the its population MLE, i.e.,
$$
\theta^*_{[k]} = {\sf argmax}_{\theta_{[k]}\in\Theta_{[k]}} \E\left[ \ell_{n}(q_{\theta_{[k]}})\right]
$$
and recall that 
$$
\hat \theta_{[k]} = {\sf argmax}_{\theta_{[k]}\in\Theta_{[k]}}  \ell_{n}(q_{\theta_{[k]}})
$$
is the sample MLE such that $q_k = q_{\hat \theta_{[k]}}$.

Note that 
\begin{equation}
\begin{aligned}
\E\left[ \ell_{n}(q_{\theta_{[k]}})\right] & =n \cdot \E\left[ \sum_{j \in \bR_i} \log q_{\theta_{[k]}}(x_j = \bX_{ij}|x_{r} = \bX_{i, r}, r = \bR_i\ominus e_j)\right]
& = n \bar \ell(q_{\theta_{[k]}}).
\end{aligned}
\label{eq::BIC::L1}
\end{equation}
By condition (B1), the optimal model does not appear for this $k$ 
and by condition (B2), the optimal parameter is $\theta^*_{[k^*]}$
so we conclude
$$
\Delta_k \equiv \bar \ell(q_{\theta^*_{[k^*]}}) -\bar  \ell(q_{\theta^*_{[k]}}) >0 . 
$$
Putting this back to equation \eqref{eq::BIC::L1}, we conclude 
\begin{equation}
\E\left[ \ell_{n}(q_{\theta^*_{[k]}})\right] - \E\left[ \ell_{n}(q_{\theta^*_{[k^*]}})\right]  = -n \cdot \Delta_k.
\label{eq::BIC::L2}
\end{equation}
Use the fact that $q_k = q_{\hat \theta_{[k]}}$ is the model under the MLE,
we have the following results:
\begin{align*}
0 &\leq \ell_{n}(q_k) - \ell_{n}(q_{\theta^*_{[k]}})\\
&\leq \ell_{n}(q_k) -\E\left[ \ell_{n}(q_{\theta^*_{[k]}})\right] + {\left|\underbrace{\E\left[ \ell_{n}(q_{\theta^*_{[k]}})\right]}_{= n \bar \ell(q_{\theta^*_{[k]}})} - \ell_{n}(q_{\theta^*_{[k]}})\right|}\\
&\Rightarrow \ell_{n}(q_k)- \E\left[ \ell_{n}(q_{\theta^*_{[k]}})\right]  \geq -\left|n \bar \ell(q_{\theta^*_{[k]}}) - \ell_{n}(q_{\theta^*_{[k]}})\right|
\end{align*}
On the other hand, since $q_{\theta^*_{[k]}}$ is the MLE of $\E\left[ \ell_{n}(q_{\theta_{[k]}})\right] = n \bar \ell(q_{\theta_{[k]}}) $,
\begin{align*}
0& \leq  n \bar \ell(q^*_{\theta_{[k]}}) - n\bar \ell(q_k)\\
& \leq  n \bar \ell(q^*_{\theta_{[k]}}) -  \ell_{n}(q_k) +\left| \ell_{n}(q_k) - n\bar \ell(q_k)\right|\\
&\Rightarrow\ell_{n}(q_k)  - \E\left[ \ell_{n}(q_{\theta^*_{[k]}})\right]  \leq \left| \ell_{n}(q_k) - n\bar \ell(q_k)\right|. 
\end{align*}
The uniform bound in  Lemma \ref{lem::ERM} (assumption (AS) allows us to apply it
to every $k$) implies that 
$$
\sup_{\theta_{[k]}\in\Theta_{[k]}}\left| \ell_{n}(q_{\theta_{[k]}}) - n\bar \ell(q_{\theta_{[k]}})\right| = o_P(n),
$$
so we conclude
\begin{equation}
\left|\ell_{n}(q_k) - \E\left[ \ell_{n}(q_{\theta^*_{[k]}})\right]\right| 
 = o_P\left(n\right)
\label{eq::BIC::L3}
\end{equation}
for any $k=1,\cdots, K$. Note that the above bound is the standard empirical risk minimization bound. 

Combining equations \eqref{eq::BIC::L2} and \eqref{eq::BIC::L3}, 
we conclude that 
\begin{align*}
\ell_n(q_k) - \ell_n(q_{k^*})  
& = \E\left[ \ell_{n}(q_{\theta^*_{[k]}})\right] - \E\left[ \ell_{n}(q_{\theta^*_{[k^*]}})\right] + o_P\left(n\right)\\
& = -n \cdot \Delta_k + o_P(n).
\end{align*}
Recall that BIC is $\ell_{n ,BIC}(q) = \ell_n(q) - \frac{1}{2} d(q) \log n$, so 
we conclude
\begin{align*}
\ell_{n, BIC}(q_k) - \ell_{n, BIC}(q_{k^*})  
& = -n \cdot \Delta_k + o_P(n) + \frac{1}{2} (d_k -d_{k^*} ) \log n \\
& = -n \cdot \Delta_k + o_P(n).
\end{align*}
Therefore, 
$$
P(\ell_{n, BIC}(q_{k^*}) >\ell_{n, BIC}(q_k)) \rightarrow 1,
$$
so we will not choose any $k<k^*$ with a probability tending to $1$. 

{\bf Model of higher order ($k>k^*$).}
We will show that the fluctuation of $\ell_n(q_k) - \ell_n(q_{k^*})$ 
will be of the order $O_P(1)$.
So the penalty term in the BIC $(d_k - d_{k^*})\log n$ will eventually dominate. 
Since the model is nested, by condition (B2) we have 
\begin{equation}
 \ell_n( q_{\theta^*_{[k]}}) =  \ell_n( q_{\theta^*_{[k^*]}})
 \label{eq::BIC::U1}
\end{equation}
for all $k>k^*$.

Using the fact that the MLE $\hat \theta_{[k]}$
solves the score equation 
$$
\nabla_\theta\ell_n(\hat \theta_{[k]}) = 0, 
$$
we can perform a Taylor expansion:
\begin{align*}
\ell_n(q_{ \theta^*_{[k]}}) - \ell_n(q_{\hat \theta_{[k]}}) 
& = (\theta^*_{[k]} - \hat \theta_{[k]})^T \nabla_\theta\nabla_\theta \ell_n(q_{\hat \theta_{[k]}}) (\theta^*_{[k]} - \hat \theta_{[k]})  +o_P(n \|\theta^*_{[k]} - \hat \theta_{[k]}\|^2)\\
& = \underbrace{\sqrt{n} (\theta^*_{[k]} - \hat \theta_{[k]})^T}_{=u_n^T} \underbrace{\nabla_\theta\nabla_\theta \frac{1}{n}\ell_n(q_{\hat \theta_{[k]}})}_{=\Omega_n} \sqrt{n}(\theta^*_{[k]} - \hat \theta_{[k]}) +o_P(n \|\theta^*_{[k]} - \hat \theta_{[k]}\|^2).
\end{align*}

The above result shows an asymptotic quadratic form of $u_n^T \Omega_n u_n^T$.
The quantity $\Omega_n$ will converges to a fixed matrix based on assumptions (A2) and (A3). 
The vector $u_n$ has asymptotic normality by Theorem~\ref{thm::MLE}. 
Therefore, we conclude that 
$$
\ell_n(q_{ \theta^*_{[k]}}) - \ell_n(q_{\hat \theta_{[k]}}) = O_P(1).
$$
Since this holds for every $k$, using the fact that $q_k = q_{\hat \theta_{[k]}}$, we conclude that 
\begin{align*}
\ell_n(q_k) - \ell_n(q_{k^*})  & =  \ell_n(q_{\hat \theta_{[k]}})  - \ell_n(q_{\hat \theta_{[k^*]}}) \\
& = \ell_n(q_{ \theta^*_{[k]}}) - \ell_n(q_{ \theta^*_{[k^*]}}) +O_P(1)\\
& = O_P(1).
\end{align*}
Thus, the BIC values will be 
\begin{align*}
\ell_{n,BIC}(q_k) - \ell_{n, BIC}(q_{k^*}) & = \ell_n(q_k) - \ell_n(q_{k^*}) -\frac{1}{2}(d_k-d_{k^*})\log n\\
& =  O_P(1) -\frac{1}{2}(d_k-d_{k^*})\log n.
\end{align*}
Thus, 
$$
P(\ell_{n,BIC}(q_k) - \ell_{n, BIC}(q_{k^*})<0)\rightarrow 1. 
$$
So we conclude that the chance of selecting $k^*$ is approaching $1$.

\end{proof}

\subsection{Proof of Proposition \ref{prop::acmv}}
\begin{proof}

The proof is immediate.
For the case of NCMV, 
the set $\mathcal{Q}^*_{MOOLC}$ in equation \eqref{eq::MOOLC::opt::set}
only has the constraint
$$
q(x_{t+1}|x_{\leq t}, t) = p(x_{t+1}|x_{\leq t},T=t+1) 
$$
for each $t$.
Clearly, the NCMV in equation \eqref{eq::acmv} satisfies this constraint with $\tau=t$. 
So $q_{NCMV} \in \mathcal{Q}^*_{MOOLC}$. 

For the case of ACMV, similarly the set $\mathcal{Q}^*_{MOOBL}$ in equation \eqref{eq::MOOBL::opt::set}
has the constraint 
$$
q(x_{t+1}|x_{\leq t}, t) = p(x_{t+1}|x_{\leq t},T\geq t+1)
$$
and the ACMV in equation \eqref{eq::acmv} satisfies it with $\tau = t$.
Thus, $q_{ACMV}\in \mathcal{Q}^*_{MOOBL}$,
which completes the proof.

\end{proof}

\subsection{Proof of Theorem~\ref{thm::mko}}

\begin{proof}
Similar to the proof of Theorem~\ref{thm::moo}, we first decompose the population risk:
\begin{align*}
\mathcal{E}_K(q) = \E\{L_K(q| \bX_{1,\bR_1},\bR_1)\} &= \sum_{r:r\neq 1_d}\int L_K(q|x_{r} , r) p(x_r,r)dx_r,\\
L_K(q|x_{r} , r) & = \sum_{\ell \in J_K(r)}\sum_{j \in \ell} \int L(x_j, x_j') q(x'_j|x_{r\ominus \ell},R=r\ominus \ell)dx'_j.
\end{align*}
Using the fact that for imputation model $q(x'_j|x_{r\ominus \ell},R=r\ominus \ell)$,
the only relevant variables in $x_r$ are $x_j$ and $x_{r\ominus \ell}$,
we can further decompose it as 
\begin{equation}
\begin{aligned}
\mathcal{E}(q) & = 
 \sum_{r:r\neq 1_d}\int\sum_{\ell \in J_K(r)}\sum_{j \in \ell} L(x_j, x_j') q(x'_j|x_{r\ominus \ell},R=r\ominus \ell)dx'_j p(x_r,r)dx_r,\\
&=  \sum_{r:r\neq 1_d} \sum_{\ell \in J_K(r)}\sum_{j \in \ell} \int L(x_j, x_j') q(x'_j|x_{r\ominus \ell},R=r\ominus \ell)dx'_j p(x_j, x_{r\ominus \ell},r)dx_j dx_{r\ominus \ell},\\
&=  \sum_{r:r\neq 1_d} \sum_{\ell \in J_K(r)}\sum_{j \in \ell} \int L(x_j, x_j') p(x_j, x_{r\ominus \ell},r) q(x'_j|x_{r\ominus \ell},R=r\ominus \ell)dx'_j dx_j dx_{r\ominus \ell}.
\end{aligned}
\label{eq::mko::pf1}
\end{equation}
The above summation starts with each observed pattern $r$ and then sums over all possible $K$-masking scenario
and over all possible variable $j$ that can be masked.
The pattern index for the imputation model is $r\ominus \ell$, which is changing from one masked variable to the other, making
the analysis complicated.

To analyze the imputation model,
we use the reparameterization technique in Lemma \ref{lem::repar} and Section \ref{sec::sep}
that changes the individual-view (summation over $i$ first)
to the imputation model-view (summation over $r$ first).
We consider
the imputation model $q(x_j|x_s, s)$ and  find out all possible cases in the above summation
such that this imputation model is used. 
For response pattern $R = s\oplus e_j$, this case will be used when we mask  $x_j$.
For $R = s\oplus e_j \oplus e_k$ with $k\notin s, k\neq j$, this case will be included when 
we mask two variables $x_j,x_k$. 
By induction, one can clearly see that 
the set
$$
\mathbb{U}_K(s, j)  = \{r\in\{0,1\}^d: r \geq s\oplus e_j, |r-s|\leq K\},
$$
contains all possible patterns that will use imputation model $q(x_j|x_s, s)$
during the mask-K-out process. 

As a result, 
equation \eqref{eq::mko::pf1} can be written as 
\begin{align*}
\mathcal{E}(q)&=  \sum_{r:r\neq 1_d} \sum_{\ell \in J_K(r)}\sum_{j \in \ell} \int L(x_j, x_j') p(x_j, x_{r\ominus \ell},r) q(x'_j|x_{r\ominus \ell},R=r\ominus \ell)dx'_j dx_j dx_{r\ominus \ell}\\
& = \sum_s\sum_{j \in  \bar s} \sum_{\omega\in \mathbb{U}_K(s, j)} \int L(x_j, x_j') p(x_j, x_{s},R=\omega) q(x'_j|x_{s},R=s)dx'_j dx_j dx_{s}\\
& = \sum_s\sum_{j \in  \bar s}  \int L(x_j, x_j') \left[\sum_{\omega\in \mathbb{U}_K(s, j)} p(x_j, x_{s},R=\omega)\right] q(x'_j|x_{s},R=s)dx'_j dx_j dx_{s}\\
& = \sum_s\sum_{j \in \bar s}  \int L(x_j, x_j') p(x_j, x_{s},R\in \mathbb{U}_K(s, j)) q(x'_j|x_{s},R=s)dx'_j dx_j dx_{s}\\
& = \sum_s\sum_{j \in  \bar s}  \underbrace{\left\{\int L(x_j, x_j') p(x_j| x_{s},R\in \mathbb{U}_K(s, j)) q(x'_j|x_{s},R=s)dx'_j dx_j\right\}}_{= \mathbf{L}_s(q)} p( x_{s},R\in \mathbb{U}_K(s, j))dx_{s}
\end{align*}
and clearly, 
if we choose $q$ so that it always impute 
$$
\hat x^*_j = {\sf argmin}_{\theta} \int L( x_j,\theta ) p(x_j|x_{s}, R\in \mathbb{U}_K(s, j))dx_j,
$$
then this imputation value minimizes $\mathbf{L}_s(q)$, so it minimizes $\mathcal{E}(q)$.

%
%
%
%
\end{proof}

\subsection{Proof of Theorem~\ref{thm::MOORT::v}}

\begin{proof}

By the same argument as the proof of Theorem~\ref{thm::MOORT},
$$
\hat{\mathcal{R}}_j(q)\overset{P}{\rightarrow} 0
$$
for each $j$ when $q\in  \mathcal{Q}^*_{MOO}$.
Thus, the summation 
$$
\bar{\mathcal{R}}(q) = \sum_{j=1}^d \hat{\mathcal{R}}_j(q) \overset{P}{\rightarrow}0 .
$$

\end{proof}

\section{Additional details for the simulation studies} \label{sec::appendix_sim}
In the simulation studies, the availability of the complete dataset enables external validation, as the ground truth for missing values is known. We use this characteristic to compute an \textit{oracle imputation risk} evaluated on the truly missing values, which serves as a reference measure of genuine imputation quality. We assess whether the rankings by the masking criteria correspond to the rankings based on this oracle benchmark. Missing values are introduced under missing completely at random (MCAR) and missing-at-random (MAR). The MCAR mechanism, as described in the main text, independently masks each entry with a fixed probability 0.3. To simulate the MAR mechanism, we use the ``ampute'' function in the \textit{mice} package in R with a 30\% missingness proportion and default weighting scheme. 
From Table~\ref{tab:oracle}, the masking criteria produce rankings that are similar to the oracle risks across datasets under both MCAR and MAR. This rank concordance indicates that performance on masked entries provides a reliable proxy for evaluating true imputation quality. This is essential because, in real-world missing data problems, the oracle imputation risks cannot be computed. The concordance under both MCAR and MAR further establishes the robustness of the masking criteria, which provide stable rankings of imputation models across different missingness mechanisms.

\begin{table}[htbp!]
\centering
\caption{Ranks of imputation methods under masking criteria and corresponding oracle imputation risks across the iris, yacht, and concrete data for (a) MCAR and (b) MAR missingness mechanisms. Smaller rank indicates small risk.}
\label{tab:oracle}

\text{(a) MCAR}\\
\begin{tabular}{@{}llcccccc@{}}
\toprule
& & \multicolumn{2}{c}{iris} & \multicolumn{2}{c}{yacht} & \multicolumn{2}{c}{concrete} \\
\cmidrule(lr){3-4}\cmidrule(lr){5-6}\cmidrule(lr){7-8}
Method & Criterion & Mask & Oracle & Mask  & Oracle & Mask  & Oracle \\
\midrule
\multirow{3}{*}{Mean} 
 & MOO & 6 & 6 & 2 & 2 & 1 & 1\\
 & MOORT & 7 & 7 & 7 & 7 & 7 & 7\\
 & MOOEN & 7 & 7 & 7 & 7 & 7 & 7\\
\midrule
\multirow{3}{*}{EM} 
 & MOO & 2 & 2 & 4 & 4 & 3 & 3\\
 & MOORT & 1 & 2 & 1 & 2 & 1 & 1\\
 & MOOEN & 2 & 2 & 3 & 3 & 2 & 2\\
\midrule
\multirow{3}{*}{Nearest Neighbor Hot Deck} 
 & MOO & 5 & 5 & 6 & 6 & 6 & 6\\
 & MOORT & 6 & 6 & 5 & 5 & 6 & 5\\
 & MOOEN & 5 & 5 & 5 & 5 & 3 & 3\\
\midrule
\multirow{3}{*}{Random Hot Deck} 
 & MOO & 7 & 7 & 7 & 7 & 7 & 7\\
 & MOORT & 2 & 1 & 4 & 4 & 4 & 4\\
 & MOOEN & 6 & 6 & 6 & 6 & 6 & 6\\
\midrule
\multirow{3}{*}{MMG} 
 & MOO & 4 & 4 & 3 & 3 & 5 & 5\\
 & MOORT & 3 & 3 & 2 & 1 & 2 & 2\\
 & MOOEN & 3 & 3 & 2 & 2 & 4 & 4\\
\midrule
\multirow{3}{*}{CCMV} 
 & MOO & 3 & 3 & 5 & 5 & 4 & 4\\
 & MOORT & 4 & 4 & 3 & 3 & 3 & 3\\
 & MOOEN & 4 & 4 & 4 & 4 & 5 & 5\\
\midrule
\multirow{3}{*}{MICE} 
 & MOO & 1 & 1 & 1 & 1 & 2 & 2\\
 & MOORT & 5 & 5 & 6 & 6 & 5 & 6\\
 & MOOEN & 1 & 1 & 1 & 1 & 1 & 1\\
\bottomrule
\end{tabular}

\vspace{1em} 

\text{(b) MAR}\\
\begin{tabular}{@{}llcccccc@{}}
\toprule
& & \multicolumn{2}{c}{iris} & \multicolumn{2}{c}{yacht} & \multicolumn{2}{c}{concrete} \\
\cmidrule(lr){3-4}\cmidrule(lr){5-6}\cmidrule(lr){7-8}
Method & Criterion & Mask & Oracle & Mask  & Oracle & Mask  & Oracle \\
\midrule
\multirow{3}{*}{Mean} 
& MOO & 6 & 6 & 6 & 6 & 6 & 6\\
& MOORT & 7 & 7 & 7 & 7 & 7 & 7\\
& MOOEN & 7 & 7 & 7 & 7 & 7 & 7\\
\midrule
\multirow{3}{*}{EM} 
& MOO & 3 & 1 & 3 & 2 & 3 & 3\\
& MOORT & 1 & 1 & 2 & 2 & 1 & 1\\
& MOOEN & 2 & 1 & 3 & 3 & 3 & 3\\
\midrule
\multirow{3}{*}{Nearest Neighbor Hot Deck} 
& MOO & 1 & 5 & 5 & 5 & 4 & 5\\
 & MOORT & 6 & 6 & 5 & 5 & 6 & 6\\
 & MOOEN & 5 & 5 & 5 & 5 & 2 & 2\\
\midrule
\multirow{3}{*}{Random Hot Deck} 
& MOO & 7 & 7 & 7 & 7 & 7 & 7\\
 & MOORT & 4 & 5 & 4 & 4 & 4 & 4\\
 & MOOEN & 6 & 6 & 6 & 6 & 6 & 6\\
\midrule
\multirow{3}{*}{MMG}
& MOO & 5 & 4 & 2 & 3 & 5 & 4\\
 & MOORT & 2 & 2 & 1 & 1 & 2 & 3\\
 & MOOEN & 4 & 4 & 2 & 2 & 5 & 5\\
\midrule
\multirow{3}{*}{CCMV} 
& MOO & 4 & 2 & 4 & 4 & 2 & 2\\
 & MOORT & 3 & 3 & 3 & 3 & 3 & 2\\
 & MOOEN & 3 & 3 & 4 & 4 & 4 & 4\\
\midrule
\multirow{3}{*}{MICE} 
& MOO & 2 & 3 & 1 & 1 & 1 & 1\\
 & MOORT & 5 & 4 & 6 & 6 & 5 & 5\\
 & MOOEN & 1 & 2 & 1 & 1 & 1 & 1\\
\bottomrule
\end{tabular}
\end{table}

Figure~\ref{fig:PI_sim_appendix} extends Figure~\ref{fig:PI_sim} by including the random hot-deck method (randomly impute from observed entries of the same variable, regardless of other information; \citealt{little2019statistical}) and illustrates a scenario where MOOEN can be preferred to MOORT. The corresponding numerical results are reported in Table~\ref{tab:appendix_simulation_res}. Although the random hot deck method selects donors entirely at random, 
it appears to perform well under MOORT. This behavior can be intuitively explained by the fact that random draws preserve the empirical distribution of the observed data. The method essentially shuffles the observed values and thus results in a small Kolmogorov distance to the uniform distribution. In contrast, MOOEN serves as a more reliable criterion when such an imputation method is included in the comparison.

\begin{figure}[htbp!]
    \centering
    \includegraphics[width=\linewidth]{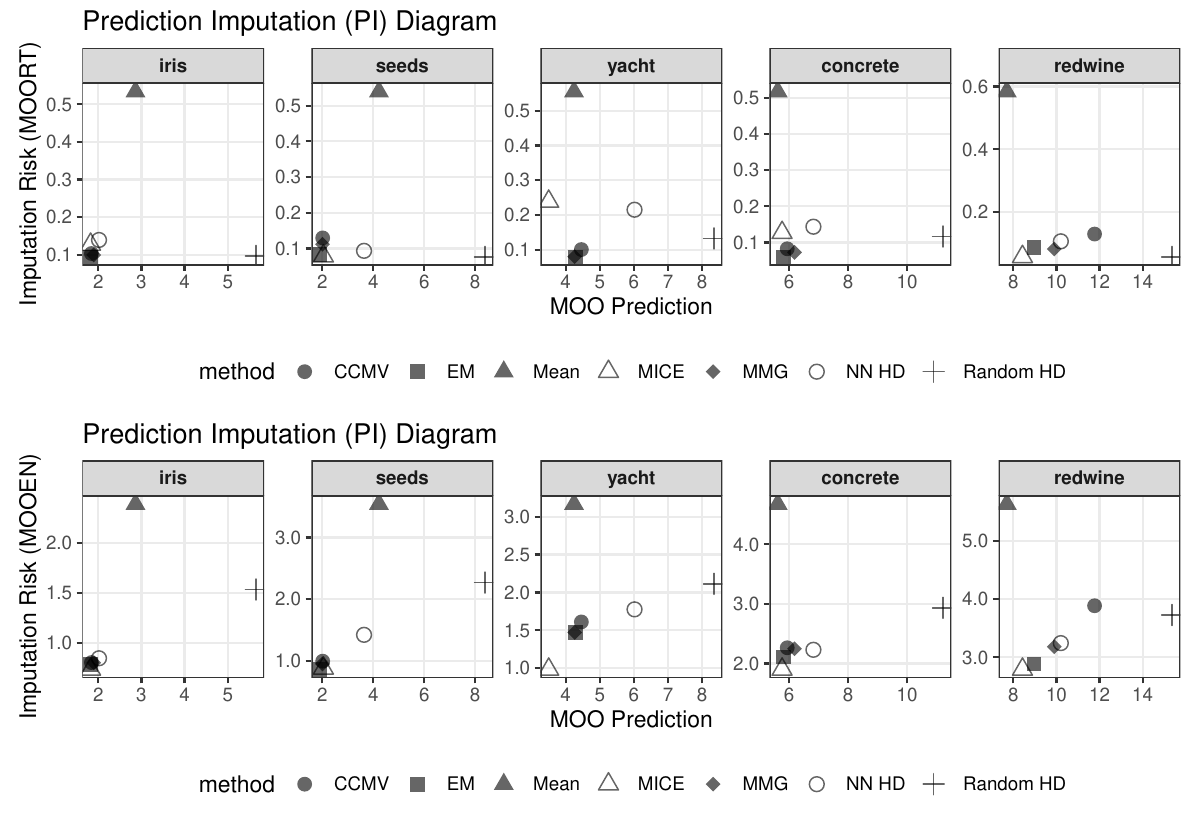}
    \caption{Prediction-Imputation (PI) Diagram comparing imputation methods (CCMV, EM, mean imputation, MICE, MMG, nearest-neighbor hot deck, and random hot deck) under MOO, MOORT, and MOOEN criteria across multiple datasets.}
    \label{fig:PI_sim_appendix}
\end{figure}

\begin{table}[htbp!]
\centering
\caption{Imputation risks across five datasets under the masking criteria.}
\label{tab:appendix_simulation_res}
\begin{tabular}{@{}clccccccc@{}}
\toprule
\text{Dataset} & \text{Criterion} & \text{Mean} & \text{EM} & \text{NN HD} & \text{Random HD} & \text{MMG} & \text{CCMV} & \text{MICE} \\
\midrule
\multirow{3}{*}{\text{iris}} & MOO & 2.862 & 1.827 & 2.017 & 5.654 & 1.894 & 1.842 & 1.823 \\
& MOORT & 0.534 & 0.094 & 0.139 & 0.097 & 0.099 & 0.103 & 0.127 \\
& MOOEN & 2.391 & 0.784 & 0.849 & 1.537 & 0.804  & 0.807 & 0.738 \\
\midrule
\multirow{3}{*}{\text{seeds}} & MOO & 4.223 & 1.895 & 3.327 & 8.386 & 2.005 & 2.013 & 2.042 \\
& MOORT & 0.539 & 0.082 & 0.093 & 0.076 & 0.112 & 0.129 & 0.078 \\
& MOOEN & 3.543 & 0.861 & 1.422 & 2.270 & 0.951 & 0.995 & 0.879 \\
\midrule
\multirow{3}{*}{\text{yacht}} & MOO & 4.244 & 4.282 & 6.019 & 8.352 & 4.263 & 4.460 & 3.502 \\
& MOORT & 0.555 & 0.079 & 0.215 & 0.133 & 0.080 & 0.101 & 0.239 \\
& MOOEN & 3.167 & 1.476 & 1.773 & 2.116 & 1.471 & 1.610 & 0.986 \\
\midrule
\multirow{3}{*}{\text{concrete}} & MOO & 5.617 & 5.821 & 6.825 & 11.22 & 6.184 & 5.937 & 5.764 \\
& MOORT & 0.517 & 0.060 & 0.144 & 0.117 & 0.073 & 0.083 & 0.127 \\
& MOOEN & 4.676 & 2.109 & 2.231 & 2.936 & 2.249 & 2.265 & 1.902 \\
\midrule
\multirow{3}{*}{\text{redwine}} & MOO & 7.695 & 8.955 & 10.20 & 15.37 & 9.898 & 11.77 & 8.424 \\
& MOORT & 0.584 & 0.087 & 0.107 & 0.058 & 0.083 & 0.130 & 0.058 \\
& MOOEN & 5.639 & 2.886 & 3.244 & 3.729 & 3.183 & 3.888 & 2.791 \\
\bottomrule
\end{tabular}
\end{table}


\end{appendix}

\end{document}